\newcommand*{\Tr}{\operatorname{Tr}}
\DeclarePairedDelimiterX\ketbra[2]{\lvert}{\rvert}{#1 \delimsize\rangle\delimsize\langle #2}
\newcommand{\SQ}[1]{\textup{SQ}(#1)}
\providecommand{\myvec}[1]{\ensuremath{\boldsymbol{#1}}}
\providecommand{\ll}{\ensuremath{\myvec{l}}}
\providecommand{\calU}{\ensuremath{\mathcal{U}}}
\newcommand{\poly}{\textup{poly}}
\renewcommand{\exp}{\textup{exp}}
\mathchardef\mhyphen="2D % Define a "math hyphen"
\DeclarePairedDelimiter\norm{\lVert}{\rVert}
\newcommand{\fnorm}[1]{\norm{#1}_{\textup F}}
\newtheorem{theorem}{Theorem}[section]
\newtheorem{definition}[theorem]{Definition}
\newtheorem{proposition}[theorem]{Proposition}
\newtheorem{corollary}[theorem]{Corollary}
\newtheorem{lemma}[theorem]{Lemma}
\newtheorem{remark}[theorem]{Remark}
\newtheorem*{remark*}{Remark}
\setlist[itemize]{itemsep=1pt, topsep=2pt}
\setlist[enumerate]{itemsep=1pt, topsep=2pt}
\newcommand{\cfont}{\mathsf}
\newcommand{\MCSV}{\cfont{MCSV}}
\newcommand{\CSV}{\cfont{CSV}}
\newcommand{\CSVhkp}{\cfont{CSV}_{\cfont{HKP}}}
\newcommand{\CSVgc}{\cfont{CSV}_{\cfont{GC}}}
\newcommand{\CSVmyz}{\cfont{CSV}_{\cfont{MYZ}}}
\newcommand{\SampleCSV}{\cfont{SampleCSV}}
\newcommand{\QMA}{\cfont{QMA}}
\newcommand{\QCMA}{\cfont{QCMA}}
\newcommand{\LH}{\cfont{LH}}
\newcommand{\CLDM}{\cfont{CLDM}}
\newcommand{\iDLH}{1D\mhyphen\LH}
\newcommand{\iDCLDM}{1D\mhyphen\CLDM}
\newcommand{\SuperQMA}{\cfont{SuperQMA}}
\newcommand{\SuperQMAPoly}{\cfont{SuperQMA}_{\textup{poly}}}
\newcommand{\SuperQMAExp}{\cfont{SuperQMA}_{\textup{exp}}}
\newcommand{\qc}{\cfont{qc\text{-}\Sigma_2}}
\newcommand{\lmin}{\lambda_{\textup{min}}}
\newcommand{\density}{\mathscr{D}}
\newcommand{\Exp}{\mathbb E}
\newcommand{\ObsCon}{\cfont{ObsCon}}
\newcommand{\ObsConPoly}{\cfont{ObsCon}_{\textup{poly}}}
\newcommand{\ObsConExp}{\cfont{ObsCon}_{\textup{exp}}}
\newcommand{\QMAt}{\cfont{QMA(2)}}
\newcommand{\Pqc}{\cfont{\qc(2)}}
\newcommand{\ProductObsCon}{\cfont{ProdObsCon}}
\newcommand{\PSPACE}{\cfont{PSPACE}}
\newcommand{\QS}{\cfont{Q\Sigma_2}}
\newcommand{\QMAplus}{\cfont{QMA^+}}
\newcommand{\PSQMA}{\cfont{\SuperQMA(2)}}
\newcommand{\MTObsCon}{\cfont{BLOC}}
\newcommand{\SymQMA}{\cfont{SymQMA(k)}}
\newcommand{\Cl}{\mathrm{Cl}}
\newcommand{\E}{\mathbb{E}}
\newcommand{\F}{\mathbb{F}}
\newcommand{\C}{\mathbb{C}}
\newcommand{\SimSnap}{\ensuremath{\mathsf{Sim}^{\mathrm{snap}}_{\widetilde V^{(s)}}}}
\newcommand{\negl}{\cfont{negl}}
\newcommand{\SimInt}{\ensuremath{\mathsf{Sim}^{\mathrm{Int}}_{\widetilde V^{(s)}}}}
\newcommand{\SimHNN}{\ensuremath{\mathsf{Sim}^{\mathrm{HNN}}_{\widetilde V^{(s)}}}}
\DeclarePairedDelimiter\abs{\lvert}{\rvert}%
\title{How hard is it to verify a classical shadow?}
\author{Georgios Karaiskos\footnotemark[1]\and Dorian Rudolph\footnotemark[2]\and Johannes Jakob Meyer\footnotemark[3]\and Jens Eisert\footnotemark[4]\and Sevag Gharibian\footnotemark[2]}
\date{}
\begin{document}
\begingroup
\renewcommand\thefootnote{\fnsymbol{footnote}}
\maketitle
\footnotetext[1]{Department of Computer Science, Paderborn University, Germany. Email: georgios.karaiskos@upb.de.}
\footnotetext[2]{Department of Computer Science and Institute for Photonic Quantum Systems (PhoQS), Paderborn University, Germany. Email: \{dorian.rudolph, sevag.gharibian\}@upb.de.}
\footnotetext[3]{Freie Universität Berlin, Germany. Email: jj.meyer@outlook.com.}
\footnotetext[4]{Freie Universität Berlin, Germany. Email: jense@zedat.fu-berlin.de.}
\endgroup

\begin{abstract}
    Classical shadows are succinct classical representations of quantum states which allow one to encode a set of properties $P$ of a quantum state $\rho$, while only requiring measurements on logarithmically many copies of $\rho$ in the size of $P$. In this work, we initiate the study of verification of classical shadows, denoted classical shadow validity (CSV), from the perspective of computational complexity,  which asks: Given a classical shadow $S$, how hard is it to verify that $S$ predicts the measurement statistics of a quantum state? We first show that even for the elegantly simple classical shadow protocol of [Huang, Kueng, Preskill, Nature Physics 2020] utilizing local Clifford measurements, CSV is QMA-complete. This hardness continues to hold for the high-dimensional extension of said protocol due to [Mao, Yi, and Zhu, PRL 2025]. In contrast, we show that for the HKP and MYZ protocols utilizing global Clifford measurements, CSV can be ``dequantized'' for low-Frobenius norm observables, i.e., solved in randomized poly-time with standard sampling assumptions. Finally, we show that CSV for exponentially many observables is complete for a quantum generalization of the second level of the polynomial hierarchy, yielding the first natural complete problem for such a class. 
\end{abstract}

\section{Introduction}\label{scn:intro}

Fully classically describing a quantum state $\rho$ has long been known to require exponential overhead, making characterizing the outputs of quantum devices a challenging task. Indeed, for a state $\rho$ on $n$ qubits, i.e., of dimension $D=2^n$, a sample complexity of $\Theta(D^2)$ copies of $\rho$ are known to be necessary and sufficient for full quantum state tomography~\cite{OW15,haahSampleoptimalTomographyQuantum2016}. In general, however, one is not necessarily interested in learning \emph{everything} about $\rho$, but only a specific set $P$ of properties. Formally, we may model these as a set of $M$ measurement operators $P=\set{P_i}_{i=1}^M$, where one is interested in computing $\Tr(\rho P_i)$. The natural question is now: \emph{Can one avoid full state tomography in this case?}

In 2018, Aaronson showed~\cite{aaronsonShadowTomographyQuantum2018} the answer is \emph{yes}: for set $P$ of $2$-outcome measurements, given $k$ copies of $\rho$, one can produce estimates $b_1,\ldots, b_M\in[0,1]$ such that with probability at least $1-\delta$, one has $\abs{\Tr(P_i\rho)-b_i}\leq \epsilon$ \emph{for all} $i$. The magic here is that the \emph{sample complexity}, $k$, can be chosen polylogarithmic in the dimension $D$ and number of measurements $M$, i.e., 
\begin{equation}
    k\in O\left( \log \frac{1}{\delta} \cdot \epsilon^{-5} \cdot \log^4 M \cdot \log D \right).
\end{equation}
While this original protocol was not yet \emph{time} efficient, it did not take long for the latter to be rectified, e.g., Brand$\tilde{\textup{a}}$o, Kalev, Li, Lin, Svore, Wu~\cite{brandaoQuantumSDPSolvers2019}.
Indeed, soon after \emph{Huang, Kueng and Preskill} (HKP) discovered~\cite{HKP20} a remarkably simple and efficient classical shadow tomography procedure, which for a given $P$, randomly samples unitary $U$ from an ``appropriate'' ensemble $\cal U$ of unitaries, and measures $U\rho U^\dagger$ in the standard basis. Roughly, the resulting string can be thought of as a ``snapshot'' of $\rho$, and the set of all snapshots constitutes the classical shadow, $S$. A recovery procedure via median-of-means is then specified, so that given $S$, one can recover estimates for $\Tr(\rho M_i)$. For general $P$ and $\calU$, the procedure has sample complexity
\begin{equation}\label{eqn:shadowbound}
    O \left( \frac{\log(M)}{\epsilon^2} \max_{1 \leq i \leq M} \left\| P_i - \frac{\Tr(P_i)}{2^n} I \right\|_{\text{shadow}}^2 \right),
\end{equation}  
where the shadow norm depends on $P$ and $\calU$ (see \Cref{scn:preliminaries} for details on the HKP protocol.) For $\calU$ the set of global Cliffords and the set of $k$-local Cliffords, the shadow norm in \Cref{eqn:shadowbound} is at most $3\Tr(P_i^2)$ and $4^k \|P_i\|_{\infty}^2$, respectively. Thus, for example, to predict measurement results for $P$ the set of $k$-local Pauli strings\footnote{A Pauli string $Q$ is an element of $\set{I,X,Y,Z}^n$. We say $Q$ is $k$-local if it contains at most $k$ non-identity terms.}, one obtains a sample \emph{and} time efficient\footnote{Since $k$-local Clifford measurements $\calU$ are easy to implement.} 
protocol, which requires only $\log(M)$ copies, and as a bonus needs only measure a single copy at a time. 

\paragraph{Verifying classical shadows.} This work initiates the study of the natural question: 
\begin{quote}
    \emph{Given as input a ``classical shadow'' $S$, what is the complexity of verifying that $S$ actually ``predicts'' the measurement statistics of some $\rho$ against $P$?}
\end{quote}
As stated, this question is ill-posed, in the sense that we are not aware of a formal definition of a ``classical shadow'' in the literature. Thus, to remedy this, we first provide a general formal definition:

\begin{restatable}{definition}{defshadow}(Classical shadow)\label{def:classical-shadow}
    A shadow on $n$ qubits is a $4$-tuple $(S,O,A, \chi)$, where 
    \begin{itemize}
        \item (Shadow) $S=\set{s_i}_{i=1}^N$ is a multi-set of $\poly(n)$-bit strings, with $N=\poly(n)$,
        \item (Observables) $O=\set{O_i}_{i=1}^m$ is a set of $n$-qubit observables satisfying $\|O_i\|_{\infty}\le1$, where $1\leq m\leq 2^{p(n)}$ for polynomial $p$. Given index $i$, a $\poly(n)$-bit description of $O_i$ can be produced in $\poly(n)$-time\footnote{This is the succinct access assumption.}. Moreover, there exists a $\poly(n)$-time quantum algorithm which, for any $O_i$ and any $n$-qubit state $\rho$, applies\footnote{Formally, we can efficiently measure in the eigenbasis of $O_i$, and return the eigenvalue corresponding to the measurement result.} measurement $O_i$ to $\rho$.
        \item (Recovery algorithm) $A$ is a $\poly(n)$-time classical algorithm which, given $S$ and $i\in[m]$, produces real number $A(S,i)\in[-1,1]$ within $\chi$ bits of precision.
    \end{itemize}    
\end{restatable}
\noindent This definition says nothing about prediction accuracy; it simply formalizes the idea that a ``classical shadow'' is a multi-set of strings $S$, \emph{in principle} obtained via some set of efficient measurements on copies of a physical state $\rho$, coupled with a set of target observables $O$ and an efficient recovery procedure $A$ for ``extracting predictions''. An alternate possible definition might be not to give shadow $S$ as a fixed sequence of strings, but rather to generate $S$ on-the-fly by sampling from some unknown distribution (thus capturing the idea of measurement bases $\calU$ as in HKP). To capture this, we also define a ``sampled classical shadow'' in \Cref{scn:variant}, and show the complexity of verifying ``classical shadows'' (\Cref{def:classical-shadow}) versus ``sampled classical shadows'' (\Cref{def:sampled}) is equivalent under randomized reductions. For simplicity, we thus work with \Cref{def:classical-shadow}, as its input model is the standard one used in (e.g.) BQP and QMA.

Moving on, the task of checking the \emph{validity} of a shadow, i.e., that the outputs of $A$ correctly predict measurement statistics, is formalized as:

\begin{restatable}{definition}{defCSV}(Classical Shadow Validity ($\CSV$))\label{def:CSV}
Given classical shadow $(S,O,A,\chi)$, parameters $\alpha$ and $\beta$ satisfying $\beta-\alpha\geq 1/\poly(n)$, decide between the following two cases:
\begin{itemize}
    \item \textbf{Yes}: $\exists$  $n$-qubit state $\rho$ s.t.\ $\forall$ $i\in [m]$, 
    $\left| \Tr\left(O_i\rho\right)-A(S,i)\right|\leq \alpha$.
    \item \textbf{No}: $\forall$ $n$-qubit states $\rho$ $\exists$ some $i\in [m]$ s.t.\ $\left| \Tr \left(O_i\rho\right)-A(S,i)\right|\geq \beta$.
\end{itemize}
Since $A(S,i)\in[-1,1]$ and $\|O_i\|_{\infty}\le1$, it is natural to assume $0\le\alpha<\beta\le2$.
\end{restatable}
\noindent The theme of this work is to characterize the complexity of this problem and its variants, including for the HKP protocol with local Clifford measurements, ``dequantization'' results for global Clifford measurements, and the case of exponentially many observables. 

\paragraph{Comparison to and distinction from CONSISTENCY problem.} Before proceeding, the reader familiar with quantum complexity theory may notice that, at least in the setting of \emph{polynomially} many observables $O_i$, CSV is eerily similar to the QMA-complete CONSISTENCY problem of Liu~\cite{liuConsistencyLocalDensity2006}. In the latter, the input is a set of $k$-local reduced states $\rho_i$ acting on a subset $S_i$ of $k\in O(1)$ out of $n$ qubits each, and the question is whether there exists an $n$-qubit state $\rho$ such that for all $i$, $\Tr_{[n]\setminus S_i}(\rho)\approx \rho_i$? Indeed, as our definition of classical shadows is intentionally very general, it includes as a special case the CONSISTENCY problem. From this, one immediately obtains that CSV is at least QMA-hard (\Cref{cor:CSVpoly}). This is \emph{not} the point of this paper! 

The point is that classical shadow protocols used in practice typically do \emph{not} produce local density operators as in CONSISTENCY, but rather highly non-local snapshots (e.g. $n$-local operators which are the tensor product of $n$ non-trivial single qubit states)! Our goal is thus to characterize the complexity of CSV for precisely these experimentally relevant snapshots, to which the QMA-hardness of CONSISTENCY does not obviously apply. Indeed, as will be discussed shortly, a reduction from CONSISTENCY to CSV will have to overcome the well-known challenging problem of how to construct a \emph{global} quantum snapshot from \emph{local} overlapping reduced density operators (as in CONSISTENCY). 

\paragraph{Motivation and application to near-term devices.} (1) As near-term experimental devices remain noisy, verifying that a device \emph{actually} outputs the state intended remains a major challenge. Well-known examples include verification of quantum advantage experiments such as Random Circuit Sampling~\cite{boixoCharacterizingQuantumSupremacy2018} or Boson Sampling~\cite{aaronsonComputationalComplexityLinear2011,hamiltonGaussianBosonSampling2017}. In the same vein, it is arguably important to verify the validity of snapshots output by classical shadow experiments; this is true even if the experimenter fully trusts that the device has not been tampered with. More generally, in the distributed cloud setting where the user does \emph{not} trust the device, CSV becomes yet more crucial. % even if the device itself is perfect/noise-free.

(2) The study of $\CSV$ is important to the study of classes $\QMA$ versus $\cfont{QCMA}$ (i.e. $\QMA$ with a classical proof \cite{AV02}), as it underpins the central open question of whether quantum witnesses are fundamentally more powerful than classical ones. Specifically, \emph{if} a QMA verifier's measurement falls into a class of observables $O$ whose output statistics could be efficiently predicted by poly-size classical shadows, \emph{and} if CSV for said shadows could be solved by a (uniformly generated) poly-time quantum circuit, then $\QMA=\QCMA$, which would be a breakthrough.

(3) We further motivate the study of $\CSV$ by framing it as a natural quantum analogue of the classical Sparse Representation problem ($\cfont{SR}$) (see \Cref{def:SR}) under the lens of classical shadow protocols. In $\cfont{SR}$, one has to decide if a sparse vector, consistent with a given measurement sketch, exists. While $\cfont{SR}$ is $\cfont{NP}$-hard in the general case \cite{FR13}, it becomes efficiently solvable (via convex relaxation) when the measurement matrix satisfies the Restricted Isometry Property (RIP) \cite{CRT06}. In the quantum setting an analogous compressed sensing phenomenon is known: under a low-rank promise on the state and suitable measurements, convex programs can efficiently reconstruct the state \cite{GLFBE}. Our work studies the quantum SR question arising in the classical shadow framework, \cite{aaronsonShadowTomographyQuantum2018, HKP20}. 

\paragraph{Our results.} We organize our discussion\footnote{We remark that although we gave a fully general formal definition of classical shadows (\Cref{def:classical-shadow}), most of our results are actually independent of the specific recovery algorithm $A$ employed therein; thus, behind the scenes we often work with a simpler restatement of CSV, denoted Observable Consistency ($\ObsCon$, \Cref{def:obscon}). Hence, while we informally state our results in terms of CSV here, our formal statements are often in terms of $\ObsCon$.} in terms of (1) polynomially many observables, (2) exponentially many observables, and (3) further variants of $\CSV$ with connections to $\QMAt$. For clarity, our main results involve (1) and (2). All hardness results are under poly-time many-one reductions.\\ 

\vspace{-1mm}
\noindent\emph{1. Polynomially many observables: Hardness and dequantization.} As previously stated, it is not difficult to see that $\CSV$ in its most general form is QMA-complete (\Cref{cor:CSVpoly}). Here, we focus on the more challenging case of the HKP protocol~\cite{HKP20} (instantiated with either local or global Clifford measurements), as well as a high-dimensional generalization thereof due to Mao, Yi and Zhu (MYZ) for odd-prime local dimension $d$~\cite{MYZ25}. 

To begin, we define $\CSVhkp$ as CSV for the HKP protocol instantiated with \emph{local} Clifford measurements (\Cref{def:hkpcsv}); roughly, the elements of $S$ are $n$-bit strings, conjugated by Pauli strings in $\set{X,Y,Z}^n$, and the observables are $k$-local Pauli strings for $k\in O(1)$. We show the following statement. 

\begin{theorem}[Informal; see \Cref{cor:CSVHKPisQMAcomplete}]\label{thm:csvhkpinformal}
    $\CSVhkp$ is $\QMA$-complete, even for $6$-local observables on a spatially sparse hypergraph.
\end{theorem}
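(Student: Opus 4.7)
For containment in $\QMA$, the verifier receives $\rho^{\otimes T}$ with $T=\poly(n,1/(\beta-\alpha))$ as witness, measures each observable $O_i$ on a fresh batch of copies, empirically estimates each $\Tr(O_i\rho)$ to additive accuracy $(\beta-\alpha)/4$, and accepts iff every empirical estimate lies within $(\alpha+\beta)/2$ of the classically computed recovery value $A(S,i)$. Chernoff together with a union bound over the $\poly(n)$ observables secures completeness and soundness. This matches the stronger $\QMAplus$-containment cited in the theorem statement, from which $\QMA$-containment follows by the known collapse $\QMAplus \subseteq \QMA$.

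\textbf{Lower bound via reduction from $\CLDM$.} For $\QMA$-hardness I reduce from $\CLDM$ (Consistency of Local Density Matrices, Liu), restricted to $6$-local reduced states $\set{\rho_{S_i}}$ supported on a spatially sparse hypergraph; this restricted variant inherits $\QMA$-hardness from standard perturbative-gadget / circuit-to-Hamiltonian reductions on the analogous Local Hamiltonian problem, which I invoke rather than reprove. Given such a $\CLDM$ instance, I build the HKP shadow tuple $(S,O,A,\chi)$ by taking as observables all Pauli strings whose support lies inside some $S_i$ (a polynomial list, inheriting $6$-locality and spatial sparsity), with $A$ the standard HKP recovery algorithm. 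The snapshot multiset $S$ is constructed synthetically: for each $i$ and for $N=\poly(n)$ repetitions, I sample a uniform local Clifford $U=U_1\otimes\cdots\otimes U_n$, sample $b|_{S_i}$ from the Born distribution of $U|_{S_i}\,\rho_{S_i}\,(U|_{S_i})^\dagger$, and sample $b|_{[n]\setminus S_i}$ uniformly at random; each step is polynomial-time since each $\rho_{S_i}$ is a constant-size matrix.

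\textbf{Why it works.} The reduction hinges on one identity: a virtual snapshot associated with $S_j$ contributes, in expectation, exactly $\Tr(P\rho_{S_j})$ to the HKP estimator for an observable $P$ whenever $\mathrm{supp}(P)\subseteq S_j$, and exactly $0$ otherwise, because bits outside $S_j$ are uniform and their parity on $\mathrm{supp}(P)\setminus S_j$ averages to zero. Under the $\CLDM$ consistency promise, the nonzero expectations for a fixed $P$ agree across the relevant $j$, so the HKP recovery concentrates on the common target value $\Tr(P\rho_{S_i})$ to additive precision $1/\poly(n)$ by Chernoff plus a union bound over the polynomially many observables. Conversely, in the NO case any candidate $\rho$ must violate some $6$-qubit marginal by $\Omega(1/\poly(n))$ in trace norm, which forces at least one $6$-local Pauli expectation to deviate from its target by $\Omega(2^{-6}/\poly(n)) = 1/\poly(n)$, triggering $\CSV$ rejection.

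\textbf{Main obstacle.} The principal technical point is coherent parameter bookkeeping: the HKP estimator noise, the $\CLDM$ promise gap, the $2^{O(k)}$ conversion factor between Pauli-expectation deviations and trace-norm marginal deviations, and the $\CSV$ thresholds $\alpha,\beta$ all must fit into a common $1/\poly(n)$ window. A secondary nuisance is that $S$ is built with random coins; this is handled either by a standard high-probability argument (together with classical poly-time verification of a single good sample, since the targets $\Tr(P\rho_{S_i})$ are poly-time computable from the $\CLDM$ input), or more cleanly by invoking the paper's equivalence of the deterministic and sampled variants of $\CSV$ under randomized reductions (\Cref{def:sampled}), which licenses a randomized reduction for free.
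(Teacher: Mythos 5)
The HKP recovery algorithm $A$ averages (via median-of-means) over \emph{all} snapshots in the multiset $S$, not only over those you generated ``for'' the region containing $\mathrm{supp}(P)$. A snapshot built for region $S_j$ with $\mathrm{supp}(P)\not\subseteq S_j$ contributes $0$ only \emph{in expectation}, but it still enters the average, so for an observable $P\subseteq S_i$ the recovered value concentrates not on $\Tr(P\rho_{S_i})$ but on the diluted quantity $(c_P/m)\,\Tr(P\rho_{S_i})$, where $c_P$ is the number of regions containing $\mathrm{supp}(P)$. Worse, your sampling procedure makes each snapshot distributed exactly as an HKP snapshot of the legitimate quantum state $\tau_i:=\rho_{S_i}\otimes I/2^{\,n-|S_i|}$, so the whole multiset is statistically a valid shadow of the mixture $\bar\tau:=\frac1m\sum_i \tau_i$ --- a state that exists whether or not the $\CLDM$ instance is consistent, and which satisfies $\left|\Tr(P\bar\tau)-A(S,P)\right|\le 1/\poly(n)$ for every listed Pauli $P$ up to sampling error. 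Hence NO instances of $\CLDM$ map to YES instances of $\CSVhkp$, and the soundness argument (``any $\rho$ violating a $6$-qubit marginal forces a large Pauli deviation'') never gets off the ground, because the targets $A(S,i)$ are no longer the local Pauli expectations of the $\rho_{S_i}$. This is precisely the ``stitching'' obstruction the paper is organized around: one cannot pad local data with uniform bits; one must realize the local statistics \emph{simultaneously within each global $n$-qubit snapshot string}.

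The paper's route addresses exactly this. It first establishes $\QMA$-hardness (under many-one reductions) of $\iDCLDM$ on qu-$8$-its by combining the locally simulatable codes of Broadbent--Grilo with the 1D $d=8$ construction of Hallgren--Nagaj--Narayanaswami --- note that your appeal to ``standard'' reductions for spatially sparse $6$-local $\CLDM$ hardness is also not automatic, since Karp-hardness of consistency-type problems needs this simulatability machinery (Liu's results give only Turing reductions). It then writes a linear program over distributions $p_{i,j,k}$ of local snapshot pairs with explicit marginal-matching constraints between neighboring pairs, rounds it to an integer program with denominator $L$ so that probabilities become integer multiplicities of snapshot strings, solves the IP by dynamic programming (feasible only because of the 1D structure), and finally stitches the local snapshots into global strings via permutations obtained from perfect matchings, so that each global snapshot's restrictions encode the correct local data for \emph{all} neighboring pairs at once. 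Your upper-bound argument is fine and essentially the paper's, but the lower bound needs this (or some comparable) global-assembly mechanism; uniform padding cannot be repaired by parameter bookkeeping or by passing to $\SampleCSV$.
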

\noindent In words, deciding if a given HKP classical shadow based on local Clifford measurements is valid is intractable, even when the observables are $6$-local and essentially arranged on a line (formally on a spatially sparse hypergraph (in the sense of Ref.\ \cite{oliveirarobertoComplexityQuantumSpin2008}; \Cref{def:sparse})). Specifically, the hardness construction may be viewed as 1D nearest-neighbor on qudits of dimension $8$. Each qudit is then decomposed into $3$ qubits, and neighboring pairs of qudits $(q_i,q_{i+1})$ have a $6$-local observable acting jointly on their constituent qubits. 

Defining $\CSVmyz$ (\Cref{def:csvmyz}) analogously for MYZ on odd prime local dimensions $d$, we next show:  
\begin{theorem}[Informal; see \Cref{cor:CSVMYZisQMAcomplete}]
    $\CSVmyz$ is $\QMA$-complete for a fixed odd prime local dimension $d\geq 11$, even for $2$-local nearest-neighbor observables on a line.
\end{theorem}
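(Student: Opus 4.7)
I would mirror the $\CSVhkp$ structure: containment is generic, while hardness combines geometrically restricted $\ObsCon$-hardness with an encoding that realizes the required target values as genuine MYZ recovery outputs. Containment in $\QMA$ is immediate from $\ObsConPoly \in \QMA$ (\Cref{prop:ObsCon_{poly}-in-QMAplus}), since $\CSVmyz$ is the special case of $\ObsConPoly$ in which the targets $b_i := A(S,i)$ are computed deterministically and in polynomial time from the input shadow $S$ by the MYZ recovery algorithm; the verifier receives a witness $\rho$, samples $i$, uses the efficient MYZ measurement primitive from \Cref{def:classical-shadow} to measure $O_i$ on $\rho$, and compares the outcome to $b_i$.

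For hardness, the first step is to establish $\QMA$-hardness of $\ObsCon$ restricted to $2$-local nearest-neighbor observables on a line of qudits of odd prime local dimension $d \geq 11$. This I would obtain by reduction from the canonical $\iDLH$ problem for such qudits: either directly at $d = 11$ via a clock construction that natively accommodates it, or by padding a known low-dimensional $\iDLH$-complete instance (e.g.\ the Hallgren--Nagaj--Narayanaswami $d = 8$ result) into $\C^{11}$ by confining support to an $8$-dimensional subspace and penalizing the complementary $3$-dimensional subspace with a large positive on-site term. Given the resulting $\iDLH$ Hamiltonian $H = \sum_j h_{j,j+1}$ with $h_{j,j+1} \succeq 0$ and $\|h_{j,j+1}\|_\infty = O(1)$, set the $\ObsCon$ observables to $O_j := h_{j,j+1}/\|h_{j,j+1}\|_\infty$ on the same line with target values $b_j := 0$. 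Since $O_j \succeq 0$, low total energy $\Tr(H\rho)$ forces each $\Tr(O_j\rho)$ to be individually small (there can be no cancellations), while in the $\iDLH$ No case some $\Tr(O_j\rho)$ must exceed $1/\poly(m)$ for every $\rho$, giving the required $1/\poly$ promise gap between $\alpha$ and $\beta$.

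The remaining, and technically most delicate, step is to lift $\ObsCon$ to $\CSVmyz$, i.e., to construct an explicit multiset $S$ of MYZ-format snapshots such that $A(S,j) = b_j$ for every $j$ simultaneously. Because the MYZ estimator is an empirical linear functional of the snapshot multiset, and because the MYZ unitary ensemble is a unitary $2$-design on $\C^d$ for odd prime $d$, one can solve a constant-size linear system per $2$-local pair of sites to determine snapshot multiplicities realizing any rational target $b_j$ to $\chi$ bits of precision, and concatenate the results into a global shadow $S$ of total size $\poly(n,\chi)$. The main obstacle is verifying this surjectivity-plus-efficiency claim: that the MYZ snapshot basis on pairs of qudits spans the full $2$-body target space and that the required multiplicities are polynomially bounded in $n$ and $\chi$. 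The hypothesis $d \geq 11$ enters in two ways: $d$ must be an odd prime for the MYZ protocol to be defined, and it must be large enough to embed the $\iDLH$ hardness; $11$ is the smallest odd prime meeting both constraints.
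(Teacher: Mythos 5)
Your containment argument and your identification of $d\ge 11$ (smallest odd prime accommodating the $d=8$ Hallgren--Nagaj--Narayanaswami chain after padding) match the paper. The hardness route, however, has a genuine gap at exactly the step you flag as delicate, and the fix is the heart of the paper's proof. Your plan builds the shadow by solving ``a constant-size linear system per $2$-local pair of sites'' and then ``concatenating'' the per-pair multiplicities into a global shadow. This cannot work as stated: each MYZ snapshot in $S$ is a \emph{global} string assigning one single-site snapshot to every qudit, so the snapshot statistics on the overlapping pairs $(i,i+1)$ and $(i+1,i+2)$ share the marginal on site $i+1$ and cannot be chosen independently. Moreover, what you need is not a solution of a linear system but a \emph{nonnegative, integer} assignment of multiplicities (a finite multiset of strings); the $2$-design/unbiasedness property of the MYZ ensemble guarantees that the \emph{ensemble average} recovers expectation values, not that an arbitrary real solution is realizable by a polynomial-size multiset. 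The paper's proof of \Cref{thm:1DCLDM-CSVMYZ} (mirroring \Cref{thm:1DCLDM-CSV}) is devoted precisely to this: a linear program in the pairwise snapshot probabilities $p_{i,j,k}$ with explicit shared-site consistency constraints (\cref{eq:marginal-system:consistency}), an integral rounding phrased as the integer program \cref{eq:marginal-system2} solved by a 1D dynamic program, and a final ``stitching'' of local shadows into global strings via perfect matchings/permutations. None of this machinery appears in your proposal, and without it the lifting from observable-consistency data to an actual MYZ-format input $(S,O,A,\chi)$ does not go through.

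A secondary difference is the reduction source. The paper reduces from $\iDCLDM$ (proved $\QMA$-complete under many-one reductions for $d=8$ in \Cref{Thm:iDCLDM=QMA} via Broadbent--Grilo simulatable codes combined with HNN, then padded to odd prime $d\ge 11$), so the reduction has \emph{explicit} nearest-neighbor marginals $\sigma_{i,i+1}$ to target, which is what makes the LP well-defined, feasible exactly when a consistent state exists, and allows outputting a trivial NO-instance on infeasibility; the observables are then all $2$-local Weyl operators on adjacent qudits. Your $\iDLH$-based variant with observables $h_{j,j+1}/\|h_{j,j+1}\|_\infty$ and targets $0$ gives the reduction no explicit marginals to realize, and it also hides a parameter issue: from $\lmin(H)\ge\beta$ you only get some $\Tr(O_j\rho)\ge\beta/(m\|h\|_\infty)$, so you need the completeness threshold $\alpha$ of the $\iDLH$ instance to be much smaller than $\beta/m$ (achievable by amplification, but it must be said). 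Even granting that, to certify that a shadow with $A(S,j)\approx 0$ for all $j$ \emph{exists and is constructible} you would end up re-deriving the same consistency-LP, integral rounding, and stitching argument; so the proposal is best viewed as an incomplete sketch of the paper's route rather than an alternative proof.
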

\noindent Here, since we are allowed to work with larger $d$, we cleanly obtain hardness with all observables acting on pairs of nearest neighbor qudits $(q_i,q_{i+1})$.

Finally, we study CSV for the HKP protocol instantiated with \emph{global} Clifford measurements, denoted $\CSVgc$ (\Cref{def:CSV_GC}). We show a ``dequantization'' result as follows, for Frobenius norm $\fnorm{A}=\sqrt{\Tr(A^\dagger A)}$:
\begin{theorem}[Informal (see \Cref{thm:csvgc})]\label{thm:csvgcinformal}
    $\CSVgc$ is solvable in polynomial classical randomized time if (a) $\|O_i\|_F\le \poly(n)$ for all $i$, and (b) we are given sampling and query access to each $O_i$.
\end{theorem}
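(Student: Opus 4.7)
The plan is to reduce $\CSVgc$ to a polynomial-size classical convex feasibility problem, exploiting the low-rank structure of HKP global Clifford snapshots together with quantum-inspired importance sampling. First I would use that each snapshot encodes a stabilizer state $|\psi_s\rangle = U_s^\dagger|b_s\rangle$, so the HKP reconstruction $\hat\rho = \frac{1}{N}\sum_s \bigl((2^n+1)|\psi_s\rangle\langle\psi_s| - I\bigr)$ has its non-identity part supported on $V := \mathrm{span}\{|\psi_s\rangle\}$, whose dimension is at most $N \le \poly(n)$. The $N \times N$ Gram matrix $G_{st} = \langle\psi_s|\psi_t\rangle$ is classically computable by the stabilizer formalism, and its eigendecomposition yields an orthonormal basis $\{|u_k\rangle\}$ of $V$ in which each $|u_k\rangle = \sum_s \alpha_{ks}|\psi_s\rangle$ carries an efficient amplitude oracle.

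Second, using sampling and query access to each $O_i$ together with $\|O_i\|_F \le \poly(n)$, I would estimate the $N \times N$ Hermitian matrices $M^{(i)}_{k\ell} := \langle u_k|O_i|u_\ell\rangle$ to $1/\poly(n)$ accuracy in classical randomized polynomial time. Each entry is a bilinear form $\sum_{j,j'}(O_i)_{jj'} u_{k,j}^* u_{\ell,j'}$, unbiasedly estimable by Tang-style importance sampling $(j,j')\sim|(O_i)_{jj'}|^2/\|O_i\|_F^2$ combined with efficient evaluation of stabilizer-superposition amplitudes; the variance is polynomially bounded. The scalar $\tau_i := (\Tr(O_i) - \Tr M^{(i)})/(2^n - \dim V)$ representing the $V^\perp$-averaged value of $O_i$ is at most $\poly(n)\, 2^{-n/2}$ by Cauchy--Schwarz, with $\Tr(O_i)/2^n$ itself estimable by uniformly sampling diagonal entries (variance bounded by $\|O_i\|_\infty^2 \le 1$).

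Third, I would reformulate $\CSVgc$ as the polynomial-size SDP feasibility: find PSD $X \in \C^{N\times N}$ and $\mu \in [0,1]$ with $\Tr(X) + \mu = 1$ such that $\bigl|\Tr(X M^{(i)}) + \mu\tau_i - a_i\bigr| \le \alpha$ for every $i$. Here $X$ represents the $V$-restriction of a candidate $\rho$ in the $|u_k\rangle$ basis and $\mu$ captures its weight on the maximally mixed state over $V^\perp$. Since $|\tau_i|$ is exponentially small in $n$, the $\mu$-term is negligible and can be absorbed into the promise gap $\beta-\alpha = 1/\poly(n)$. The SDP has polynomial size and is solvable classically via, e.g., the ellipsoid method, with all input coefficients produced by the dequantized estimators of the previous step.

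The main obstacle is \emph{completeness}: in the YES case, a witness state $\rho^\star$ need not be supported on $V$, and its off-diagonal blocks $\Pi_V\rho^\star\Pi_{V^\perp}$ can \emph{a priori} contribute substantially to $\Tr(O_i\rho^\star)$. I would close this gap by exploiting the Clifford $3$-design structure of the snapshot distribution: after polynomially many random global-Clifford snapshots, the projector $\Pi_V$ covers, up to polynomially small error, the range of every operator with Frobenius norm $\poly(n)$, so that $\Tr(O_i\rho^\star) \approx \Tr(\Pi_V O_i \Pi_V\,\rho^\star)$ uniformly in $i$, allowing $X := \Pi_V\rho^\star\Pi_V$ with an appropriate $\mu$ to yield a feasible SDP point. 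Making this covering statement quantitative, and cleanly routing all approximation errors through the $1/\poly(n)$ promise gap, is the core technical difficulty of the argument.
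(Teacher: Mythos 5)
The obstacle you yourself flag as the ``core technical difficulty'' is in fact fatal, and the covering lemma you hope for is false. Your effective subspace $V=\operatorname{span}\{\ket{\psi_s}\}$ is built from the \emph{snapshots}, and completeness of your SDP needs $\Pi_V$ to approximately contain the range of every $O_i$ with $\fnorm{O_i}\le\poly(n)$. But $V$ has only polynomial dimension inside a $2^n$-dimensional space, so even a single rank-one observable generically has overlap $\poly(n)/2^n$ with $V$; the Clifford $3$-design property gives unbiasedness of the HKP estimator, not subspace coverage, and in $\CSVgc$ the snapshot strings are adversarial input rather than honest samples anyway. Concretely: let every snapshot be the record of $H^{\otimes n}$ with outcome $0^n$, so $V=\operatorname{span}\{\ket{+}^{\otimes n}\}$, and take the single observable $O_1=\ketbra{0^n}{1^n}+\ketbra{1^n}{0^n}$ (efficiently measurable, $\fnorm{O_1}=\sqrt2$, trivial SQ access). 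The recovery value is $A(S,1)=(2^n+1)\cdot 2^{1-n}\approx 2$, clipped to $1$, and the state $(\ket{0^n}+\ket{1^n})/\sqrt2$ gives $\Tr(O_1\rho)=1$, so this is a YES instance; yet $\Pi_V O_1\Pi_V$ has norm $2^{1-n}$ and your $\tau_1$ is exponentially small, so no feasible $(X,\mu)$ of your SDP attains a value near $1$ and your algorithm outputs NO. The state-side compression you propose simply cannot be validated.

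The paper's proof (\Cref{thm:csvgc}) compresses the \emph{observables} instead: since $\fnorm{O_j}\le\poly(n)$, each $O_j$ has only polynomially many eigenvalues above a constant cutoff, and the FKV sketch (\Cref{thm:FKV}), run with the given SQ access to $O_j$, produces vectors $v_{j,t}$ spanning (approximately) those top eigenspaces, with $\Ojt=\Pi_j O_j\Pi_j$ satisfying $\abs{\Tr(\rho(O_j-\Ojt))}\le 1/\poly(n)$ for \emph{every} state $\rho$ (H\"older plus the Frobenius tail bound). Taking $T$ to be the span of all $v_{j,t}$ and $\Pi$ its projector, one has $\Tr(\Ojt\rho)=\Tr(\Ojt\,\Pi\rho\Pi)$ exactly, so restricting the SDP variable to the poly-dimensional space $T$ is lossless up to $1/\poly$ error for arbitrary witnesses --- precisely the completeness statement your construction lacks. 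The remaining machinery (Gram--Schmidt with sampled inner products, compressed matrix entries via \Cref{l:samplinginnerproduct,l:samplingmatrixmult}, then a poly-size SDP solved by the ellipsoid method) is similar in spirit to your steps 2--3, but carried out in the observable-derived subspace; your snapshot Gram-matrix machinery plays no role and cannot substitute for it.
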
    
\noindent First, while the Frobenius norm bound above may \emph{a priori} seem strong, this setting captures natural tasks such as (e.g.) fidelity estimation against pure and low rank states~\cite{HKP20}. Moreover, the \emph{global} Clifford HKP protocol itself is efficient precisely in this regime, i.e. when $\|O_i\|_F\le \poly(n)$, coinciding with the regime in which \Cref{thm:csvgcinformal} dequantizes $\CSVgc$.  
By a similar argument, this ``dequantization" result extends to the MYZ protocol with \emph{global} Clifford measurements, under the same assumptions (\cref{thm:csvgcmyz}). Second, we dub this a ``dequantization'' result, in that conditions (a) and (b) are similar to those in previous dequantization works, e.g., Tang~\cite{tangQuantuminspiredClassicalAlgorithm2019} and Chia, Gily\'en, Li, Lin, Tang and Wang~\cite{CGLLTW22}, allowing randomized linear algebra techniques to be employed. Specifically, our definition of sampling and query access is from \cite{CGLLTW22}.\\

%\vspace{-1mm}
\noindent\emph{2. Exponentially many observables.} We next consider CSV with \emph{exponentially} many observables. Although \emph{a priori} this setting may seem unrealistic, King, Gosset, Kothari and Babbush gave~\cite{kingTriplyEfficientShadow2025} an explicit \emph{polynomial}-sample complexity shadow protocol for the set of all $4^n$ Pauli string observables, $\set{I,X,Y,Z}^n$. (Note the time complexity is still exponential, but in our setting, we do not produce the shadow, but receive it as input; thus, this overhead is not relevant.) What is also relevant is that Ref.\  \cite{kingTriplyEfficientShadow2025} gives a \emph{poly}-time recovery algorithm for the observable expectations, \emph{assuming} one only demands \emph{constant} additive error. We show the following statement.

\begin{theorem}[Informal; follows from \Cref{l:obsconconstant} and \Cref{cor:qc=SQMA_exp}]\label{thm:csvexpinformal}
    CSV for exponentially many observables and constant additive error recovery precision is $\qc$-complete.
\end{theorem}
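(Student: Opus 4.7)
The plan is to factor the proof through the two stepping-stone results cited in the statement. First, Lemma \ref{l:obsconconstant} should reduce $\CSV$ at constant additive precision to the cleaner Observable Consistency problem $\ObsConExp$: once the promise gap $\beta - \alpha$ is a constant, the output of the recovery algorithm $A(S,i)$ can be discretized to only $O(1)$ rationals and folded directly into the input as explicit target values $v_i$, so $A$ drops out of the problem entirely. Second, Corollary \ref{cor:qc=SQMA_exp} gives $\qc = \SuperQMAExp$, so it suffices to establish that $\ObsConExp$ is $\SuperQMAExp$-complete.

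For containment $\ObsConExp \in \SuperQMAExp$, I would exploit the natural $\exists\,\rho\,\forall\,i$ logical form of the problem, which already matches the alternation of quantifiers defining $\SuperQMAExp$. The existential prover supplies (polynomially many copies of) the purported state $\rho$, the universal adversary supplies a classical index $i \in [m]$ of bit-length $\log m = \poly(n)$, and the efficient verifier uses the succinct access assumption of \Cref{def:classical-shadow} to produce a description of $O_i$, estimates $\Tr(O_i \rho)$ by repeated application of the efficient $O_i$-measurement, and accepts iff the estimate lies within $(\alpha+\beta)/2$ of $v_i$. Standard Chernoff amplification on the copies of $\rho$ converts the constant promise gap $\beta-\alpha$ into exponentially small soundness error, yielding the required completeness/soundness separation for $\SuperQMAExp$.

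For $\SuperQMAExp$-hardness, I would reduce directly from a canonical complete problem --- informally, whether there exists a quantum witness $\rho$ accepted with high probability by every member of a succinctly-described exponential family $\{V_w\}_{w \in \{0,1\}^{p(n)}}$ of polynomial-size verifier circuits. Setting $O_w$ to be a suitable rescaling of the acceptance projector $V_w^\dagger (\ketbra{0}{0} \otimes I) V_w$ (to enforce $\|O_w\|_\infty \leq 1$) and the target value $v_w$ to the $\SuperQMAExp$ completeness threshold, a state simultaneously meeting all $\ObsConExp$ constraints is precisely a $\SuperQMAExp$ witness. Succinct access, efficient $O_w$-measurability, and the norm bound are all inherited from the uniformity and efficiency of the circuit family.

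The most delicate step will be bookkeeping the precision parameters between $\CSV$'s constant additive gap, the discretization inside Lemma \ref{l:obsconconstant}, and whatever completeness/soundness gap the canonical $\SuperQMAExp$-complete problem is stated with. In particular, one must verify that soundness of $\ObsConExp$ survives the discretization without requiring an amplification step that blows $m$ up beyond the allowed $2^{p(n)}$, and that the universal adversary's classical challenge in the hardness reduction is genuinely non-adaptive, so that it commutes cleanly with the quantifier structure of $\SuperQMAExp$. Once these parameters are aligned, both directions are essentially syntactic translations of the $\exists\,\rho\,\forall\,i$ form.
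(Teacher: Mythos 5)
Your high-level route is the same as the paper's: pass from CSV to the observable-consistency formulation, show that constant-gap $\ObsConExp$ is $\SuperQMAExp$-complete, and conclude via $\qc=\SuperQMAExp$ (\cref{l:obsconconstant,cor:qc=SQMA_exp}); your hardness step (hardwire the classical challenge $w$ and take $O_w\propto V_w^\dagger\Pi^{(1)}V_w$ with a fixed target) is in substance the composition of \cref{lem:qc-USQMA} with the rescaling reduction behind \cref{prop:ObsCon_{poly}-hard-QMAplus}. Two framing points before the real issue: with exponentially many observables the targets cannot be ``folded directly into the input as explicit values,'' and no discretization is needed either --- the passage to $\ObsConExp$ is simply $y_i:=A(S,i)$, computable on demand in $\poly(n)$ time, which is exactly the succinct access the definition already supplies. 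Also, the protocol you describe as ``containment in $\SuperQMAExp$'' (existential state prover plus a universal index) is really a direct $\qc$ protocol; in the paper, membership in $\SuperQMAExp$ is the single-shot super-verifier that draws $i$ at random and estimates $\Tr(O_i\rho)$ with a biased coin, and the two-quantifier protocol appears only inside \cref{lem:USQMA-qc}. Rerouting through a direct $\qc$ protocol is legitimate, but it is exactly where your proposal has a genuine gap.

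The gap is the soundness claim that ``standard Chernoff amplification on the copies of $\rho$ converts the constant promise gap into exponentially small soundness error.'' The existential prover is not obliged to send i.i.d.\ copies: the $k$ registers may be arbitrarily entangled or correlated, so the $k$ outcomes are not independent and no Chernoff bound applies to their average. Concretely, take a NO instance admitting states $\sigma_1,\sigma_2$ whose violated constraints are disjoint (e.g., two incompatible targets for the same observable); if the prover sends $\tfrac12\sigma_1^{\otimes k}+\tfrac12\sigma_2^{\otimes k}$, then for whatever single index the universal prover names, the empirical average falls inside the acceptance window with probability about $1/2$, so concentration alone cannot even give soundness error $1/3$, let alone exponentially small. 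The correct argument --- the one the paper imports from Aharonov--Regev in \cref{lem:USQMA-qc} --- uses Hoeffding only for completeness (the honest prover sends product copies) and a Markov-type argument, applied to the average reduced state of the submitted witness, for soundness; this yields only an inverse-polynomial completeness--soundness gap, which is then amplified after the classical index is fixed, at which point the residual verifier is an ordinary $\QMA$-style verifier. With that replacement, and with the per-check tolerances $s_{x,i}$ handled by the rescaling of \cref{prop:ObsCon_{poly}-hard-QMAplus} (or by first amplifying each $V_i$ to make the gap constant, as in \cref{l:obsconconstant}), your plan goes through and coincides with the paper's proof.
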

\noindent Let us discuss strengths and weaknesses: The strengths are that (1) the result holds even if one need only recover constant precision approximations 
of observable predictions, and (2) \Cref{thm:csvexpinformal} yields the first natural complete problem for a quantum generalization of (a level of) the \emph{polynomial hierarchy} (PH). Specifically, $\qc$ (\Cref{def:qc}) is a quantum generalization of $\Sigma_2^p$, the second level of PH, in which the first proof is a mixed quantum state, the second a classical string, and the verifier is quantum. We remark this is the first work studying $\qc$, though other variants of quantum PH have been studied in previous works~\cite{gharibianHardnessApproximationQuantum2012,gharibianQuantumGeneralizationsPolynomial2018,agarwalQuantumPolynomialHierarchies2024,grewalEntangledQuantumPolynomial2024}. The weakness is that, unlike \Cref{thm:csvhkpinformal}, we do not prove the result for the specific observable set of Ref.~\cite{kingTriplyEfficientShadow2025}, i.e., 
for $\set{I,X,Y,Z}^n$. \\ 

\vspace{-1mm}
\noindent\emph{3. Further variants of $\CSV$ and connections to $\QMAt$.} For completeness, we also show the following for variants of $\CSV$: 
\begin{enumerate}
    \item (\Cref{scn:product}) CSV where the consistent state must be \emph{product}, i.e., $\rho=\rho_A\otimes\rho_B$, is $\QMAt$-complete\footnote{$\QMAt$ is QMA, but where the proof is promised to be in tensor product~\cite{kobayashiQuantumMerlinArthurProof2003}.} for polynomially many observables, and $\cfont{qc\text{-}\Sigma_2(2)}$-complete (\cref{def:P-qc-sigma2}) for exponentially many observables. As an intermediate step, the proof shows that $\PSQMA_{\poly}=\QMAt$, where $\PSQMA_{\poly}$ is a product state generalization of $\QMAplus$ from Ref.~\cite{aharonovLatticeProblemQuantum2003}. 

    \item Verifying if a \emph{set} of classical shadows, each possibly with different observables, all correspond to the \emph{same} state $\rho$ (\Cref{def:MCSV}) is QMA-complete (\Cref{cor:mcsvpoly_qma}) and $\qc$-complete (\Cref{cor:mcsvexp_qc}) for polynomially many and exponentially many observables, respectively.
\end{enumerate}

\paragraph{Techniques.} We focus on QMA-completeness of $\CSVhkp$ (\Cref{thm:csvhkpinformal}), ``dequantization" of low-Frobenius norm $\CSVgc$ (\Cref{thm:csvgcinformal}), and $\qc$-completeness of CSV with exponentially many observables (\Cref{thm:csvexpinformal}). \\

\vspace{-1mm}
\noindent \emph{QMA-completeness of $\CSVhkp$.} Ideally, we wish to reduce the QMA-complete CONSISTENCY problem on $k$-local reduced density operators $\rho_i$ to $\CSVhkp$. The challenge? Each $\rho_i$ acts on only $k$-qubits. HKP classical shadows with local Clifford measurements, on the other hand, have shadow elements $s_i$ which are highly \emph{non-local} --- each $s_i$ is an $n$-qubit tensor product of eigenvectors of single-qubit Pauli matrices. And ``stitching'' together local information, i.e., the $\rho_i$, to obtain \emph{globally} consistent information, i.e., the $s_i$, is a difficult task, reminiscent of the quantum marginal problem. 

To overcome this requires a series of steps. First, we start with the 1D CONSISTENCY problem on qudits, so that it suffices to stitch together \emph{nearest neighbor} reduced states $(\rho_{i,i+1},\rho_{i+1,i+2})$ on the line. To this end, we first show\footnote{One could alternatively use the 1D CONSISTENCY QMA-hardness result of Liu~\cite{liuLocalConsistencyProblem2007}, but this would only yield hardness under Turing reductions, not many-one reductions.} QMA-completeness of 1D CONSISTENCY with local dimension $d=8$ via many-one reduction by combining the locally simulatable technique of Broadbent and Grilo~\cite{BG22} with the QMA-complete result for the 1D Local Hamiltonian problem with $d=8$ of Hallgren, Nagaj, and Narayanaswami~\cite{HNN13}. Then, we take the local nearest-neighbor reduced states $\rho_i$ on qu-$8$-its from 1D CONSISTENCY, decompose each qudit $q_i$ into a triple of qubits $T_i$, and consider all possible \emph{$6$-local} HKP shadows on pairs $(T_i,T_{i+1})$. Crucially, we know under the HKP protocol that any valid local shadow's expectation should exactly recover the corresponding state $\rho_i$. Using this fact and our 1D setup, one can derive a \emph{linear program}  (LP) which captures ``how much weight/probability'' to put onto each local shadow, so that the ``local probabilities'' are consistent with some global HKP shadow if and only if the 1D CONSISTENCY instance we started with is a YES instance.

Unfortunately, solving this LP is itself not enough, because we next need to simulate the probability of a local shadow $s_j$ occurring when measuring local Cliffords in HKP by \emph{repeating} $s_j$ an appropriate integer number of times in our shadow set $S$. We must, in fact, do this \emph{exactly} to ensure consistency, and so we next ``round'' our LP into an \emph{integer program}  (IP) to give us \emph{integer} weights on local shadows. This raises the potential roadblock that solving integer programs is NP-hard, but here we again crucially use the fact that we are working in 1D. Specifically, we  exploit the 1D structure to design an efficient dynamic program to solve the IP, obtaining the desired integer weights on local shadows. Finally, we construct a list of \emph{global} shadows by repeatedly carefully stitching together strings of local shadows under appropriate permutations given by a perfect matching.\\

\vspace{-1mm}
\noindent\emph{Dequantization of low-Frobenius-norm $\CSV_{\mathrm{GC}}$.}
For global Clifford shadows, we work with the abstract observable
consistency problem, $\ObsCon_{\mathrm{F,Samp}}$, whose input consists of
pairs $(O_i,y_i)$ with $\|O_i\|_F\le \poly(n)$, together with additional sampling and
query access assumptions to the observables. The consistency conditions can then be
viewed as an SDP feasibility problem with bounded-Frobenius-norm constraint
matrices. We invoke the randomized classical SDP solver of~\cite{CGLLTW22},
whose running time is polynomial in the Frobenius bound, the inverse promise
gap, the number of constraints, and $\log N$ for system dimension $N=2^n$.
This yields a randomized polynomial-time algorithm for $\CSV_{\mathrm{GC}}$
under the stated sampling and query assumptions. The same proof applies to
the global-Clifford MYZ protocol after replacing $N=2^n$ by $N=d^n$; for
fixed local dimension $d$, one has $\log N=n\log d=O(n)$, so the runtime
remains polynomial in the input size.\\

\vspace{-1mm}
\noindent\emph{$\qc$-completeness of CSV with exponentially many observables.} To connect $\CSV$ with $\qc$, we use the $\QMAplus$ formalism of
Aharonov and Regev~\cite{aharonovLatticeProblemQuantum2003}, in which one is
given polynomially many measurements $\Pi_i$ and targets $r_i\in \mathbb{R}$,
and asks whether there is a state $\rho$ such that
$\Tr(\Pi_i\rho)\approx r_i$ for all $i$. While
Ref.~\cite{aharonovLatticeProblemQuantum2003} showed $\QMAplus=\QMA$, here
we define the analogous class with \emph{exponentially} many measurements
$\Pi_i$, denoted $\SuperQMAExp$. We prove that $\CSV$ with exponentially many
observables is $\SuperQMAExp$-complete, and subsequently show that
$\SuperQMAExp=\qc$. One direction encodes a $\qc$ verifier as an
exponentially indexed family of consistency checks. Conversely, for
$\SuperQMAExp\subseteq\qc$, the existential quantum proof provides the
candidate globally consistent state, while the universal classical proof
specifies one of the exponentially many checks to test. The remaining
technical ingredient is amplification for $\qc$: in the NO case, Sion's
minimax theorem gives a single distribution over classical challenges that is
bad against every quantum state; a matrix Hoeffding bound sparsifies this
distribution to a polynomial-size list of challenges, after which standard
weak-error amplification applies.

\paragraph{Open questions.} We have initiated the study of the complexity of verifying classical shadows. For hardness, an important open question is whether other specific classical shadow protocols and observable sets have QMA-hard CSV problems? In the case of exponentially many observables, for example, can one give a $\qc$-completeness proof of CSV for the protocol of King, Gosset, Kothari and Babbush~\cite{kingTriplyEfficientShadow2025}? The main bottleneck we faced here was that, unlike in our proof for HKP with polynomially many observables (\Cref{thm:csvhkpinformal}), it is not clear how to start from an ``exponential size'' analogue of the QMA-complete 1D CONSISTENCY problem. A natural idea might be to start with \emph{translationally invariant} 1D systems~\cite{gottesmanQuantumClassicalComplexity2009}. Such systems, however, act on \emph{exponentially} many qudits, whereas our setting requires \emph{polynomially} many qubits --- the exponentiality occurs only in the number of observables for CSV. Finally, are there instances of CSV aside from our HKP, MYZ global Clifford results which can also be dequantized, or even better, solved classically without sampling assumptions? 

\paragraph{Organization.} \Cref{scn:preliminaries} begins with preliminaries, including reviews of the HKP and MYZ classical shadow protocols. \Cref{scn:obscon} studies the general CSV problem (i.e., not restricted to any particular shadow protocol), including the case of exponentially many observables. \Cref{scn:HKPhardness} studies the HKP and MYZ protocols, showing QMA-hardness and our dequantization result. \Cref{scn:product} studies product variants of CSV, and \Cref{scn:variant} the sampled and multiple shadow consistency variant of CSV.

\section{Preliminaries}\label{scn:preliminaries}

\paragraph{Definitions.} We use $A\leq B$ and $A\leq_r B$ to denote poly-time deterministic many-one and poly-time randomized reductions from $A$ to $B$, respectively.

\begin{definition}[Sparse Representation (\textsf{SR})]
\label{def:SR}
The input of the Sparse Representation problem is a tuple $(\Psi, v, m, \epsilon)$, where:
\begin{enumerate}
    \item \textbf{$\Psi$}: An $n \times d$ real-valued matrix, known as the \emph{measurement matrix}, where $n \ll d$. The matrix is given by a succinct description from which its entries can be computed in polynomial time.
    \item \textbf{$v$}: An $n$-dimensional real-valued vector, known as the \emph{measurement sketch}.
    \item \textbf{$m$}: A positive integer, known as the \emph{sparsity parameter}.
    \item \textbf{$\epsilon$}: A non-negative real number representing the allowed \emph{error tolerance}.
\end{enumerate}
Decide between the following 2 cases:
\begin{itemize}
    \item \textbf{Yes:} $\exists\; f\in \mathbb{R}^d$ such that $\|f\|_0\le m\; \text{and}\;   \| \Psi \mathbf{f} - v \|_2 \le \epsilon$.
    \item \textbf{No:} $\forall\; f\in \mathbb{R}^d\;\;\text{with}\; \|f\|_0\le m \;\text{it holds that}\;\| \Psi \mathbf{f} - v \|_2 > \epsilon$. 
\end{itemize}
\end{definition}

\begin{definition}[Spatial sparsity~\cite{oliveirarobertoComplexityQuantumSpin2008}]\label{def:sparse}
  A spatially sparse hypergraph $G$ on $n$ vertices has:
  \begin{enumerate}
  \item every vertex participates in $O(1)$ hyper-edges, and
  \item there is a straight-line drawing in the plane such that every hyper-edge overlaps with $O(1)$ other hyper-edges and the surface covered by every hyper-edge is $O(1)$.
  \end{enumerate}
\end{definition}

\begin{definition}[QMA with unentangled provers (QMA(2))]
A promise problem $\mathcal{A}=(A_{\mathrm{yes}},A_{\mathrm{no}},A_{\mathrm{inv}})$ is in $\mathrm{QMA}(2)$ if there exists a P-uniform quantum circuit family $\{V_n\}$ and polynomials $p,q:\mathbb{N}\to\mathbb{N}$ satisfying the following properties. For any input $x\in\{0,1\}^n$, the verifier $V_n$ takes in $n+2p(n)+q(n)$ qubits as input, consisting of the input $x$ on register $X$, a quantum proof $\ket{\psi_1}_A\otimes\ket{\psi_2}_B \in \big((\mathbb{C}^2)^{\otimes p(n)}\big)^{\otimes 2}$ on registers $A\otimes B$, and $q(n)$ ancilla qubits initialized to $\ket{0}$ on register $C$. The first qubit of register $C$, denoted $C_1$, is the designated output qubit, a measurement of $C_1$ in the standard basis after applying $V_n$ yields the following:
\begin{itemize}
    \item (Completeness) If $x\in A_{\mathrm{yes}},\;\exists$ proof $\ket{\psi_1}_A\otimes\ket{\psi_2}_B \in \big((\mathbb{C}^2)^{\otimes p(n)}\big)^{\otimes 2}$ such that $V_n$ accepts with probability $\ge 2/3$.
    \item (Soundness) If $x\in A_{\mathrm{no}}$, then $\forall$ proofs $\ket{\psi_1}_A\otimes\ket{\psi_2}_B \in \big((\mathbb{C}^2)^{\otimes p(n)}\big)^{\otimes 2}$, $V_n$ accepts with probability $\le 1/3$.
\end{itemize}
\end{definition}

\begin{definition}[$\QMA^+$\cite{aharonovLatticeProblemQuantum2003}]
A language $L \in \mathrm{QMA}^+$ if there exists a super-verifier\footnote{A ``super-verifier" is given by a classical polynomial-time randomized algorithm that given an input $x$ outputs a description of a quantum circuit $V$ and two numbers $r,s\in[0,1].$} and polynomials $p_1,p_2,p_3$
such that:
\begin{itemize}
  \item $\forall x \in L\;\; \exists \rho \; \; \Pr_{V,r,s}\left(\left|\Tr\left(\Pi^{\ket1} V\rho V^{\dagger}\right)-r\right|\le s\right)=1$ \\
  \emph{(i.e., there exists a witness such that with probability 1 the super-verifier outputs $V$ which accepts the witness with probability which is close to $r$)},
  \item $\forall x \notin L \;\; \forall \rho \; \; \Pr_{V,r,s}\left(\left|\Tr\left(\Pi^{\ket1}V\rho V^{\dagger}\right)-r\right|\le s+p_3(1/| x|)\right)\le 1-p_2(1/|x|)$ \\
  \emph{(i.e., for any witness, with some non-negligible probability, the super-verifier outputs a circuit $V$ that accepts the witness with probability which is not close to $r$)}
\end{itemize}
where probabilities are taken over the outputs $V,r,s$ of the super-verifier and $\rho$ is a density matrix over $p_1(\lvert x\rvert)$ qubits.
\end{definition}

\begin{definition}$\left(\SuperQMA\left(m,\epsilon\right)\right)$.\label{def:SQMA}
A promise problem $ A $ is in $\SuperQMA\left(m, \epsilon\right)$ if there exists a super-verifier $V = \{(V_{x,i}, r_{x,i}, s_{x,i})\}_{i \in [m]}$
 such that:
\begin{itemize}
    \item $ \forall x \in A_{\text{yes}}, \exists \rho $: 
    $ 
    \Pr_{i} \left( \left| \Tr(\Pi^{(1)} V_{x,i} \rho V^{\dagger}_{x,i}) - r_{x,i} \right| \leq s_{x,i} \right) = 1.
    $
    \item $ \forall x \in A_{\text{no}}, \forall \rho $: 
    $ 
    \Pr_{i} \left( \left| \Tr(\Pi^{(1)} V_{x,i} \rho V^{\dagger}_{x,i}) - r_{x,i} \right| \leq s_{x,i} + \epsilon \right) \leq 1 - 1/m,
    $
    
\end{itemize}
where probabilities are taken over $i\in [m]$, $\rho$ is a density matrix on $p(|x|)$ qubits, $r_{x,i}, s_{x,i}\in[0,1]$ and $1/\poly(n)\le\epsilon\le1$. We additionally assume there exists a classical algorithm which, given any $i\in[m]$, efficiently computes  $(V_{x,i},r_{x,i},s_{x,i})$ in time polynomial in the number of qubits.
\end{definition}
\noindent Note our definition allows \emph{exponentially} many checks, so long as each check can be efficiently generated on demand. We further remark that our definition $\SuperQMAPoly$, i.e., with $m=\poly(|x|)$, coincide with $\QMAplus$ from Ref.~\cite{aharonovLatticeProblemQuantum2003}. However, to the best of our knowledge, our $\SuperQMAExp$ with $m=\exp(|x|)$ has not been considered elsewhere before.

\begin{definition}\label{def:PSQMA}
    $(\PSQMA(m,\epsilon))$. We define it exactly as \cref{def:SQMA} but with the promise that $\rho=\rho_A\otimes\rho_B$ where $\rho_A$ and $\rho_B$ are density matrices on $p_1(|x|)$ and $p_2(|x|)$ qubits, respectively. 
\end{definition}
\begin{definition}[$\mathsf{Q\Sigma_i}$  \cite{gharibianQuantumGeneralizationsPolynomial2018}]\label{def:QSigma}
A promise problem $A=(A_{\mathrm{yes}},A_{\mathrm{no}})$ is in $\mathsf{Q\Sigma_i}(c,s)$ for
polynomial-time computable functions $c,s:\mathbb{N}\to[0,1]$ if there exists a polynomially
bounded function $p:\mathbb{N}\to\mathbb{N}$ and a polynomial-time uniform family of quantum
circuits $\{V_n\}_{n\in\mathbb{N}}$ such that for every $n$-bit input $x$, $V_n$ takes $p(n)$-qubit density operators $\rho_1,\ldots,\rho_i$ as quantum proofs and outputs a single
qubit, then:
\begin{itemize}
  \item \textbf{Completeness:} If $x\in A_{\mathrm{yes}}$, then
  $\exists \rho_1\,\forall \rho_2\,\cdots\,Q_i\rho_i$ such that $V_n$ accepts
  $(\rho_1\otimes \rho_2 \otimes \cdots \otimes \rho_i)$ with probability $\ge c$.

  \item \textbf{Soundness:} If $x\in A_{\mathrm{no}}$, then
  $\forall \rho_1\,\exists \rho_2\,\cdots\,\overline{Q}_i\rho_i$ such that $V_n$ accepts
  $(\rho_1\otimes \rho_2 \otimes \cdots \otimes \rho_i)$ with probability $\le s$.
\end{itemize}
Here, $Q_i$ equals $\forall$ when $i$ is even and equals $\exists$ otherwise, and
$\overline{Q}_i$ is the complementary quantifier to $Q_i$.
\begin{equation*}
\text{Define}\quad
\mathsf{Q\Sigma_i}
= \bigcup_{\,c-s\,\in\,\Omega(1/\mathrm{poly}(n))}\;
\mathsf{Q\Sigma_i}(c,s).
\end{equation*}

\end{definition}
\begin{definition}[Quantum polynomial hierarchy ($\cfont{QPH}$) \cite{gharibianQuantumGeneralizationsPolynomial2018}]\label{def:QPH}
  $\mathsf{QPH}
  \;=\;\bigcup_{i\in\mathbb{N}} \mathsf{Q\Sigma_i}$ .
\end{definition}

\begin{definition}[$\qc(c,s)$]\label{def:qc-cs}
A promise problem $A=(A_{\mathrm{yes}},A_{\mathrm{no}})$ is in $\qc(c,s)$ if there is a
polynomial-time generated quantum verifier $V_x$ which, on input $x\in\{0,1\}^n$,
receives a polynomial-size quantum proof $\rho$ and a polynomial-size classical proof
$y$, and satisfies:
\begin{itemize}
    \item \textbf{Completeness}: If $x \in A_{\text{yes}}$, then $\exists\; \rho$ such that $\forall \ket{y}$, 
    $\Pr[V_x(\rho,\ket{y})=1]\geq c$. 
    \item \textbf{Soundness}: If $x \in A_{\text{no}}$, then $\forall \rho$, $\exists \ket{y}$ such that $\Pr[V_x(\rho,\ket{y})=1]\leq s$.   
\end{itemize}  
\end{definition}
\begin{definition}[$\qc$]\label{def:qc}
We define $\qc := \qc(2/3,1/3)$. In \Cref{lem:qc-amplification}, we show that this constant-gap definition is
equivalent to the inverse-polynomial-gap definition $\bigcup_{c-s\ge 1/\poly(n)} \qc(c,s)$.
\end{definition}
\begin{definition}[$\Pqc$]\label{def:P-qc-sigma2}
$\Pqc:=\bigcup_{c-s\ge 1/\poly(n)} \Pqc(c,s)$, where $\Pqc(c,s)$ is defined as \cref{def:qc-cs} with the promise that $\rho$ is a product state $\rho_A\otimes\rho_B$.
\end{definition}

\paragraph{Observables.} In quantum mechanics, an observable is represented by a Hermitian operator $O$. Its real eigenvalues correspond to the possible outcomes of a measurement. Throughout this work, we assume without loss of generality that all observables $O_i$ are normalized such that their operator norm $\|O_i\| \leq 1$. This implies that all eigenvalues $\lambda$ of $O_i$ lie in the interval $[-1, 1]$. This is a standard normalization, as any observable $O$ can be efficiently rescaled by a factor $C \geq \|O\|$ to satisfy this condition, which correspondingly rescales the expectation values and promise gap parameters in our problems.

\begin{remark}[Range convention for recovery values]\label{rem:range-convention}
Although the physical expectation value of any normalized observable lies
in $[-1,1]$, the raw estimator produced by a shadow recovery procedure
need not lie in this interval. In our protocol-specific CSV problems, we
therefore impose as a promise that the final reported
recovery values lie in $[-1,1]$; an out-of-range reported value is treated
as an invalid prediction.
\end{remark}

\paragraph{Succinct access assumption.} When we say that we assume succinct access to a set $\{A_i^{(n)}\}_{i=1}^m$, with $n$ the natural size parameter of the instance (e.g., 
number of qubits for observables or precision parameter for a real value), we mean that given an index $i$, a $\poly(n)$-bit description of $A_i$ can be produced in $\poly(n)$-time.

\paragraph{Huang-Kueng-Preskill classical shadow framework.} \label{par:HKP}Here we briefly describe a classical shadow protocol proposed by Huang, Kueng and Preskill in Ref. \cite{HKP20}.
For an unknown $n$-qubit state $\rho$ fix an ensemble $\mathcal U$ of unitaries on $n$ qubits. In each round do the following: sample $U\sim\mathcal U$, measure $U\rho U^\dagger$ in the computational basis to get a bitstring $b_\ell\in\{0,1\}^n$, and store a succinct classical description of $U_\ell^\dagger\ket{b_\ell}\bra{b_\ell}U_\ell$. The average channel
\[
\mathcal M(\rho)=\E_{U\sim\mathcal U}\sum_{b}\braket{b|U\rho U^\dagger|b}\,U^\dagger\ket{b}\bra{b}U
\]
is invertible for tomographically complete $\mathcal U$, so a single-shot \emph{snapshot} is
\[
\hat\rho_\ell=\mathcal M^{-1}\big(U_\ell^\dagger\ket{b_\ell}\bra{b_\ell}U_\ell\big),\qquad \E[\hat\rho_\ell]=\rho
.
\]
For any observable $O$ we use $\hat o_\ell(O):=\Tr[O\,\hat\rho_\ell]$. Partition the $L$ rounds into $K$ blocks $B_1,\ldots,B_K$ of (nearly) equal size, set
\[
\overline o_k=\frac{1}{|B_k|}\sum_{\ell\in B_k}\hat o_\ell(O),
\quad
A(S,O)=\operatorname{median}\{\overline o_1,\dots,\overline o_K\}\; \text{(rounded to $\chi$ bits).}
\]
Since $\E[U_\ell^\dagger\ket{b_\ell}\bra{b_\ell}U_\ell]=\mathcal M(\rho)$, linearity gives
$\E[\hat\rho_\ell]=\rho$ and thus $\E[\hat o_\ell(O)]=\Tr(O\rho)$ (unbiased). 
The median of means provides robustness.  We need $L=O \left( \frac{\log(M)}{\epsilon^2} \max_{1 \leq i \leq M} \left\| O_i - \frac{\Tr(O_i)}{2^n} I \right\|_{\text{shadow}}^2 \right)$ samples to estimate $M$ observables up to error $\epsilon$.
\smallskip

\paragraph{Robust classical shadow version.} \label{par:HKPRobust} The variant of the HKP scheme presented in Ref. \cite{Robust} -- which has attractive features concerning the quantum ensemble $\mathcal U$ of unitaries implemented -- shares most of the above
properties from the perspective of the present work. There, the ensemble of unitaries $\mathcal U$ is taken to be as the set of unitaries of the form
\[
U = U_C U_S U_H,
\]
where $U_C$ is an $n$-qubit unitary that consists of i.i.d.\ random controlled-$Z$ gates, $U_S$ is a single layer of gates drawn i.i.d.\ from $\{I_2,S\}$,
where $S:= (I_2+iZ)/\sqrt{2}$ is a single qubit Clifford $S$ gate, and $U_H$ is a fixed single layer of single qubit Hadamard gates. This set of unitaries is a not quite tomographically
complete subset of Clifford circuits. But one can still use single-shot snapshots
\[
\hat\rho_\ell=\mathcal M^{-1}\big(U_\ell^\dagger\ket{b_\ell}\bra{b_\ell}U_\ell\big),\qquad \E[\hat\rho_\ell]=\rho
\]
such that for any observable $O$ that is not supported on the main diagonal in the computational basis one can use $\hat o_\ell(O)=\Tr[O\,\hat\rho_\ell]$
in an unbiased recovery. This is usually satisfactory, as there are other ways of estimating the main diagonal elements of $\rho$.
Hence, the framework also applies here.

\smallskip
\noindent\textbf{Local-Clifford (random Pauli).}
Here $\mathcal U=\Cl(2)^{\otimes n}$. Per round, sample independent single-qubit Cliffords $U_{j,\ell}$ (equivalently, pick Pauli bases $P_{j,\ell}\in\{X,Y,Z\}$) and measure to get bits $x_{j,\ell}$. Store the \emph{measurement record}
\[
s_\ell=\big((P_{1,\ell},b_{1,\ell}),\ldots,(P_{n,\ell},b_{n,\ell})\big),\quad b_{j,\ell}=(-1)^{x_{j,\ell}}\in\{\pm1\}.
\]
The average channel factorizes as $\mathcal M=D_{1/3}^{\otimes n}$ with inverse $D_{1/3}^{-1}(A)=3A-\Tr(A)\,I_2$, hence the snapshot factorizes sitewise as
\[
\hat\rho_\ell=\bigotimes_{j=1}^n\hat\eta_{j,\ell},\quad 
\hat\eta_{j,\ell}=3\,\ket{\psi_{j,\ell}}\bra{\psi_{j,\ell}}-I_2,
\]
where $\ket{\psi_{j,\ell}}:=U_{j,\ell}^\dagger\ket{x_{j,\ell}}$ an eigenvector of $X$ or $Y$ or $Z$. To estimate $M$ $k$-local observables up to error $\epsilon$, it suffices to take
$L=\mathcal O\big(\frac{\log M}{\epsilon^2}\max_i 4^{k}\|O_i\|_{\infty}^2\big)$ rounds. 

\smallskip
\noindent\textbf{Global Clifford.}
Here $\mathcal U=\Cl(2^n)$. Per round, sample $U_\ell\in\Cl(2^n)$ uniformly at random and measure to get $b_\ell$. Store the measurement record as
\[
s_\ell=(\cfont{Stab}_{\ell},b_{\ell})
\]
where $\cfont{Stab}$ is the efficient classical representation of the global Clifford via the stabilizer formalism and $b\in \{0,1\}^n$ the measurement outcome of that round.\\
The average channel is the global depolarizing map
\[
\mathcal M(\rho)=D_{1/(2^n+1)}(\rho),\qquad \mathcal M^{-1}(A)=(2^n+1)A-\Tr(A)I_{2^n},
\]
so the snapshot is
\[
\hat\rho_\ell=(2^n+1)\ket{\psi_\ell}\bra{\psi_\ell}-I_{2^n}.
\]
To estimate $M$ linear observables, one needs
$L=\mathcal O\big(\frac{\log M}{\epsilon^2}\max_i\Tr(O_i^2)\big)$ rounds.

\paragraph{Mao–Yi–Zhu classical shadow framework.} \label{par:MYZ}
Mao, Yi and Zhu \cite{MYZ25} extend the HKP classical-shadow protocol to $n$ qudits of odd prime local dimension $d$. Let $\F_d$ be the finite field with $d$ elements and $\omega=e^{2\pi i/d}$ a primitive $d$-th root of unity. Fix an ensemble $\mathcal E$ of unitaries on $n$ qudits. In each round: sample $U\sim\mathcal E$, measure $U\rho U^\dagger$ in the computational basis to get an outcome $b_\ell\in\F_d^{\,n}$, and store a succinct classical description of $U_\ell^\dagger\ket{b_\ell}\bra{b_\ell}U_\ell$. The average channel
\[
\mathcal M(\rho)=\E_{U\sim\mathcal E}\sum_{b}\braket{b|U\rho U^\dagger|b}\,U^\dagger\ket{b}\bra{b}U
\]
is invertible for tomographically complete $\mathcal E$, so a single-shot \emph{snapshot} is
\[
\hat\rho_\ell=\mathcal M^{-1}\big(U_\ell^\dagger\ket{b_\ell}\bra{b_\ell}U_\ell\big),\qquad \E[\hat\rho_\ell]=\rho.
\]
For any observable $O$ we use $\hat o_\ell(O):=\Tr[O\,\hat\rho_\ell]$. Partition the $L$ rounds into $K$ blocks and take the median of block-means (rounded to $\chi$ bits) as before. Since $\E[U_\ell^\dagger\ket{b_\ell}\bra{b_\ell}U_\ell]=\mathcal M(\rho)$, linearity gives $\E[\hat o_\ell(O)]=\Tr(O\rho)$ (unbiased). Again $L=O \left( \frac{\log(M)}{\epsilon^2} \max_{1 \leq i \leq M} \left\| O_i - \frac{\Tr(O_i)}{d^n} I \right\|_{\text{shadow}}^2 \right)$ samples, suffice to estimate $M$ observables up to error $\epsilon$.

\smallskip
\noindent\textbf{Local-Clifford.}
Here $\mathcal E=\Cl(d)^{\otimes n}$. Per round, sample independent single-qudit Cliffords $U_{j,\ell}$ (equivalently, pick on each site one of the $d+1$ stabilizer bases and measure there). Store the \emph{measurement record} $s_\ell=\big((\mu_{1,\ell},b_{1,\ell}),\ldots,(\mu_{n,\ell},b_{n,\ell})\big)$,
where $\mu_{j,\ell}\in\F_d\cup\{\infty\}$ labels the basis ($\mu=\infty$: $Z$-eigenbasis; $\mu=t\in\F_d$: eigenbasis of $ Z^tX$) and $b_{j,\ell}\in\F_d$ is the outcome label.
The average channel factorizes as $\mathcal M=D_{1/(d+1)}^{\otimes n}$ with inverse $D_{1/(d+1)}^{-1}(A)=(d+1)A-\Tr(A)\,I_d$, hence the snapshot factorizes sitewise:
\[
\hat\rho_\ell=\bigotimes_{j=1}^n\hat\eta_{j,\ell},\qquad 
\hat\eta_{j,\ell}=(d+1)\,\ket{\phi_{\mu_{j,\ell},b_{j,\ell}}}\bra{\phi_{\mu_{j,\ell},b_{j,\ell}}}-I_d,
\]
where $\ket{\phi_{\mu,a}}$ is the eigenvector in the chosen stabilizer basis. 
To estimate $M$ $k$-local observables up to error $\epsilon$, it suffices to take
$L=\mathcal O\Big(\frac{\log M}{\epsilon^2}\max_i d^{2k}\,\|O_i\|_{\infty}^2\Big)$.

\smallskip
\noindent\textbf{Global Clifford.}
Here $\mathcal E=\Cl(d^n)$. Per round, sample $U_\ell\in\Cl(d^n)$ uniformly at random and measure to get $b_\ell\in\F_d^{\,n}$. Store the measurement record: $s_\ell=(\cfont{Stab}_{\ell},b_{\ell})$, where $\cfont{Stab}$ is the efficient classical representation (via stabilizer formalism) of the global Clifford and $b$ the $d$-ary outcome string. The average channel is the global depolarizing map
\[
\mathcal M(\rho)=D_{1/(d^n+1)}(\rho),\qquad \mathcal M^{-1}(A)=(d^n+1)A-\Tr(A)I_{d^n},
\]
so the snapshot is $\hat\rho_\ell=(d^n+1)\ket{\psi_\ell}\bra{\psi_\ell}-I_{d^n}$. To estimate $M$ linear observables, one needs
$L=\mathcal O\Big(\frac{\log M}{\epsilon^2}\max_i [(2d-3)\Tr(O_i^2)+2\|O_i\|_{\infty}^2]\Big)$ rounds.

\section{Complexity of $\ObsCon$}\label{scn:obscon}
The \cref{def:CSV} is overloaded for the general hardness results we are about to present. For that reason we will now recast it in a more abstract form. Notice that we will come back to the full-fledged definition when we consider specific classical shadow protocols.

\begin{definition}[Observable consistency ($\ObsCon$)]\label{def:obscon}
The input is a set of observables, as in \cref{def:classical-shadow}, along with their target expectation values $(O_i, y_i)_{i=1}^m$, for which we assume succinct access, and parameters $\alpha$ and $\beta$ satisfying $\beta-\alpha\geq 1/\poly(n)$. We further assume w.l.o.g. that $y_i\in [-1,1]$ and $0\le\alpha<\beta\le2$.  The output is to decide between the following cases:
\begin{itemize}
    \item \textbf{Yes}: $\exists$  $n$-qubit state $\rho$ such that $\forall$ $i\in [m]$, 
    $\left| \Tr\left(O_i\rho\right)-y_i\right|\leq \alpha$.
    \item \textbf{No}: $\forall$ $n$-qubit states $\rho$, $\exists$ $i\in [m]$ such that $\left| \Tr \left(O_i\rho\right)-y_i\right|\geq \beta$.
\end{itemize}
\end{definition}
\begin{lemma}\label{lem:CSV=ObsCon}
    $\CSV$ and $\ObsCon$ are equivalent under polynomial-time many-one reductions.
\end{lemma}
\begin{proof}
Both directions are straightforward
\begin{itemize}
\item $\CSV\le\ObsCon$. Keep the same observables and define $y_{i}:=A(S,i)$.
\item $\ObsCon\le\CSV$. Keep the same observables, use a dummy shadow $S$ and a recovery algorithm $A(S,i)$ that ignores $S$ and outputs $y_i$. 
\end{itemize}
\end{proof}
We will analyze the complexity of this problem in two regimes, distinguished by the number of observables $m$.

\subsection{Polynomially many observables ($\ObsConPoly$)}\label{sscn:obsconpoly}
\begin{definition}
    [$\ObsConPoly$] Same as \cref{def:obscon} with $m=\poly(n)$.
\end{definition}
\begin{proposition}\label{prop:obsconpoly-in-superqmapoly}
$\ObsConPoly\in \SuperQMAPoly$.
\end{proposition}
\begin{proof}
    \textbf{Verification procedure:} Given the state $\rho$ the verifier picks  $i\in[m]$ uniformly at random and measures the observable $O_i$ on the state $\rho$. This will give one of its eigenvalues $\lambda_j$. Then define a biased coin that gives heads with probability $p_h=\frac{1+\lambda_j}{2}$ and tails with probability $p_t=1-p_h$. Flip the coin and accept on heads, reject on tails.\\
    The overall acceptance probability becomes $\Pr(\text{accept}|i)=\frac{1}{2}(1+\Tr(\rho O_i))$.
    Set the target probability to be $r_{x,i}=\frac{1}{2}(1+y_i)$ and the tolerance parameter $s_{x,i}=\frac{\alpha}{2}$, 
    uniform $\forall i$. Then our protocol works with $\epsilon=\frac{\beta-\alpha}{4}$.\\
    \\
    \textbf{Completeness:} From the promise of the YES case we have that $\forall i\;|\Tr(O_i\rho)-y_i|\leq \alpha$. So we find
    \[
    \forall i\;\;|\Pr(\text{accept}|i)-r_{x,i}|=\frac{1}{2}|\Tr(O_i\rho)-y_i|\leq \frac{\alpha}{2}=s_{x,i}. 
    \]
    In other words $\Pr_i(|\Pr(\text{accept}|i)-r_{x,i}|\leq s_{x,i})=1$.\\
    \\
    \textbf{Soundness:} From the promise of the NO case we have that there exists at least one $i$, say $i^*$, s.t. $|\Tr(O_{i^*}\rho)-y_{i^*}|\geq \beta$. For $i^*$ we then have
    \[
    |\Pr(\text{accept}|i^*)-r_{x,i^*}|=\frac{1}{2} |\Tr(O_{i^*}\rho)-y_{i^*}|\geq \frac{\beta}{2}> s_{x,i^*}+\epsilon
    \]
    Where the last inequality holds since $s_{x,i^*}+\epsilon =\frac{\alpha+\beta}{4}$ and $\beta-\alpha\ge 1/\poly(n)$.\\
    \\
    In other words $\Pr_i[|\Pr(\text{accept}|i)-r_{x,i}|\leq s_{x,i}+\epsilon]\leq 1-\frac{1}{m}$.
\end{proof}
\begin{proposition}\label{prop:obsconpoly-hard-superqmapoly}
    $\ObsConPoly$ is $\SuperQMAPoly$- hard.
\end{proposition}
\begin{proof} 
For input $x$ the $\SuperQMAPoly$ super-verifier provides $m=\poly(|x|)$ checks $\{(V_i, r_i, s_i)\}_{i=1}^m$, with $r_i,s_i\in[0,1]$ and a global gap parameter $1/\poly(n)\le\epsilon\le 1$.\\
\\
\textbf{Parameter setting:}  
Define
\[
\epsilon' := \frac{\epsilon}{2}, \quad \tau := \frac{\epsilon}{4},\quad
s_i' := \max\{s_i,\tau\},\quad t_i=\frac{\tau}{s_i'}.
\]
\textbf{Mapping.}  
The reduction outputs the $\ObsConPoly$ instance $\{(O_i, y_i)\}_{i=1}^m$ with uniform thresholds $\alpha, \beta$ defined by
\[
O_i := t_i\big(V_i^\dagger \Pi^{(1)} V_i\big), \quad
y_i := t_ir_i,\quad
\alpha := \tau, \quad \beta := \tau + \tau\epsilon'.
\]
Notice that this choice of parameters gives us a $\beta-\alpha=\tau\epsilon'\ge\frac{\epsilon^2}{8}\ge\frac{1}{\poly(n)}$.\\
\\
\textbf{Completeness (YES case).}  
If the original $\SuperQMAPoly$ instance is YES, there exists a witness $\rho$ such that
\[
\forall i\;\;:\big|\Tr(V_i^\dagger \Pi^{(1)} V_i \,\rho) - r_i\big| \le s_i\le s_i'.
\]
Multiplying by $t_i$ gives
\[
\big|\Tr(O_i \rho) - y_i\big|
= t_i\big|\Tr(V_i^\dagger \Pi^{(1)} V_i \,\rho) - r_i\big|
\le t_is_i'=\tau=\alpha.
\]
Thus the same $\rho$ satisfies $\big|\Tr(O_i \rho) - y_i\big| \le \alpha$ for all $i$, so the mapped instance is a YES-instance of $\ObsConPoly$.\\
\\
\textbf{Soundness (NO case).}  
If the original $\SuperQMAPoly$ instance is NO, then for every state $\rho$ there exists some index $i^*$ with
\[
\big|\Tr(V_{i^*}^\dagger \Pi^{(1)} V_{i^*} \,\rho) - r_{i^*}\big| > s_{i^*} + \epsilon\ge s_{i^*}'-\tau+\epsilon=s_{i^*}'+\frac{3\epsilon}{4}> s_{i^*}'+\epsilon'.
\]
Multiplying by $t_{i^*}$ yields
\[
\big|\Tr(O_{i^*} \rho) - y_{i^*}\big| > t_{i^*}s_{i^*}'+t_{i^*}\epsilon'=\tau+\frac{\tau}{s_{i^*}'}\epsilon'\ge\tau+\tau\epsilon'=\beta.
\]
Thus the mapped instance violates the uniform $\beta$-threshold for the index $i^*$, matching the $\ObsConPoly$ NO condition.

\end{proof}

\begin{theorem}[\cite{aharonovLatticeProblemQuantum2003}]\label{Thm:QMAplus=QMA}
    $\QMA=\SuperQMAPoly$.
\end{theorem}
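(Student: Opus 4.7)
The plan is to establish both inclusions separately, with $\SuperQMAPoly\subseteq\QMA$ being the non-trivial content, due to Aharonov and Regev~\cite{aharonovLatticeProblemQuantum2003}.

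For the easy direction $\QMA\subseteq\SuperQMAPoly$, I would take a $\QMA$ verifier $V$ for the language, standard-amplify it to completeness $\ge 1-2^{-n}$ and soundness $\le 2^{-n}$, and cast it as a trivial single-check $\SuperQMAPoly$ instance $(V,\,r{=}1,\,s{=}2^{-n})$ with gap parameter $\epsilon=1/2$. In the YES case the amplified witness satisfies $|p-1|\le 2^{-n}=s$ under the only distribution on $i$, and in the NO case every witness satisfies $|p-1|\ge 1-2^{-n}>s+\epsilon$, which exceeds the tolerance with probability $1>1-1/m=0$.

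For the reverse direction, I would exploit the $\SuperQMAPoly$-completeness of $\ObsConPoly$ provided by \Cref{prop:ObsCon_{poly}-in-QMAplus} and \Cref{prop:ObsCon_{poly}-hard-QMAplus} and reduce the task to proving $\ObsConPoly\in\QMA$. The $\QMA$ verifier I have in mind expects as witness $T=\poly(n,m,1/(\beta-\alpha))$ registers supposedly holding $\rho^{\otimes T}$, splits them into $m$ batches of size $\Theta(\log(m)/(\beta-\alpha)^2)$, measures the observable $O_i$ on each register in the $i$-th batch (feasible by the succinct-access and efficient-measurement stipulations in \Cref{def:classical-shadow}), averages the outcomes to form $\bar o_i$, and accepts iff $|\bar o_i-y_i|\le(\alpha+\beta)/2$ for all $i\in[m]$. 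A Hoeffding bound together with a union bound over the $m$ batches yields a constant completeness--soundness separation, provided the witness really is a product state.

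The main obstacle, as in every multi-copy $\QMA$ argument, is that a cheating prover may submit an arbitrary entangled state across the $T$ registers rather than a genuine $T$-fold product. My plan here, following~\cite{aharonovLatticeProblemQuantum2003}, is to have the verifier first apply a uniformly random permutation of the $T$ registers, rendering the received state Bose-symmetric, and then invoke a measured quantum de Finetti bound (in the spirit of Brand\~ao--Harrow) whose error depends on the number of coarse-grained measurement outcomes rather than the exponential local dimension $2^n$. Since only the averages $\bar o_i$ enter the acceptance criterion, a constant number of outcomes per observable suffices, so the de Finetti error can be forced below $(\beta-\alpha)/4$ while keeping $T$ polynomial, thereby reducing the analysis to the honest-prover case already handled. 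Making this step quantitatively tight, and in particular verifying that $T$ can indeed be taken polynomial in $n$, is the most delicate part of the argument.
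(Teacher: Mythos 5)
Your easy direction is fine, and reducing $\SuperQMAPoly\subseteq\QMA$ to showing $\ObsConPoly\in\QMA$ via \cref{prop:ObsCon_{poly}-in-QMAplus,prop:ObsCon_{poly}-hard-QMAplus} is legitimate (those propositions do not rely on the present theorem). The gap is in the verifier you then build. Dedicating a fixed batch of registers to each observable is exactly what a cheating prover exploits: in a NO instance each constraint may well be satisfiable individually, just not by a single common state, so the prover can send batch-tailored states $\rho^{(1)},\dots,\rho^{(m)}$ with $|\Tr(O_i\rho^{(i)})-y_i|\le\alpha$ and pass every batch test. Hence the permutation-plus-de-Finetti step is not a robustness patch but the entire content of your soundness argument --- and it is precisely the step you leave open. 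Moreover, as described it does not go through: measured de Finetti bounds of Brand\~ao--Harrow type have error roughly $\sqrt{\ell^{2}\log d/(k-\ell)}$ when $\ell$ out of $k$ symmetrized registers are measured with product measurements having $d$ outcomes, which is vacuous if you measure all $T$ registers ($\ell=k$). You would need to request polynomially more registers than you measure (or measure only a random subset) and then redo the quantitative accounting you defer as ``the most delicate part''; so the proposal is incomplete at its crux, even though the route could in principle be pushed through.

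For comparison, the paper does not prove this statement at all --- it is quoted from Aharonov and Regev~\cite{aharonovLatticeProblemQuantum2003} (Theorem 4.3 there), whose argument is much lighter and is reused verbatim in \cref{lem:USQMA-qc}: the $\QMA$ verifier asks for $k=\poly(n)$ registers, samples a \emph{single} check $(V_i,r_i,s_i)$, applies that \emph{same} measurement to every register, and accepts iff the empirical mean lies within $s_i+\epsilon/2$ of $r_i$. Because one fixed measurement is applied to all registers, the expectation of the empirical mean depends only on the averaged reduced state $\bar\rho=\frac1k\sum_j\rho_j$, which is itself a state to which the NO promise applies; Hoeffding gives completeness for the honest product witness, a one-sided Markov bound gives rejection probability $\Omega(\epsilon)$ conditioned on drawing a violated check, hence soundness $1-\Omega(\epsilon/m)$, and standard $\QMA$ amplification closes the $1/\poly$ gap. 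No symmetrization, de Finetti theorem, or outcome coarse-graining is needed; entanglement across registers is harmless because only the first moment of a single measurement enters the analysis. If you replace your per-batch design with this ``one random check measured on every copy'' design, your writeup simplifies to exactly that argument.
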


\begin{corollary}\label{cor:obsconpoly}
    $\ObsConPoly$ is $\QMA$-complete.
\end{corollary}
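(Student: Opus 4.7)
The plan is to simply chain together the three results immediately preceding this corollary, which together make the statement essentially a one-line observation. From Proposition~\ref{prop:ObsCon_{poly}-in-QMAplus} I obtain containment $\ObsConPoly\in\SuperQMAPoly$, and from Proposition~\ref{prop:ObsCon_{poly}-hard-QMAplus} I obtain matching $\SuperQMAPoly$-hardness under a poly-time many-one reduction (the reduction constructed there is explicitly deterministic and polynomial-time in its parameter choices, so it qualifies as a many-one reduction in the sense declared at the start of Section~\ref{scn:preliminaries}). Hence $\ObsConPoly$ is $\SuperQMAPoly$-complete, and invoking Theorem~\ref{Thm:QMAplus=QMA} (Aharonov-Regev), which identifies $\SuperQMAPoly$ with $\QMA$, upgrades this to $\QMA$-completeness.

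The only bookkeeping worth mentioning is that the class equality $\QMA=\SuperQMAPoly$ must transfer completeness under poly-time many-one reductions. The inclusion $\QMA\subseteq\SuperQMAPoly$ is immediate, since any standard QMA verifier is a super-verifier with $m=1$ check, target $r_{x,1}$ set from the completeness threshold, and tolerance $s_{x,1}$ set from the soundness gap. The nontrivial direction $\SuperQMAPoly\subseteq\QMA$ from Aharonov-Regev proceeds by a polynomial-time transformation of a super-verifier into a standard QMA verifier. Composing this simulation with the many-one reduction from Proposition~\ref{prop:ObsCon_{poly}-hard-QMAplus} yields, for every promise problem in $\QMA$, a poly-time many-one reduction into $\ObsConPoly$, which is exactly what $\QMA$-hardness requires.

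I do not anticipate any genuine obstacle here. No new construction is needed beyond the three preceding statements; the real work (the hardness reduction itself with its $t_i$-rescaling trick and the containment via the biased-coin acceptance procedure) was already carried out in Propositions~\ref{prop:ObsCon_{poly}-in-QMAplus} and~\ref{prop:ObsCon_{poly}-hard-QMAplus}. The write-up should therefore consist of a single sentence citing these three results, optionally followed by the one-sentence observation above about closure of completeness under composition of many-one reductions with the Aharonov-Regev simulation.
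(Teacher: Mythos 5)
Your proposal is correct and takes exactly the paper's route: the paper's proof is the one-liner combining \cref{prop:ObsCon_{poly}-in-QMAplus,prop:ObsCon_{poly}-hard-QMAplus} with \cref{Thm:QMAplus=QMA}. One minor wording slip: $\QMA$-hardness is obtained by composing the \emph{easy} inclusion $\QMA\subseteq\SuperQMAPoly$ (viewing a $\QMA$ verifier as a one-check super-verifier) with the reduction of \cref{prop:ObsCon_{poly}-hard-QMAplus}, not the nontrivial Aharonov--Regev simulation (that direction is what upgrades the containment $\ObsConPoly\in\SuperQMAPoly$ to $\ObsConPoly\in\QMA$), but since you state both inclusions the substance is fine.
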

\begin{proof}
    This follows from \cref{prop:obsconpoly-in-superqmapoly,prop:obsconpoly-hard-superqmapoly} and \cref{Thm:QMAplus=QMA}.
\end{proof}
\begin{corollary}\label{cor:CSVpoly}
    $\CSV_{\poly}$ is $\QMA$-complete.
\end{corollary}
\begin{proof}
    This follows from \cref{cor:obsconpoly} and \cref{lem:CSV=ObsCon}.
\end{proof}
Note here that the $\QMA$-hardness result need not go through the super-verifier machinery. We can directly reduce from the $\CLDM$ problem which is known to be $\QMA$ complete under Karp reductions \cite{BG22}. We give this reduction in \cref{sec:A}. The reason we use this machinery is because it will become helpful in the $\exp$ regime that we analyze next.

\subsection{Exponentially many observables ($\ObsConExp$)}\label{sscn:obsconexp}

We now move to analyze the case where the observables can be exponentially many, albeit we have succinct access to them. Here the super-verifier machinery we developed for the $\poly$ regime will help us extract completeness results for $\ObsConExp$ immediately.
\begin{definition}
 [$\ObsConExp$]\label{def:ObsConexp} Same as \cref{def:obscon} with $m=\exp(n)$.   
\end{definition}
\begin{proposition}\label{prop:obsconexp-in-superqmaexp}
    $\ObsConExp \in \SuperQMAExp. $       
\end{proposition}
\begin{proof}
The proof follows in the same manner as in the $\poly$-case, \cref{prop:obsconpoly-in-superqmapoly}. The verifier only needs to generate and execute a single, randomly chosen check $(O_i,y_i)$. Since the $\ObsConExp$ instance guarantees that any such pair can be generated in polynomial time given the index i, the verifier remains efficient. The soundness guarantee of $1/m$ holds, where $m$ is now exponential in the number of qubits.
\end{proof}
\begin{proposition}\label{prop:obsconexp-hard-superqmaexp}
    $\ObsConExp$ is $\SuperQMAExp$-hard.
\end{proposition}
\begin{proof}
  The proof again carries over from the $\poly$ case. Here for each one of the exponentially many checks of the $\SuperQMAExp$, the mapping in \cref{prop:obsconpoly-hard-superqmapoly} gives, in polynomial time, one of the $\exp$ many pairs $(O_i,y_i)$ of $\ObsConExp$ along with the global parameters $\alpha,\beta$. This is all we need since we assume succinct access to both the checks and the pairs. 
\end{proof}

\begin{corollary}\label{cor:ObsConexp-SQMAexp_hard}
    $\ObsConExp$ is $\SuperQMAExp$-complete.
\end{corollary}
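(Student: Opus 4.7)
The plan is to assemble the completeness result directly from the two propositions immediately preceding it. By definition, showing $\SuperQMAExp$-completeness of $\ObsConExp$ amounts to verifying two claims: (i) containment $\ObsConExp\in\SuperQMAExp$, and (ii) $\SuperQMAExp$-hardness under poly-time many-one reductions. Both of these are already established, so the remaining task is simply to package them cleanly.

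First, I would invoke \Cref{prop: CSV_{exp} in} for containment: given an instance of $\ObsConExp$, the $\SuperQMAExp$ super-verifier samples a random index $i\in[m]$, generates the succinct description of $(O_i,y_i)$ in polynomial time thanks to the succinct access assumption, and performs the single-check acceptance protocol used in the $\poly$-case. Since $m$ may be exponential, the $1/m$ soundness tolerance in \Cref{def:SQMA} is correspondingly weak, but this is exactly what is permitted by the $\exp$-many-checks flavor of the class.

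Second, I would invoke \Cref{prop:CSV_{exp} hard} for the matching hardness direction: starting from an arbitrary $\SuperQMAExp$ problem with super-verifier checks $\{(V_i,r_i,s_i)\}_{i=1}^m$, apply the same rescaling $O_i:=t_i(V_i^\dagger\Pi^{(1)}V_i)$, $y_i:=t_ir_i$ as in the poly-size reduction, with uniform thresholds $\alpha,\beta$ having inverse-polynomial gap. Crucially, since only a single $(O_i,y_i)$ must be produced on demand, the reduction remains polynomial time per query even though the total list is of exponential length; the succinct access assumptions on both ends align perfectly.

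Combining the two immediately yields $\SuperQMAExp$-completeness of $\ObsConExp$. The only subtle point — which is what makes this corollary worth stating — is that one must make sure the same notion of succinct access (polynomial-time on-demand generation from an index) is used consistently on both sides of the reduction, so that exponentially many objects never need to be materialized simultaneously. I do not anticipate any genuine obstacle; the proof is a one-line appeal to the two propositions, and the main thing to double-check is that the gap parameter $\beta-\alpha\geq 1/\poly(n)$ produced by the hardness reduction matches the promise-gap requirement in \Cref{def:obscon}, which it does by the computation $\beta-\alpha=\tau\epsilon'\geq\epsilon^2/8$ in the proof of \Cref{prop:ObsCon_{poly}-hard-QMAplus}.
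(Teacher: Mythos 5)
Your proposal is correct and matches the paper's proof, which is exactly the one-line combination of \cref{prop: CSV_{exp} in} (containment) and \cref{prop:CSV_{exp} hard} (hardness). The extra remarks you add about succinct access and the gap parameter are consistent with how those two propositions are proved, so nothing further is needed.
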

\begin{proof}
 Follows from \cref{prop:obsconexp-in-superqmaexp,prop:obsconexp-hard-superqmaexp}.   
\end{proof}

We next show that $\SuperQMAExp$ coincides with the second level of a quantum-classical variant of the quantum polynomial hierarchy ($\cfont{QPH}$). By $\cfont{QPH}$ we refer to the hierarchy $\cfont{Q\Sigma_i}$ of Ref.\ \cite{gharibianQuantumGeneralizationsPolynomial2018} (see \cref{def:QSigma}). We call our variant $\qc$ (see \cref{def:qc}).

We first prove an amplification lemma showing that the constant-gap and inverse-polynomial-gap definitions of $\qc$ coincide.
\begin{lemma}[Amplification for $\qc(c,s)$]\label{lem:qc-amplification}
Let $L\in \qc(c,s)$, where $\Delta:=c-s\ge 1/\poly(n)$. Then for any polynomial $r$, $L\in \qc(1-2^{-r(n)},2^{-r(n)})$.
\end{lemma}
\begin{proof}
The strategy is to invoke the standard weak-error amplification of the $\QMA$ protocol, see \cite{kitaev2002classical} for a clear exposition. The only obstacle here is that a priori in the NO case, it seems like there is no single poly-sized universal proof that works for all possible existential proofs. In the following we show that this is not the case.

Soundness: $\forall\rho\;\exists z$ such that $\Tr(A_z\rho)\le s$ or equivalently $\max_{\rho}\min_z\Tr(A_z\rho)\le s$. We now have
\begin{equation}\label{eq:badistrb}  
\max_{\rho}\min_z\Tr(A_z\rho)=\max_{\rho}\min_{\mu}\mathbb{E}_{z\sim\mu}\Tr(A_z\rho)=\min_{\mu}\max_{\rho}\mathbb{E}_{z\sim\mu}\Tr(A_z\rho)\le s
\end{equation}
where the first equality comes from the fact that the minimum of a linear function over the probability simplex is achieved at an extreme point and the second equality comes from Sion's minimax theorem (see \cite{Sion58}). From \cref{eq:badistrb} we conclude that there exists a ``bad" distribution $\mu^*$ such that $\max_{\rho}\mathbb{E}_{z\sim\mu^*}\Tr(A_z\rho)\le s$. For $A_{\mu^*}=\mathbb{E}_{z\sim\mu^*}A_z$ we get
\[
\max_{\rho}\Tr(A_{\mu^*}\rho)\le s\quad\text{and so}\quad \lambda_{\max}(A_{\mu^*})\le s. 
\]
Next we need to show that we can sparsify this distribution into a poly-size list of challenges. For this we use the matrix Hoeffding bound, defined as follows:\\
\\
Let $\{X_j\}$ be a sequence of independent, random, self adjoint matrices of dimension $d$ and $\{A_j\}$ a sequence of fixed self adjoint matrices. Assume $\mathbb{E}X_j=0$ and $X_j^2\le A_j^2$. Then matrix Hoeffding gives
\[
\Pr\left(\lambda_{\max}\left(\sum_jX_j\right)\ge t\right)\le d\cdot e^{\frac{-t^2}{8\sigma^2}}
\]
where $\sigma^2=\|\sum_jA_j^2\|$.\\
\\
For us $X_j:=\frac{1}{T}(A_{z_j}-A_{\mu^*})$. Then $\sum_{j=1}^TX_j=\frac{1}{T}\sum_{j=1}^TA_{z_j}-A_{\mu^*}=A_L-A_{\mu^*}$. So the matrix Hoeffding will bound $\lambda_{\max}(A_L-A_{\mu^*})$. Let us now check the assumptions of the matrix Hoeffding. Since $A_{\mu^*}, A_z$ are Hermitian then $X_j$'s are self adjoint and also independent and random since we sample from a distribution independently. We also have that 
\[
\mathbb{E}[X_j]=\frac{1}{T}(\mathbb{E}[A_{z_j}]-\mathbb{E}[A_{\mu^*}])=0
\]
Since $\mathbb{E}[A_{z_j}]=A_{\mu^*}$ by definition and $\mathbb{E}[A_{\mu^*}]=A_{\mu^*}$ since there is no randomness here. Now for $\{A_j\}$ (the fixed sequence of self-adjoint matrices) we pick $B_j$, to be defined in a bit. Since 
\[
0\preceq A_{z_j}\preceq I,\quad 0\preceq A_{\mu^*}\preceq I.
\]
The first one holding because $A_z$ is an acceptance POVM and the second one because $A_{\mu^*}$ is a convex combination of such POVM's. So we have $(A_{z_j}-A_{\mu^*})^2\preceq I$ and so
\[
X_j^2=\frac{1}{T^2}(A_{z_j}-A_{\mu^*})^2\le\frac{1}{T^2}I
\]
Now by choosing $B_j:=\frac{1}{T}I$ we satisfy the last assumption, i.e., $X_j^2\le B_j^2$. We also have $\sigma^2=\|\sum_jB_j^2\|=\|\sum_{j=1}^T\frac{1}{T^2}I\|=\frac{1}{T}$.\\
\\
So now by applying the Hoeffding matrix we get:
\[
\Pr\left[\lambda_{\max}(A_L-A_{\mu^*})\ge \eta\right]\le d\cdot e^{\frac{-T\eta^2}{8}}
\]
We can run the same analysis for $-X_j$ and so in total from a union probability bound we get 
\[
\Pr\left(\|A_L-A_{\mu^*}\|_{\infty}\ge \eta\right)\le 2d\cdot e^{\frac{-T\eta^2}{8}}
\]
We can now see that for $T>\frac{8}{\eta^2}\log(2d)=\poly(n)$ the above probability is strictly smaller than 1 and so there exists a poly sized list that gives $\|A_L-A_{\mu^*}\|\le\eta$. We can now invoke the Weyl inequality: $|\lambda_k(A+B)-\lambda_k(A)|\le\|B\|_{\infty}$. For us $A:=A_{\mu^*}, B:=A_L-A_{\mu^*}$ and so 
\[
\lambda_{\max}(A_L)\le \lambda_{\max}(A_{\mu^*})+\|A_L-A_{\mu^*}\|_{\infty}\le s+\eta .
\]
Now we apply the standard weak amplification argument. Let
\[
    \Delta:=c-s,\qquad \eta:=\Delta/4,
    \qquad
    \theta:=\frac{c+s+\eta}{2}.
\]
The universal prover sends the list $L=(z_1,\dots,z_T)$ guaranteed
above. The verifier receives from the existential prover $R=\poly(n,1/\Delta)$ quantum proof blocks.
For each block $r\in[R]$, it chooses $z\in L$ uniformly at random, runs
the original verifier with classical proof $z$ on the $r$-th block, and
records the acceptance bit $X_r$. It accepts iff $\hat A:=\frac1R\sum_{r=1}^R X_r\ge \theta$. The completeness and soundness analysis is now the standard weak amplification argument (see \cite{kitaev2002classical}).
\end{proof}

\begin{lemma}\label{lem:qc-USQMA}
    $\qc\subseteq  \SuperQMAExp$.
\end{lemma}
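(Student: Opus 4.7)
The plan is to reduce a generic $\qc$ instance to $\SuperQMAExp$ by using the super-verifier's random index to enumerate over the universal classical proofs of the $\qc$ protocol. Given a $\qc$ verifier $V_x$ acting on a quantum proof $\rho$ on $q(n)$ qubits and a classical proof $c\in\{0,1\}^{c(n)}$, I would set the number of checks to $m = 2^{c(n)}$, identify each index $i\in[m]$ with a classical string $c_i$ via binary expansion, and define the $i$-th verification circuit $V_{x,i}$ to be $V_x$ with $c_i$ hardcoded on the classical-proof register. The triple $(V_{x,i},r_{x,i},s_{x,i})$ is computable in polynomial time from $x$ and $i$, meeting the succinct-access clause of \cref{def:SQMA}.

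With common targets $r_{x,i}=5/6$, tolerances $s_{x,i}=1/6$, and global gap $\epsilon=1/6$, completeness is immediate: the honest $\qc$ witness $\rho$ yields $p_i := \Tr(\Pi^{(1)}V_{x,i}\rho V_{x,i}^\dagger)\in[2/3,1]$ for every $i$, so $|p_i - r_{x,i}|\le 1/6 = s_{x,i}$ with probability $1$ over $i$. For soundness, the $\qc$ no-case provides, for every $\rho$, at least one bad $c^\ast$ with acceptance probability at most $1/3$; the corresponding index $i^\ast$ then satisfies
\[
|p_{i^\ast}-\tfrac{5}{6}|\ge \tfrac{1}{2} > \tfrac{1}{6}+\tfrac{1}{6} = s_{x,i^\ast}+\epsilon,
\]
so at least one check fails by more than $\epsilon$, giving $\Pr_i[|p_i - r_{x,i}|\le s_{x,i}+\epsilon]\le (m-1)/m = 1-1/m$ as required.

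The only step that demands genuine thought is reconciling the two soundness conventions: $\qc$ soundness yields a single adversarial classical proof, whereas $\SuperQMAExp$ nominally requires a $1/m$-fraction of failing checks. The key observation that makes the reduction go through is that, because $m$ is exponentially large in $n$, a lone failing index already saturates the bound $\Pr_i[\cdots]\le 1-1/m$; thus no $\qc$-style amplification is needed, and the constant $2/3$-vs-$1/3$ gap of $\qc$ translates directly into the required $\Omega(1)$ gap $\epsilon$. All remaining checks (polynomial-time generation of each $V_{x,i}$, well-formedness of $r_{x,i},s_{x,i}\in[0,1]$, and $\epsilon\ge 1/\poly(n)$) are then routine.
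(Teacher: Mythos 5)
Your proposal is correct and follows essentially the same route as the paper: hardwire each classical proof $c$ into the $\qc$ verifier, let the super-verifier's $m=2^{c(n)}$ checks range over these hardwired circuits, and note that a single failing index already gives the $1/m$ soundness fraction. The only cosmetic difference is the parameter choice ($r=5/6$, $s=1/6$ versus the paper's $r=1$, $s=1/3$, both with $\epsilon=1/6$), which does not affect the argument.
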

\begin{proof}
    Given a verifier $V$ for a language $L\in \qc $ we construct a super-verifier $V'$ for $L$. Hardwire the classical proof $c$ into the verifier $V$ of $\qc$, let us call it $V_c$. The super-verifier's checks are now parametrized by the classical strings $c\in\{0,1\}^{c(n)}$, i.e., $m=2^{c(n)}$. Construct a super-verifier $V'$ that on input $x$ picks uniformly at random a challenge $c$ and outputs the check $\left(V_c, r=1, s=1/3\right)$. This satisfies the definition of $ \SuperQMAExp $ with $\epsilon=1/6$.\\
    \textbf{Completeness:} Let $ x\in L$ then  $\exists\rho$ such that $\forall c\; \Tr(\Pi^{(1)}V_c\rho V_c^\dagger)\geq2/3$. It is easy to see that the condition $|\Tr(\Pi^{(1)}V_c\rho V_c^{\dagger})-1|\leq1/3$ is  satisfied for all $c$.\\
\textbf{Soundness:} Let $x\notin L$ then $\forall\rho\;\exists c$ s.t. $ \Tr(\Pi^{(1)}V_c\rho V_c^\dagger)\leq1/3$. This means that the condition  $|\Tr(\Pi^{(1)}V_c\rho V_c^{\dagger})-1|>s+\epsilon=1/3+1/6=1/2$ is satisfied for at least one $c$, for each $\rho$, so with probability $\geq \frac{1}{m}$.
\end{proof}
\begin{lemma}\label{lem:USQMA-qc}
  $\SuperQMAExp \subseteq \qc $ .
\end{lemma}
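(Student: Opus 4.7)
The plan is to dualize the reduction of \cref{lem:qc-USQMA}. Let $V=\{(V_{x,i},r_{x,i},s_{x,i})\}_{i\in[m]}$ be a super-verifier witnessing $L\in\SuperQMAExp$ with gap $\epsilon\ge 1/\poly(|x|)$. I will construct a $\qc$-verifier $V'_x$ that takes a quantum proof $\rho$ on $p(|x|)$ qubits (playing the role of the $\SuperQMAExp$ witness) and a classical proof $c=(i,h)$ encoding an index $i\in[m]$ (length $\log m=\poly(|x|)$ bits) together with a side-bit $h\in\{U,L\}$. On input $(\rho,c)$, $V'_x$ decodes $(i,h)$, uses the super-verifier's polynomial-time algorithm to obtain $(V_{x,i},r_{x,i},s_{x,i})$, applies $V_{x,i}$ to $\rho$ (with fresh ancillas) and measures the designated output qubit to get a bit $b$ that is $\mathrm{Bernoulli}(p_i(\rho))$ for $p_i(\rho):=\Tr(\Pi^{(1)}V_{x,i}\rho V_{x,i}^\dagger)$. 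For $h=U$ the verifier runs the upper test $p_i\le r_{x,i}+s_{x,i}$ by accepting when $b=0$, and for $h=L$ the lower test $p_i\ge r_{x,i}-s_{x,i}$ by accepting when $b=1$; a small randomised rejection is used to centre the overall acceptance probability around $1/2$, yielding a single-shot check whose acceptance probability differs by $\Omega(\epsilon)$ between the YES and NO sides of the corresponding inequality.

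For completeness, set $\rho$ to the $\SuperQMAExp$ witness $\rho^*$: by definition $|p_i(\rho^*)-r_{x,i}|\le s_{x,i}$ for every $i$, so both the upper and the lower side-tests pass for every classical proof $c=(i,h)$. For soundness, $\SuperQMAExp$-soundness supplies, for every $\rho$, an index $i^*$ with $|p_{i^*}(\rho)-r_{x,i^*}|>s_{x,i^*}+\epsilon$; the classical prover selects $c=(i^*,h^*)$ with $h^*$ matching the direction of violation, and the corresponding side-test rejects with probability at least $1/2+\Omega(\epsilon)$.

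The main obstacle is that a single shot yields only a $1/\poly(|x|)$ accept-vs-reject gap, whereas \cref{def:qc} is stated with the constant gap $2/3$ vs.\ $1/3$. I would close this by standard amplification: take $k=\poly(|x|)$ copies of the quantum proof, repeat the same classical proof across the $k$ resulting single-shot tests, and threshold the empirical accept-frequency at $1/2$. Completeness follows from $\rho=\rho^{*\otimes k}$ via Chernoff. Soundness is the subtle point, since an adversarial $\rho$ may be entangled across the $k$ blocks: my plan is to have the classical prover pick $(i^*,h^*)$ based on the averaged marginal $\bar\rho:=\tfrac1k\sum_\ell\rho^{(\ell)}$ by applying $\SuperQMAExp$-soundness to $\bar\rho$, so that the per-round acceptance has expected value at most $1/2-\Omega(\epsilon)$, and then to use a Marriott--Watrous-style single-copy amplification of the selected side-test (equivalently, a Markov bound combined with the standard amplification of a fixed-side threshold check) to push the reject probability above $2/3$.
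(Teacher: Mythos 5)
Your construction is in essence the paper's own: the quantum proof is $k=\poly(n)$ copies of the $\SuperQMAExp$ witness, the classical ($\forall$) proof names a check, the verifier estimates the acceptance statistic empirically over the $k$ registers, completeness uses Hoeffding on the honest product witness $\rho^{*\otimes k}$, and soundness against entangled witnesses is routed through the averaged marginal $\bar\rho=\frac1k\sum_\ell\rho^{(\ell)}$ plus a Markov-type bound, followed by an appeal to amplification --- exactly the route the paper takes (with $k=n/\epsilon^2$), mirroring Theorem~4.3 of Ref.~\cite{aharonovLatticeProblemQuantum2003}. The only substantive deviation is your single-shot $(i,h)$ side-tests with randomized recentering, versus the paper's direct two-sided acceptance rule $|\hat A-r_{x,i}|\le s_{x,i}+\tfrac{\epsilon}{2}$ on the empirical mean $\hat A$.

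Two cautions. First, a parameter slip in your amplification: after recentering, the YES promise only guarantees per-round acceptance $\ge 1/2$ (it equals $1/2$ exactly when $p_i(\rho^*)=r_{x,i}+s_{x,i}$), so thresholding the empirical accept-frequency at exactly $1/2$ yields completeness only about $1/2$, not $2/3$. The threshold must sit an $\Omega(\epsilon)$ below the YES value and an $\Omega(\epsilon)$ above the NO value --- which is precisely what the paper's window $s_{x,i}+\tfrac{\epsilon}{2}$ accomplishes while also avoiding the recentering edge cases (e.g.\ $r_{x,i}+s_{x,i}\ge 1$). Second, on what you rightly call the subtle point: for an adversarially entangled or mixed witness the accept-frequency need not concentrate, so choosing $(i^*,h^*)$ against $\bar\rho$ and applying Markov gives only rejection probability $\Omega(\epsilon)$ for that fixed classical proof, i.e.\ a $1/\poly$ completeness--soundness gap; your final sentence (Marriott--Watrous-style amplification of the fixed side-test to ``push the reject probability above $2/3$'') asserts rather than establishes the boost to the constants of \cref{def:qc}. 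This is, however, exactly the level of detail at which the paper itself leaves that step (``after we fix $i$ we can run a standard $\QMA$ amplification for $V_i$''), so apart from the threshold placement your proposal matches the paper's argument, including where it is terse.
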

\begin{proof}[Proof sketch]
The $\forall$-prover names a check $i$ that violates the super-verifier condition, and the $\qc$ verifier estimates the acceptance probability of $V_i$ by running it on $k=O(n/\epsilon^2)$ proof registers and checking whether the empirical average lies within $s_i+\epsilon/2$ of $r_i$. Completeness follows by Hoeffding for honest $k$-copy witnesses, while soundness follows from the same Markov argument as in \cite{aharonovLatticeProblemQuantum2003} , which applies even when the $k$ registers are entangled. By \Cref{lem:qc-amplification}, we can amplify to the standard constant-gap definition of $\qc$.
\end{proof}

\begin{corollary}\label{cor:qc=SQMA_exp}
    $\qc=  \SuperQMAExp$.
\end{corollary}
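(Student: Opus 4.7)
The statement $\qc = \SuperQMAExp$ is an immediate consequence of the two lemmas that directly precede it, namely \Cref{lem:qc-USQMA} establishing $\qc \subseteq \SuperQMAExp$ and \Cref{lem:USQMA-qc} establishing $\SuperQMAExp \subseteq \qc$. My plan is therefore simply to invoke both containments and conclude equality; there is no further obstacle to overcome.

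More concretely, the forward inclusion $\qc \subseteq \SuperQMAExp$ exploits the fact that the universal (classical) proof $c$ in a $\qc$ protocol ranges over an exponential-size set $\{0,1\}^{c(n)}$, which is precisely the right ``index set'' for a super-verifier with exponentially many checks. By hardwiring each $c$ into the $\qc$ verifier $V$, one obtains a super-verifier whose $i$-th check is the circuit $V_c$ with targets $(r,s)=(1,1/3)$; YES-instances yield a witness that passes every check, while NO-instances force at least one $c$ (hence at least a $1/m$ fraction of the checks) to fail with the required margin. The reverse inclusion $\SuperQMAExp \subseteq \qc$ uses the classical proof to specify which of the exponentially many super-verifier checks to run and then amplifies via $k = n/\epsilon^2$ parallel copies of the quantum witness, appealing to a Hoeffding bound for completeness and a Markov-type argument for soundness exactly in the style of Aharonov--Regev~\cite{aharonovLatticeProblemQuantum2003}.

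Hence the corollary follows by writing
\[
\qc \;\subseteq\; \SuperQMAExp \;\subseteq\; \qc,
\]
where the first inclusion is \Cref{lem:qc-USQMA} and the second is \Cref{lem:USQMA-qc}. No additional technical work is required, and there is no ``hard part'' beyond what was already handled in the two preceding lemmas.
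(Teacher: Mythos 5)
Your proof is correct and matches the paper's own argument, which likewise derives the corollary directly from \Cref{lem:qc-USQMA} and \Cref{lem:USQMA-qc}. The additional summary of the two inclusions is accurate but not needed beyond citing the lemmas.
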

\begin{proof}
 Follows from \cref{lem:qc-USQMA,lem:USQMA-qc}.   
\end{proof}

An important variant of $\ObsConExp$, because of its connection with a triply efficient classical shadow protocol for all the $n$-bit Pauli observables ~\cite{kingTriplyEfficientShadow2025}, is when we have a constant gap parameter:
\begin{definition}
    [$\ObsConExp^{\Theta(1)}$]Same as \cref{def:ObsConexp} with $\beta-\alpha=\Theta(1)$. 
\end{definition}
It is easy to see that even for a constant gap we still have $\SuperQMAExp$-completeness:
\begin{lemma}\label{l:obsconconstant}
  $\ObsConExp^{\Theta(1)}$ is $\SuperQMAExp$-complete.  
\end{lemma}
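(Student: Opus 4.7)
Containment ($\ObsConExp^{\Theta(1)} \in \SuperQMAExp$) is immediate from \Cref{prop: CSV_{exp} in}, since the verifier constructed there treats $\beta-\alpha$ as part of the input and applies verbatim when the gap is constant.

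For $\SuperQMAExp$-hardness, my plan is to first reduce any $\SuperQMAExp$ language with polynomial gap $\epsilon = 1/\poly(n)$ to a $\SuperQMAExp$ instance with constant gap, and then invoke \Cref{prop:CSV_{exp} hard}: inspection of its construction yields $\beta-\alpha = \tau\epsilon' = \Theta(\epsilon^2)$, which is automatically constant when the input gap is constant. The amplification itself is the main technical step.

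For the amplification, I would use classical parallel repetition on a multi-copy witness. Concretely, I ask the prover to provide $\sigma$ on $Kp(n)$ qubits with $K = \poly(n)$, intended as $\rho^{\otimes K}$. For each original check $(V_i, r_i, s_i)$ the amplified check $\tilde V_i$: (i) applies a uniformly random permutation of the $K$ subregisters, (ii) measures the two-outcome observable $V_i^{\dagger}\Pi^{(1)}V_i$ on $k = O(\log(m)/\epsilon^2)$ of them, (iii) forms the empirical average $\hat p_i$ of the outcomes, and (iv) accepts iff $|\hat p_i - r_i| \le s_i + \epsilon/2$. Setting $\tilde r_i = 1$, $\tilde s_i = 2^{-\poly(n)}$, and $\tilde \epsilon = 1/2$ produces the desired constant-gap $\SuperQMAExp$ instance.

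Completeness follows from a direct Hoeffding bound on the honest witness $\rho^{\otimes K}$. The main obstacle is soundness when the prover sends a non-product $\sigma$: after step (i) the joint distribution of the $k$ measured outcomes is $K$-exchangeable, and by the classical de Finetti theorem (Diaconis--Freedman) lies within $O(k^2/K)$ total-variation distance of an i.i.d.\ mixture $\int \mathrm{Bern}(p)^k\,\diff\mu(p)$ whose mean is $p_i(\bar\rho)$ for the single-register marginal $\bar\rho$. The hypothesized NO guarantee applied to $\bar\rho$ produces some $i^*$ with $|p_{i^*}(\bar\rho) - r_{i^*}| \ge s_{i^*} + \epsilon$, after which Hoeffding on the mixture forces $\tilde V_{i^*}$ to reject with probability $\ge 1 - 2^{-\poly(n)}$. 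Choosing $K = \Theta(k^2/\epsilon) = \poly(n)$ keeps the de Finetti error comfortably below $\epsilon/4$, and feeding the resulting constant-gap instance into \Cref{prop:CSV_{exp} hard} yields $\beta-\alpha = \tilde\epsilon^2/8 = 1/32 = \Theta(1)$, completing the reduction.
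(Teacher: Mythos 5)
Your containment argument and your overall plan (amplify each check to constant gap, then reuse \cref{prop:CSV_{exp} hard}, whose mapping turns a constant $\epsilon$ into a constant $\beta-\alpha=\Theta(\epsilon^2)$ while respecting succinct access) coincide with the paper's route, which likewise invokes a ``standard QMA-style amplification'' of each $V_i$. The gap is in the soundness of the specific amplifier you propose. The de Finetti step only tells you the $k$ measured bits are close in total variation to a mixture $\int \mathrm{Bern}(p)^{k}\,\diff\mu(p)$ whose \emph{mean} is $p_{i^*}(\bar\rho)$; Hoeffding then concentrates the empirical average around the \emph{component} parameter $p$, not around the mixture mean. Knowing that the mean is $\ge s_{i^*}+\epsilon$ away from $r_{i^*}$ does not prevent a constant (or even $1-o(1)$) fraction of the mixing measure $\mu$ from sitting inside the acceptance window, so ``reject with probability $\ge 1-2^{-\poly(n)}$'' does not follow. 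Concretely, take a NO instance with two states $\rho_1,\rho_2$ failing disjoint sets of checks by a large margin (such instances exist, e.g.\ the observable $Z$ with the two targets $+1$ and $-1$), and let the prover send $\sigma=\tfrac12\rho_1^{\otimes K}+\tfrac12\rho_2^{\otimes K}$: every amplified check accepts with probability $\tfrac12\pm\negl(n)$, while your parameters ($\tilde r_i=1$, $\tilde s_i=2^{-\poly(n)}$, $\tilde\epsilon=1/2$) demand acceptance below $\tfrac12-\tilde s_i$ for some check, which sits exactly at this boundary and is not guaranteed. Worse, if the NO instance admits many states each failing only a few, spread-out checks (nothing in \Cref{def:SQMA} forbids this), a mixture $\sum_j w_j\rho_j^{\otimes K}$ can drive \emph{every} amplified check's acceptance probability to $1-o(1)$, so no constant choice of $\tilde\epsilon$ rescues this threshold-on-$k$-copies amplifier against correlated witnesses.

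So the hardness half of your proposal is not established as written. A repair that stays inside the paper's toolkit and avoids entangled-witness amplification altogether: given $L\in\SuperQMAExp$, use \cref{lem:USQMA-qc} to place $L$ in $\qc$, then apply the construction of \cref{lem:qc-USQMA}, which outputs a super-verifier with \emph{constant} parameters ($r=1$, $s=1/3$, $\epsilon=1/6$); feeding that super-verifier into \cref{prop:CSV_{exp} hard} directly yields an $\ObsConExp$ instance with constant gap. (This is not circular, since those two lemmas do not rely on the present one.) The paper's own sketch is admittedly terse about what ``standard QMA amplification of $V_i$'' means, and your attempt usefully exposes exactly where care is needed, but the de Finetti/Hoeffding soundness claim is the step that fails.
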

\begin{proof}
Containment follows exactly as in \cref{prop:obsconexp-in-superqmaexp}. For hardness, use
$\SuperQMAExp=\qc$ (\cref{cor:qc=SQMA_exp}), and the amplification theorem for $\qc$
(\cref{lem:qc-amplification}), so that any
$L\in\SuperQMAExp$ has a $\qc$ verifier with completeness/soundness
$2/3,1/3$. For each classical challenge $c$, let $M_c$ be the induced
acceptance POVM of the amplified verifier with $c$ hardwired, and
output the $\ObsConExp^{\Theta(1)}$ instance $\{O_c:=M_c,\; y_c:=1,\; \alpha:=1/3,\; \beta:=2/3\}$
In the YES case, some $\rho$ satisfies $\Tr(M_c\rho)\ge2/3$ for all $c$,
so $|\Tr(O_c\rho)-1|\le1/3$. In the NO case, for every $\rho$ some $c$
satisfies $\Tr(M_c\rho)\le1/3$, so $|\Tr(O_c\rho)-1|\ge2/3$. Thus the
constructed instance has constant gap.
\end{proof}

We conclude this section by showing an easy lower and upper bound for this new class.

\begin{proposition}\label{prop:qc}
    $\QMA\subseteq \qc\subseteq \QS\subseteq \PSPACE$.
\end{proposition}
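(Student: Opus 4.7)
The plan is to handle the three inclusions in order; the first two are essentially quantifier manipulations, while the third invokes known results about the complexity of short quantum games.

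For $\QMA \subseteq \qc$, given a $\QMA$-verifier $V$ I would define a $\qc$-verifier $V'$ that simply discards its universal classical proof $\ket{c}$ and runs $V$ on the existential quantum proof $\rho$. Completeness is inherited from $\QMA$ by using the same quantum witness, since the acceptance probability is independent of $c$. Soundness is automatic because $\QMA$-soundness already precludes any $\rho$ from making $V$ accept with probability above $1/3$, so every $c$ vacuously witnesses rejection.

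For $\qc \subseteq \QS$, the key observation is that a universal classical proof is a special case of a universal quantum proof. Given a $\qc$-verifier $V$, I would build a $\QS$-verifier $V'$ that first measures its universal quantum proof $\rho_2$ in the computational basis, obtains a string $c$, and then simulates $V(\rho_1,\ket{c})$ on the existential quantum proof $\rho_1$. For completeness, take $\rho_1$ to be the $\qc$ witness; since the $\qc$ completeness condition holds for every $c$, $V'$ accepts with probability $\geq 2/3$ regardless of which $c$ the measurement yields, hence for any $\rho_2$. For soundness, $\qc$-soundness provides for each $\rho_1$ a rejecting classical string $c^{\star}$; the $\QS$ adversary simply sends $\rho_2 = \ket{c^{\star}}\!\bra{c^{\star}}$, which deterministically yields $c^{\star}$ upon measurement and hence rejects with probability $\geq 2/3$.

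For $\QS \subseteq \PSPACE$, I would use the fact that the acceptance probability $p(\rho_1,\rho_2) = \Tr\!\bigl(\Pi^{(1)} V_n(\rho_1\otimes\rho_2\otimes\ket{0}\!\bra{0}^{\otimes q})V_n^{\dagger}\bigr)$ of the $\QS$-verifier is bilinear in the density operators. Von Neumann's minimax theorem (applicable since density matrices form a compact convex set and $p$ is concave-convex) gives $\max_{\rho_1}\min_{\rho_2} p = \min_{\rho_2}\max_{\rho_1} p$, and this value is the value of a one-round quantum refereed game on two provers. Such values lie in $\PSPACE$ via parallel SDP / matrix multiplicative weights machinery (in the spirit of Jain-Upadhyay-Watrous and Jain-Yao's parallel SDP solvers used to establish $\mathsf{QIP}=\PSPACE$), with the promise gap $c-s\geq 1/\poly(n)$ preserved by the approximation guarantees of those solvers.

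The main obstacle is the third inclusion: the first two reduce to structural quantifier manipulation, but $\QS\subseteq\PSPACE$ rests on non-trivial polynomial-space solvers for the exponentially-sized SDPs arising from short quantum games, and on carefully verifying that the promise-gap decision version (rather than a high-precision optimization) is what is actually required.
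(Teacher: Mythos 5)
Your proposal is correct and takes essentially the same route as the paper: the first inclusion ignores the universal classical proof, the second measures the universal quantum proof in the computational basis and simulates the $\qc$ verifier, and the third identifies $\QS$ with the value of a one-turn quantum refereed game solvable in $\PSPACE$ by parallel SDP/multiplicative-weights methods. The paper handles this last step purely by citation (via $\QS=\cfont{QRG(1)}$ from \cite{gharibianQuantumGeneralizationsPolynomial2018} and its $\PSPACE$ containment in \cite{JW08}), which is the same argument you sketch.
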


\begin{proof}
    The first inclusion follows since the 
    $\qc$ verifier can simply ignore the proof from the $\forall$ 
    prover and run the $\QMA$ verifier. The second follows since the verifier of $\QS$ can measure the $\forall$ proof in the computational basis, essentially rendering the quantum proof to a classical one, or rather a distribution of classical ones, and then simulate the verifier 
    of $\qc$. As for the third inclusion it 
    is proven in 
    Ref.\ \cite{gharibianQuantumGeneralizationsPolynomial2018}. 
    The proof was based on the observation that $\QS=\cfont{QRG(1)}$, where $\cfont  {QRG(1)}$ and its containment in $\PSPACE$ are presented in Ref.\ \cite{JW08}. 
\end{proof}

\section{Complexity of specific protocol classical shadows}\label{scn:HKPhardness}

The QMA-completeness of $\ObsConPoly$, and so of $\CSV_{\poly}$ (the more involved definition of the problem that will come in handy on this section (\cref{def:CSV})), demonstrates the problem's fundamental difficulty. We now further explore the complexity of this problem by casting it on specific, structured measurement protocols. We show that the hardness persists for two such protocols, namely the HKP with a local Clifford ensemble protocol, given in Ref.\ \cite{HKP20} and the MYZ which is its qudit generalization, given in
Ref.\ \cite{MYZ25}. Additionally we give an efficient algorithm result for the HKP, MYZ protocols with global Clifford ensemble.
\begin{remark}\label{rem:CSVp-to-ObsConp}
    For a fixed shadow protocol $P$, the reduction $\CSV_P\le \ObsCon_P$ is immediate by keeping the set of observables the same and setting $y_i:=A(S,i)$; the converse direction is protocol dependent and need not be trivial.
\end{remark}
\subsection{HKP classical shadows}

First we focus on the Huang, Kueng and Preskill protocol using local Clifford measurements \cite{HKP20} (for details on the protocol check \cref{par:HKP}). We will call the $\CSV$ problem that is based on this protocol $\CSV_{\text{HKP}}$.

\begin{definition}[$\CSV_{\cfont{HKP}}$]\label{def:hkpcsv} The definition is the same as \cref{def:CSV} only now our classical shadow has the structure dictated by the HKP local Clifford measurement protocol. That means the following:
    \begin{itemize}
        \item The shadow $S$ consists of $L=\poly(n)$ strings, $\{s_i\}_{i=1}^L$, each of which encodes the Pauli-basis measurement and the measurement outcome of each round of the protocol. More formally, each string will be of the form $(P,b)_1,\dots,(P,b)_n$ with $P\in\{X,Y,Z\}$ and $b\in\{-1,1\}$, where $(P,b)_i$ denotes the basis and the measurement outcome of the $i$-th qubit.
        \item $O$ is a set of $m=\poly(n)$ $k$-local observables on $n$-qubits, with $k=O(1)$.
        \item The recovery algorithm applies the inverse channel to extract the snapshot operators $\hat{\eta}$ from $S$ and aggregates estimates via the \emph{Median of Means} (MoM) technique, subject to the range convention of \Cref{rem:range-convention}.
    \end{itemize}
    
\end{definition}
We sometimes speak of the snapshot operator, $\hat{\eta}$, associated to a stored string $(P,b)$; it is not stored explicitly but computed in recovery. The map $s\rightarrow \hat{\eta}$ is a bijection onto the set of achievable snapshots, so storing strings or storing
snapshots are equivalent representations, hence we freely use “strings” and “snapshots” interchangeably when no confusion can arise.

\begin{definition}[$\iDLH$]
  Given a local Hamiltonian on a chain of $n$ qu-$d$-its $H = \sum_{i=1}^{n-1} H_{i,i+1}$ and thresholds $\alpha,\beta$ with $\beta-\alpha\ge 1/\poly(n)$, decide
  \begin{itemize}
    \item YES: $\lmin(H) \le \alpha$.
    \item NO: $\lmin(H) \ge \beta$.
  \end{itemize}
\end{definition}
\begin{definition}[$\iDCLDM$]
  Given local density matrices $\sigma_{i,i+1}$ for $i\in[n-1]$ for a system of $n$ qu-$d$-its and parameters $\alpha, \beta$ with $\alpha\le\negl (n),\;\beta-\alpha\ge1/\poly(n)$, decide
  \begin{itemize}
    \item YES: $\exists\rho\in\density(d^n)\;\forall i\in[n-1]\colon\norm{\Tr_{\overline{i,i+1}}(\rho) - \sigma_{i,i+1}}_{\Tr} \le \alpha$.
    \item NO: $\forall\rho\in\density(d^n)\;\exists i\in[n-1]\colon\norm{\Tr_{\overline{i,i+1}}(\rho) - \sigma_{i,i+1}}_{\Tr} \ge \beta$.
  \end{itemize}
\end{definition}
\begin{theorem}\label{Thm:iDCLDM=QMA}
    $\iDCLDM$ on 8-level qudits is $\QMA$-complete.
\end{theorem}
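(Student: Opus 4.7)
The plan is to establish containment in $\QMA$ by a standard verification protocol, and then prove $\QMA$-hardness by a many-one reduction from the $\iDLH$ problem on $8$-level qudits, which is $\QMA$-complete by Hallgren, Nagaj and Narayanaswami \cite{HNN13}. The engine of the hardness direction is the locally simulatable history state technique of Broadbent and Grilo \cite{BG22}, which promoted the QMA-hardness of $\CLDM$ from Liu's Turing reduction to a Karp (many-one) reduction. Our task is to marry that technique with the $1$D geometry and the local dimension $d=8$ of HNN13.

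For containment, I would argue as follows. Given an $\iDCLDM$ instance with targets $\sigma_{i,i+1}$, a QMA prover sends the consistent global state $\rho$ on the chain of $n$ qu-$8$-its. The verifier picks a random $i\in[n-1]$, discards all qudits outside $\set{i,i+1}$, and estimates the trace distance between the obtained reduction and $\sigma_{i,i+1}$, e.g.\ via a SWAP test between $\Tr_{\overline{i,i+1}}(\rho)$ and a freshly prepared copy of $\sigma_{i,i+1}$ (which is classically described and of constant qudit size). Standard completeness/soundness amplification, together with the $1/\poly(n)$ promise gap, puts $\iDCLDM$ in $\QMA$.

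For hardness, start from an $\iDLH$ instance $H=\sum_{i=1}^{n-1}H_{i,i+1}$ on qu-$8$-its with thresholds $\alpha<\beta$. Using HNN13, the YES witness is (essentially) a low-energy state, which in the standard circuit-to-Hamiltonian construction is a history state of some $\QMA$ computation. Applying the Broadbent--Grilo locally simulatable upgrade to this history-state construction, we can, in deterministic polynomial time, compute an explicit description of the $2$-local marginals $\sigma_{i,i+1}^{\mathrm{hist}}$ of the intended history state. Output these $\sigma_{i,i+1}^{\mathrm{hist}}$ as the $\iDCLDM$ instance. YES-case completeness is immediate: the history state itself witnesses consistency with all marginals. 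For soundness, if some $\rho$ were $\alpha'$-close to every $\sigma_{i,i+1}^{\mathrm{hist}}$, then by Hölder and summing over the $n-1$ edges we would get $\Tr(H\rho)\le \Tr(H|\mathrm{hist}\rangle\langle\mathrm{hist}|)+O(n\alpha'\cdot\norm{H_{i,i+1}}_\infty)$, which with an appropriate choice of $\alpha$ versus $\beta$ (tracking the polynomial gap) contradicts the NO promise $\lmin(H)\ge\beta$. Setting the $\iDCLDM$ thresholds to an inverse-polynomial rescaling of $\alpha,\beta$ then yields the claim.

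The main obstacle I anticipate is bookkeeping, not conceptual novelty. Specifically, one must verify that the Broadbent--Grilo locally simulatable construction, when applied to the HNN13 1D circuit-to-Hamiltonian encoding, preserves both the $1$D nearest-neighbor geometry and the local dimension $d=8$. BG22 as written is $O(1)$-local but not overtly $1$D, and its clock register may inflate $d$; thus I would either fold the clock into the existing $8$-dimensional qudit alphabet of HNN13 using its particular tiling/propagation gadgets, or re-run the BG22 argument on top of HNN13's Hamiltonian term-by-term so that local simulability is inherited while the geometric and dimensional constraints are left intact. The remaining work is tracking how the $1/\poly(n)$ LH gap degrades under the locally simulatable reduction, and choosing $\alpha,\beta$ for $\iDCLDM$ so that $\beta-\alpha\ge 1/\poly(n)$ is maintained.
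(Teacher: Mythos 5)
Your overall route is the same as the paper's: combine the Broadbent--Grilo locally simulatable history-state machinery with the HNN13 1D, $d=8$ circuit-to-Hamiltonian construction, and you correctly identify that the real work is re-running BG on top of HNN's geometry and alphabet. However, two points in your hardness sketch need repair. First, the framing ``reduce from $\iDLH$'' is off: a generic $\iDLH$ instance is just a Hamiltonian and carries no circuit or history-state structure, so ``the YES witness is essentially a history state'' is false for it. The correct reduction (and the paper's) starts from an arbitrary $\QMA$ language, compiles its verifier through BG's simulatable compiler and then through HNN's 1D translation, and uses the resulting HNN Hamiltonian only \emph{inside the soundness argument}, via its NO-case spectral gap.

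Second, and more substantively, your soundness step $\Tr(H\rho)\le \Tr\bigl(H\ketbra{\mathrm{hist}}{\mathrm{hist}}\bigr)+O(n\alpha')$ presupposes that a globally consistent, low-energy history state exists in the NO case. It does not: if the reduction output the true marginals of the history state on the best witness, that history state itself would witness consistency and the map would send NO instances to YES instances. The entire point of the BG machinery is that the deterministically computed marginals $\widehat X(x,S)$ are \emph{locally faked} (ancillas forced to $\ket{0}$, output forced to $\mathrm{Enc}(\ketbra{1}{1})$) so that $\Tr\bigl(H_i\,\widehat X(x,S_i)\bigr)=0$ for \emph{every} local term --- including the output penalty --- for every $x$, YES or NO, while being globally inconsistent in NO cases. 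With that property the correct estimate is $\Tr(H\tau)=\sum_i\Tr\bigl(H_i(\Tr_{\overline{S_i}}(\tau)-\widehat X(x,S_i))\bigr)\le L\alpha$, which contradicts $\lmin(H)\ge \gamma/\poly$. Establishing this zero-local-energy property within HNN's marker/work formalism (for the penalty, propagation, input, and output terms) is exactly the content of the paper's Lemmas \ref{lem:SimSnap}--\ref{lem:SimGlobal}, and it is the missing ingredient your sketch would have to supply; without it the NO-case argument fails. (Your containment protocol also needs a small fix: a SWAP test estimates overlap, not trace distance; since the reduced states are constant-dimensional one instead uses the standard $\CLDM\in\QMA$ argument, as in BG's Lemma 3.3.)
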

\begin{proof}[Proof sketch]
The high level idea is to combine the results of Ref.\ \cite{BG22}, where they prove that the $\CLDM$ problem is $\QMA$-complete under Karp reductions via the machinery of simulatable codes and of Ref.\ \cite{HNN13} where they show that $\iDLH$ on a chain of 8-level qudits is still $\QMA$-complete. For details, see \cref{sec:B}.
\end{proof}
\begin{theorem}\label{thm:1DCLDM-CSV}
  $\iDCLDM_{d=2^\ell}\le\CSV_{\text{HKP}}$.
\end{theorem}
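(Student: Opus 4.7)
The plan is a many-one reduction that encodes a 1D CLDM instance on qudits of dimension $d=2^\ell$ (QMA-hard by \cref{Thm:iDCLDM=QMA}) into a bona fide HKP local-Clifford shadow on $n\ell$ qubits. I decompose each qudit $q_i$ into a block $T_i$ of $\ell$ qubits and take the observable set to consist of operators $O_i$ supported on the pair $(T_i,T_{i+1})$, with targets $y_i$, so that the CSV condition $|\Tr(O_i\rho)-A(S,i)|\le\alpha$ faithfully encodes the trace-norm discrepancy between $\Tr_{\overline{i,i+1}}(\rho)$ and $\sigma_{i,i+1}$. For $\ell=3$ these are $6$-local on a spatially sparse hypergraph, matching the claim in \cref{thm:csvhkpinformal}. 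The core algebraic fact driving the construction is that the HKP inverse channel on $2\ell$ qubits factorises as $\mathcal M^{-1}=\bigotimes_j D_{1/3}^{-1}$ and is \emph{exact} in expectation: for any prescribed $2\ell$-qubit operator of bounded norm there is a distribution $p$ over single-qubit Pauli-eigenvector labels $s\in\mathcal S_{i,i+1}$ such that the snapshot mean $\sum_s p_s\,\hat\eta_s$ equals it. Per edge, matching $\sigma_{i,i+1}$ thus reduces to finding a distribution over a constant-size alphabet; the difficulty is gluing these edge-distributions into one global HKP shadow.

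I first set up a polynomial-size linear program with nonnegative variables $p^{(i,i+1)}_s$. The expectation constraints enforce $\sum_s p^{(i,i+1)}_s\hat\eta_s = \sigma_{i,i+1}$, the normalisation constraints enforce $\sum_s p^{(i,i+1)}_s = 1$, and the consistency constraints equate the single-qudit marginal of $p^{(i,i+1)}$ on $T_{i+1}$ with that of $p^{(i+1,i+2)}$, so neighbouring edges agree on their shared block. A YES instance of $\iDCLDM$ with global state $\rho$ yields a feasible LP solution by taking $p^{(i,i+1)}$ to be the HKP measurement distribution on $\rho$ restricted to $(T_i,T_{i+1})$; conversely, feasibility produces edge-wise expectations consistent with $\sigma_{i,i+1}$, which upon translation to CSV guarantees a global state matching all edge marginals, giving back a $\iDCLDM$ witness.

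Next, to turn the LP into an actual HKP shadow I round to an integer program at a common denominator $L=\poly(n)$: integer counts $N^{(i,i+1)}_s\in\mathbb{Z}_{\ge 0}$ with $\sum_s N^{(i,i+1)}_s=L$ and equal $T_{i+1}$-marginals across adjacent edges. Integer programs are NP-hard in general, but the 1D chain makes this instance tractable via dynamic programming: I sweep $i=1,\dots,n-1$ and maintain, for each candidate integer marginal on $T_i$ (polynomially many in $L$ since $\ell=O(1)$), a witness of feasible integer edge-counts on the prefix. Given these counts, for every interior block $T_{i+1}$ the right-marginal multiset of edge $i$ equals the left-marginal multiset of edge $i+1$; fixing any perfect matching between the two chains the edges into $L$ length-$n\ell$ global strings, which constitute the shadow $S$. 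The recovery algorithm $A$ is the usual HKP median-of-means estimator, and by construction $A(S,i)$ equals $\Tr(O_i\sigma_{i,i+1})$ up to $O(1/L)$ error.

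The main obstacle is controlling discretisation so that the induced error fits beneath the $1/\poly(n)$ gap of $\iDCLDM$. I would choose $L$ as a sufficiently large polynomial in $n$ and in the inverse gap, and verify that the LP constraint matrix on a path is ``flow-like'' enough that any rational feasible solution with denominator dividing $L$ lifts to an integer solution, so that the DP always succeeds when the LP does. A secondary technical step is to read off the $O(1)$-local observables $O_i$ from $\sigma_{i,i+1}$ via its Pauli expansion so that the $\CSVhkp$ check is equivalent (up to a known constant factor) to the trace-norm check of $\iDCLDM$; since each $\sigma_{i,i+1}$ acts on $2\ell=6$ qubits, this decoding preserves sparsity and locality, yielding $6$-local observables on a spatially sparse hypergraph arranged essentially along a line.
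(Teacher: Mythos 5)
Your proposal is correct and follows essentially the same route as the paper's proof: decompose each qu-$d$-it into $\ell$ qubits, set up a linear program over local snapshot distributions per edge with marginal-consistency constraints, round to an integer program with common denominator $L$ solved by dynamic programming along the chain, stitch the local shadows into $L$ global strings via matchings, and take all Paulis on adjacent blocks as observables with median-of-means recovery. The only cosmetic difference is that you insist on exact lifting of rational LP solutions to integers, whereas the paper sidesteps this by relaxing the expectation constraint to a trace-norm tolerance $\epsilon$ in the integer program, which the DP then searches directly.
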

\begin{proof}
  Here we assume qudits of dimension $d=2^\ell \in O(1)$, so that we can treat each qudit as $\ell$ qubits.
  We use the HKP shadow protocol with an ensemble of local Clifford operators, so that we end up applying random Pauli measurements,  i.e., the product of $\ell n$ single-qubit Paulis.
  Let $\sigma_{1,2},\dots,\sigma_{n-1,n}\in\density(d^2)$ be the input density matrices. We now describe the reduction from the $\sigma_{i,i+1}$ to a shadow.
  Let $\set{\hat\eta_j}_{j\in[m]}$ be the set of all $m=6^\ell$ possible snapshots on $\ell$ qubits. 
  For clarity, we are measuring each qubit in one of Pauli $X$, $Y$, or $Z$ uniformly at random, therefore the set of possible snapshots on a single qubit are of form $3\ket{\psi}\bra{\psi}-I$ for $\ket{\psi}$ an eigenvector of $X$, $Y$, or $Z$. 
  In turn, a snapshot on a given qu\emph{d}it is a tensor product of $\ell$ such terms.\\
  Suppose that there exists a state $\rho\in\density(d^n)$ whose local
marginals $\tau_{i,i+1}:=\Tr_{\overline{i,i+1}}(\rho)$ satisfy
$\|\tau_{i,i+1}-\sigma_{i,i+1}\|_{\Tr}\le \alpha_{\CLDM}$.
Since $\Exp[\hat{\rho}]=\rho$, we have
\begin{align}\label{eqn:local}
    \tau_{i,i+1}
    = \Tr_{\overline{i,i+1}}(\rho)
    = \Tr_{\overline{i,i+1}}(\Exp[\hat{\rho}])
    %= \Tr_{\overline{i,i+1}}\Bigl(\sum_{j}p_j\hat\rho^{(j)}\Bigr)
    = \sum_jp_j\Tr_{\overline{i,i+1}}(\hat\rho^{(j)})
    = \Exp[\hat\rho_{i,i+1}],
\end{align}
  where $\set{\hat\rho^{(j)}}_j$ is a collection of all possible snapshots and $p_j$ is the probability of obtaining that snapshot from the shadow protocol. Note that with our choice of shadow protocol, we have $\hat\rho = \hat\rho_{1}\otimes\dotsm\otimes\hat\rho_n$, where each $\hat\rho_i$ is a local snapshot on $\ell$-qubits.
 Thus, by \Cref{eqn:local}, the ideal local snapshot distribution reconstructs
$\tau_{i,i+1}$ exactly, and hence reconstructs the target
$\sigma_{i,i+1}$ up to the original $\CLDM$ completeness error.
Equivalently, for each edge there are probabilities $\{p_{i,j,k}\}$ such that
$\sum_{j,k}p_{i,j,k}\hat\eta_j\otimes\hat\eta_k$ is within
$\alpha_{\CLDM}$ of $\sigma_{i,i+1}$ in trace norm, for $i$ the left qudit index in a neighboring pair of qudits, and $j$ and $k$ indexing the possible snapshots on the left and right qudit, respectively.
  
  The preceding discussion shows that, at the level of the ideal HKP snapshot distribution, a globally consistent state induces compatible local snapshot distributions on every neighboring pair. Since the reduction must output a finite classical shadow, we work directly with integer counts rather than real probabilities. Let $L=\poly(n)$ be chosen sufficiently large, and let $n_{i,j,k}$ denote the number of times the two-qudit snapshot $\hat\eta_j\otimes\hat\eta_k$ appears on edge $(i,i+1)$. We impose exact overlap-count constraints, which will allow the local shadows to be stitched into global strings, and an approximate reconstruction constraint with tolerance $\epsilon$. The tolerance $\epsilon$ is chosen to absorb this $\CLDM$ error together with the finite-count approximation needed to represent the ideal snapshot
distribution by $L$ integer counts.
  \begin{subequations}\label{eq:marginal-system2}
  \begin{align}
    \norm*{\sigma_{i,i+1} - \frac1L\sum_{j,k\in[m]} n_{i,j,k}\hat\eta_j\otimes\hat\eta_k}_{\Tr} &\le \epsilon &&\forall i\in[n-1],\\
    \sum_{j\in[m]}n_{ijt} &= \sum_{k\in[m]}n_{i+1,tk} && \forall i\in[n-2]\;\forall t\in[m],\label{eq:marginal-system2:consistency}\\
    n_{i,j,k} &\in \mathbb Z_{\ge0} &&\forall i\in[n-1]\;\forall j\in[m]\;\forall k\in[m],\label{eq:marginal-system2:ge0}\\
    \sum_{j,k\in[m]} n_{i,j,k} &= L&&\forall i\in[n-1].\label{eq:marginal-system2:sum1}
  \end{align}
  \end{subequations}
  If the $\iDCLDM$ instance is YES, then for sufficiently large $L=\poly(n)$ the system in \Cref{eq:marginal-system2} is feasible. Indeed, drawing $L$ HKP snapshots from a consistent state $\rho$ gives edge counts satisfying the overlap constraints exactly, and by the HKP concentration bound the corresponding empirical edge averages are within the allowed tolerance. If \cref{eq:marginal-system2} is unsatisfiable then the reduction outputs a trivial NO-instance.
  \begin{proposition}
    There is a \emph{dynamic programming algorithm} that efficiently solves the integer program defined in \cref{eq:marginal-system2}.
\end{proposition}
\begin{proof}
 \textbf{Goal:} To determine if there exists a sequence of $N_1,N_2,...,N_{n-1}$ such that:
    \begin{itemize}
        \item Each $N_i$ is an $m\times m$ matrix where:
        \begin{itemize}
            \item Its elements are non-negative integers.
            \item The sum of all its elements is $L$.
        \end{itemize}
        \item $\|\sigma_{i,i+1}-\frac{1}{L}\sum_{j,k}(N_i)_{j,k}\;\hat{\eta}_j\otimes\hat{\eta}_k\|_{\Tr}\leq\epsilon$ 
        \item $M_r(N_i)_t=M_l(N_{i+1})_t$
    \end{itemize}
where $M_r(N_i)_t=(c_1,...,c_m)$, with $c_k$ being the sum of the elements of the k-th column of $N_i$, is the right marginal of $N_i$ and $M_l(N_{i+1})_t=(r_1,...,r_m)$, with $r_k$ being the sum of the elements of the k-th row of $N_{i+1}$, is the left marginal of $N_{i+1}$. So this relation makes sure that the number of times that each snapshot type appears on the right marginal of $\sigma_{i,i+1}$ matches the number of times the same snapshot type appears on the left marginal of $\sigma_{i+1,i+2}$.   \\
\\
\textbf{Domain:}
$D=\{N\in \mathbb{Z}_{\geq0}^{m\times m} | \sum_{j,k}N_{j,k}=L   \}$,  $|D|=\binom{L+m^2-1}{m^2-1}=O(L^{m^2-1})$\\
The size of the domain, i.e., the number of different $m\times m$ matrices whose elements sum to $L$ is given by a, standard in combinatorics, ``balls-and-bars" theorem \cite{Tucker06}.\\
\\
\textbf{Trace-norm filter:}
$U_i=\{N\in D:\|\sigma_{i,i+1}-\frac{1}{L}\sum_{j,k}N_{j,k}\;\eta_j\otimes\eta_k\|_{\Tr}\leq\epsilon     \}$.\\
\\
\textbf{Cost:} (Precompute everything). Calculating the trace norm of this $2^{2\ell}\times 2^{2\ell}$ matrix takes $O((2^{2\ell})^3)=O(1)$ time. 
(This is the cost of the singular value decomposition step \cite{GR70}.) We need to do that $\forall\; N\in D$ so for each link $i$ the cost of this step is $O(|D|)$.\\
\\
\textbf{Marginal-match relation:} $R=\{(N,N')\in D\times D\;|\;\forall\;k,\; \sum_jN_{j,k}=\sum_jN'_{k,j}   \}$.\\
\\
\textbf{Cost:} (Compute as we go). For every $N\in U_i$ we check all $N'\in D$. One of these checks takes $O(m^2)=O(1)$-time, so to check everything for the current $N$ would take $O(|D|)$-time and to check everything at the current link takes $O(|D|^2)=O(L^{2m^2-2})$-time (note that $|U_i|\leq|D|$).\\
\\
So now we have a classical \emph{constraint satisfiability problem} on a path $x_1-x_2-\cdots -x_{n-1}$ where each variable $x_i\in D$ with:
\begin{itemize}
    \item Trace constraint: $x_i\in U_i$.
    \item Marginal constraint: $(x_i, x_{i+1})\in R$.
\end{itemize}
This can be solved via the simple \cref{alg:sequence_check}.

\begin{algorithm}
    \caption{Global sequence existence check}
    \label{alg:sequence_check}
\begin{algorithmic}[1]
    \State $F_1 \gets U_1$
    \For{$i \gets 1 \text{ to } n-2$} % Using \text{to}
        \State $F_{i+1} \gets \emptyset$
        \For{$\text{each } N \in F_i$} % Using \text{each }
            \For{$\text{each } N' \in U_{i+1}$}
                \If{$(N, N')\in R$}
                    \State $F_{i+1} \gets F_{i+1} \cup \{N'\}$ % Clear way to add to set
                \EndIf
            \EndFor
        \EndFor
        \If{$F_{i+1} = \emptyset$}
            \State \text{\textbf{reject} (``NO solution")} % 
        \EndIf
    \EndFor
    \If{$F_{n-1} \neq \emptyset$}
        \State \text{\textbf{accept} (``YES, a global sequence exists")} 
    \EndIf
\end{algorithmic}
\end{algorithm}
\noindent Notice that we can easily retrieve an accepting sequence via standard back tracking.\\
\\
\textbf{Runtime:} As we mentioned before, computing the set $U_i$ takes $O(|D|)$-time while computing the marginal relation $R$ takes $O(|D|^2)$-time. Since we need to do that for $O(n)$ links in the chain the total runtime is:
\[
T=O(n|D|^2+n|D|)=O(nL^{2m^2-2}).
\]
And since $m$ is constant and $L=\poly(n)$ the total runtime is polynomial in the size of the input. 
\end{proof}
  We define ``local shadows'' $S_{i} = \{s_{il}\}_{l\in[L]}$ by taking $n_{i,j,k}$ copies of $\hat\eta_j\otimes\hat\eta_k$.
  We can now compute permutations $f_i\in \textup S_L$, such that $\Tr_{1}(s_{il})=\Tr_2(s_{i+1,f_i(l)})$ via a perfect matching.
  Finally, we assemble the local shadows to a global shadow
  \begin{equation}
    S = \{s_{l}\}_{l\in [L]}, \quad s_l = s_{1,l}\otimes \Tr_A(s_{2,f_1(l)}) \otimes \Tr_A(s_{3,f_2(f_1(l))})\otimes \dotsm\otimes \Tr_A(s_{n-1,(f_{n-2}\circ\dotsm\circ f_1)(l)}).
  \end{equation}
  By construction, we have 
  \begin{equation}\label{constructionguarantee}
  \norm*{\Tr_{\overline{i,i+1}} \left(\frac{1}{L}\sum_{l\in[L]}s_l\right) - \sigma_{i,i+1}}_{\Tr} \le \epsilon.
  \end{equation}

  For the $\CSV_{\cfont{HKP}}$ instance, we use $K$ identical copies of the global shadow $S$ (or to be precise the string equivalent of the snapshots), which will serve as the buckets for the $\cfont{MoM}$ aggregation, essentially rendering this to an empirical average. As for the set of observables $O$, for each neighboring qudit pair $(i,i+1)$ we include all Pauli operators supported only on those $2\ell$ qubits that comprise the pair. The recovery algorithm reconstructs the snapshots and aggregates estimations via $\cfont{MoM}$ technique.\\
  \\
  For notational convenience, define
\[
\rho_{i,i+1}:=\Tr_{\overline{i,i+1}}(\rho), \quad \widetilde{\sigma}_{i,i+1}:= \Tr_{\overline{i,i+1}}\left(\frac1L\sum_{l\in[L]}s_l\right).
\]
\textbf{Completeness.}
If the $\iDCLDM$ instance is a YES instance, there exists a state $\rho$ such that
\[
\|\rho_{i,i+1}-\sigma_{i,i+1}\|_{\Tr}\le \alpha
\qquad \forall i\in[n-1].
\]
By construction of the integer counts and the stitched shadow, we have
\[
\norm*{\widetilde{\sigma}_{i,i+1}-\sigma_{i,i+1}}_{\Tr}
\le \epsilon
\qquad \forall i\in[n-1].
\]
Therefore, by the triangle inequality, $\norm*{\rho_{i,i+1}-\widetilde{\sigma}_{i,i+1}}_{\Tr}\le \alpha+\epsilon$. For every Pauli observable $P$ on the qubits of the neighboring pair $(i,i+1)$, we have $\|P\|_\infty=1$, and hence by Hölder's inequality,
\[
\begin{aligned}
\left|\Tr(P\rho)-A(S,P)\right|
&=
\left|
\Tr\left(P\left(\rho_{i,i+1}-\widetilde{\sigma}_{i,i+1}\right)\right)
\right|\\
&\le
\norm*{\rho_{i,i+1}-\widetilde{\sigma}_{i,i+1}}_{\Tr}
\le \alpha+\epsilon .
\end{aligned}
\]
Thus the constructed $\CSV_{\cfont{HKP}}$ instance is a YES instance with $\alpha_{\mathrm{CSV}}:=\alpha+\epsilon$.\\
\textbf{Soundness.}
For soundness, suppose for contradiction that there exists a state $\rho$ such that, for every Pauli observable $P$ supported on a neighboring pair, $\left|\Tr(P\rho)-A(S,P)\right|\le \beta_{\mathrm{CSV}}$. Equivalently,
\[
\left|
\Tr\left(P\left(\rho_{i,i+1}-\widetilde{\sigma}_{i,i+1}\right)\right)
\right|
\le \beta_{\mathrm{CSV}}
\qquad \forall i\in[n-1].
\]
Since the Pauli observables on the $2\ell$ qubits of a neighboring pair form a tomographically complete operator basis, and since the local dimension $d^2=2^{2\ell}$ is constant, there exists a constant $c_d>0$ such that
\[
\norm*{\rho_{i,i+1}-\widetilde{\sigma}_{i,i+1}}_{\Tr}
\le c_d\beta_{\mathrm{CSV}}
\qquad \forall i\in[n-1].
\]
By construction of the integer counts and the stitched shadow,
\[
\norm*{\widetilde{\sigma}_{i,i+1}-\sigma_{i,i+1}}_{\Tr}
\le \epsilon
\qquad \forall i\in[n-1].
\]
Therefore, by the triangle inequality,
\[
\norm*{\rho_{i,i+1}-\sigma_{i,i+1}}_{\Tr}
\le c_d\beta_{\mathrm{CSV}}+\epsilon
\qquad \forall i\in[n-1].
\]
Choose $\beta_{\mathrm{CSV}}$ so that $c_d\beta_{\mathrm{CSV}}+\epsilon < \beta$, where $\beta$ is the NO threshold of the input $\iDCLDM$ instance. Then $\rho$ would be a state whose every neighboring marginal is within distance strictly less than $\beta$ of the corresponding $\sigma_{i,i+1}$, contradicting the NO case of the $\iDCLDM$ instance. Hence the constructed $\CSV_{\cfont{HKP}}$ instance is a NO instance.
\end{proof}

\begin{corollary}\label{cor:CSVHKP}
  $\QMA\le \CSV_{\text{HKP}}$, even for $6$-local observables on a spatially sparse hypergraph.
\end{corollary}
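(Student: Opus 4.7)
The plan is to simply chain Theorem~\ref{Thm:iDCLDM=QMA} and Theorem~\ref{thm:1DCLDM-CSV}, and then read off the geometric and locality parameters of the resulting $\CSV_{\cfont{HKP}}$ instance. Concretely, for any language $L\in\QMA$, Theorem~\ref{Thm:iDCLDM=QMA} gives a poly-time many-one reduction from $L$ to $\iDCLDM$ on a chain of qudits of local dimension $d=8=2^3$, and Theorem~\ref{thm:1DCLDM-CSV} (applied with $\ell=3$) gives a poly-time many-one reduction from $\iDCLDM_{d=2^3}$ to $\CSV_{\cfont{HKP}}$. Composing these two reductions yields $\QMA\le \CSV_{\cfont{HKP}}$.

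It then remains to verify the two structural claims: that the observables produced by the reduction of Theorem~\ref{thm:1DCLDM-CSV} are $6$-local, and that their interaction pattern forms a spatially sparse hypergraph. For locality, recall that the reduction decomposes every qu-$8$-it into $\ell=3$ qubits and, for each neighboring qudit pair $(q_i,q_{i+1})$, places all Pauli observables supported on the $2\ell=6$ qubits of that pair. Hence every observable is automatically $6$-local on the underlying $3(n-1)$-qubit system. For spatial sparsity (\Cref{def:sparse}), the interaction hypergraph has one hyperedge per pair $(q_i,q_{i+1})$ in the 1D chain, so after drawing the qubits on a line: (i)~each qubit belongs to at most two hyperedges (one with its left and one with its right neighboring qudit), and (ii)~in the obvious straight-line drawing each hyperedge spans $6$ consecutive qubits and overlaps only with its two neighboring hyperedges and covers a surface of $O(1)$. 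Both conditions of \Cref{def:sparse} are therefore satisfied.

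I expect no real obstacle in this argument, since all the content is already encapsulated in Theorems~\ref{Thm:iDCLDM=QMA} and~\ref{thm:1DCLDM-CSV}; the only thing to do is to make the geometric/locality bookkeeping explicit. The one small point to be careful about is that Theorem~\ref{thm:1DCLDM-CSV} is stated for general $d=2^\ell$ with $\ell\in O(1)$, so one must instantiate $\ell=3$ to match the hardness side and thereby pin down the locality at exactly~$6$. Once this is done, the corollary follows immediately.
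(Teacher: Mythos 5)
Your proposal is correct and follows essentially the same route as the paper: compose \Cref{Thm:iDCLDM=QMA} with \Cref{thm:1DCLDM-CSV} at $d=8$, $\ell=3$, yielding $6$-local observables, and check spatial sparsity of the induced hypergraph directly from the 1D nearest-neighbor structure. The only tiny imprecision is that each neighboring qudit pair contributes constantly many hyperedges (one per Pauli supported on those $6$ qubits), not just one, but this still gives $O(1)$ hyperedges per vertex and does not affect the conclusion.
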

\begin{proof}
  Follows from \cref{Thm:iDCLDM=QMA,thm:1DCLDM-CSV}. Since here $d=8$ and so $\ell=3$, the observables are $6$-local on qubits. It is easy to verify that the resulting hypergraph (where each qubit is a vertex, and each Pauli operator acting non-trivially on a set $V'\subseteq V$ of vertices is represented by a hyperedge) is spatially sparse, as per \Cref{def:sparse}.
\end{proof}
\begin{corollary}\label{cor:CSVHKPisQMAcomplete}
  $\CSV_{\text{HKP}}$ is $\QMA$-complete, even for $6$-local observables on a spatially sparse hypergraph.
\end{corollary}
\begin{proof}
    Hardness follows from \cref{cor:CSVHKP} above and containment from  \cref{rem:CSVp-to-ObsConp} and the facts that $\ObsConPoly\in\SuperQMAPoly$ (\cref{prop:obsconpoly-in-superqmapoly}) and $\SuperQMAPoly=\QMA$ (\cref{Thm:QMAplus=QMA}).
\end{proof}

\subsection{MYZ classical shadow}
Our hardness result is not limited to qubit translated systems. A recent protocol by Mao, Yi, and Zhu \cite{MYZ25} generalizes the local Clifford measurement framework to qudits of odd prime dimension $d$. Their protocol uses the ensemble $\mathcal{E} = \mathrm{Cl}(d)^{\otimes n}$, where $\mathrm{Cl}(d)$ is the single-qudit Clifford group, leading to snapshots that are tensor products of single-qudit operators. Each such operator is derived from one of the $d(d+1)$ single-qudit stabilizer states (for details see \cref{par:MYZ}). 
\begin{definition}[$\CSV_{\cfont{MYZ}}(d)$]\label{def:csvmyz}
The definition is the same as \cref{def:CSV}, only now the classical shadow has the structure dictated by the MYZ local-Clifford protocol on odd-prime $d$. Concretely:
\begin{itemize}
  \item Shadow $S$ consists of $L=\poly(n)$ strings. Each string is of the form
  $((\mu,b)_1,\ldots,(\mu,b)_n)$ with $\mu\in \mathbb{F}_d \cup \{\infty\}$ the measurement basis label and $b\in[d]$ the measurement outcome. 
  \item $O$ is a set of $k$-local observables on $n$ qudits, for fixed $k=O(1)$.
  \item The recovery algorithm applies the inverse channel of the measurement protocol on $S$ to get the snapshots and then aggregates via MoM, subject to the range convention of \Cref{rem:range-convention}.
\end{itemize}
\end{definition}
This protocol works for odd prime $d$. Notice that we can always pad the local dimensions of our chain and add projector terms in our Hamiltonian and so we can trivially get a $\QMA$-completeness under Karp reductions result for a $d\ge8$-level $\iDCLDM$ problem.
\begin{theorem}\label{thm:1DCLDM-CSVMYZ}
  For every fixed odd prime local dimension $d\ge 11$,$\;\;\iDCLDM_{d}\le\CSV_{\text{MYZ}}(d)$.
\end{theorem}
\begin{proof}[Proof sketch]
The proof is analogous with \cref{thm:1DCLDM-CSV}, only here the local dimension of the qudits is an odd prime. Let us first quickly summarize the differences of the two: 
\begin{itemize}
    \item \textbf{Single-site snapshot types ($\hat{\eta}$):} In MYZ local-Clifford ensemble, each site is measured in one of the $d+1$ stabilizer basis- the eigenbases of $Z$ and $Z^tX$ for $t\in\F_d$. For basis label $\mu\in \mathbb{F}_d \cup \{\infty\}$ and outcome label $b\in\F_d$, the single-site snapshot operator is
    \[
    \hat{\eta}_{\mu,b}=(d+1)\ket{\phi_{\mu,b}}\bra{\phi_{\mu,b}}-I_d
    \]
    where $\ket{\phi_{\mu,b}}$ is the eigenvector in the basis labelled by $\mu$ with outcome label $b$. The alphabet size is now $m'=d(d+1)$, so still constant for fixed $d$.
    \item \textbf{Observables:} Our observables will now be all the Hermitian real and imaginary parts of generalized Pauli/Weyl operators supported on adjacent qudits. 
\end{itemize}
With these changes in mind we can see that our proof follows directly. Since the alphabet $m'$ is still constant we can solve \cref{eq:marginal-system2} system efficiently, via the same DP algorithm. After that, we use the same ``stitching the local shadows" argument to create a global shadow which alongside our observables and the known recovery algorithm will form the $\CSV_{\cfont{MYZ}}$ instance.
\end{proof}
\begin{corollary}\label{cor:CSVMYZ}
$\QMA\le\CSV_{\cfont{MYZ}}(d)$ for every fixed odd prime local dimension $d\geq 11$, even for $2$-local nearest-neighbor observables on a line.
\end{corollary}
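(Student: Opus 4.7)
The plan is to chain three poly-time many-one reductions: $\QMA \leq \iDCLDM_{d=8}$ via \Cref{Thm:iDCLDM=QMA}, a dimension-padding reduction $\iDCLDM_{d=8} \leq \iDCLDM_d$ for each odd prime $d \geq 11$, and finally $\iDCLDM_d \leq \CSV_{\cfont{MYZ}}(d)$ via \Cref{thm:1DCLDM-CSVMYZ}. The only new piece is the padding step; the rest is quoted directly from the excerpt.

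For the padding, given an $\iDCLDM_{d=8}$ instance with nearest-neighbor marginals $\{\sigma_{i,i+1}\}$ and thresholds $(\alpha,\beta)$, I would fix the embedding $\C^{8} \hookrightarrow \C^d$ onto the first eight basis vectors, let $P$ denote the projector onto that subspace, and output the $\iDCLDM_d$ instance whose marginals $\sigma'_{i,i+1} \in \density(d^2)$ equal $\sigma_{i,i+1}$ on the $64$-dimensional block and are $0$ elsewhere. Any consistent 8-level state embeds trivially to a $d$-level state with identical marginals, giving YES $\Rightarrow$ YES. Conversely, any $\alpha$-consistent $\rho' \in \density(d^n)$ must satisfy $\Tr((P_i P_{i+1})\rho') \geq 1 - \alpha$ on every edge (since $\sigma'_{i,i+1}$ is supported on the 8-dimensional block), and the operator inequality $I - P^{\otimes n} \leq \sum_i (I - P_i)$ combined with a union bound yields $\Tr(P^{\otimes n}\rho') \geq 1 - n\alpha$. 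The renormalized projection $\tilde\rho := P^{\otimes n}\rho'P^{\otimes n}/\Tr(P^{\otimes n}\rho')$, interpreted as a state on $(\C^8)^{\otimes n}$, then has all nearest-neighbor marginals within $O(n\alpha)$ of $\sigma_{i,i+1}$ in trace norm. Pre-amplifying the source instance so that $\alpha = 1/n^c$ for sufficiently large constant $c$ (retaining a $1/\poly(n)$ gap) absorbs this $O(n)$ blow-up and completes the contrapositive NO $\Rightarrow$ NO.

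Composing with \Cref{thm:1DCLDM-CSVMYZ} then yields the desired reduction $\QMA \leq \CSV_{\cfont{MYZ}}(d)$. For the locality claim, I would simply inspect the construction inside \Cref{thm:1DCLDM-CSVMYZ}: the observables it introduces are the generalized Pauli/Weyl operators supported on neighboring pairs $(q_i, q_{i+1})$ along the chain, which are by construction $2$-local and nearest-neighbor on a line. The restriction $d \geq 11$ is exactly what is needed: $d$ must be an odd prime (MYZ requirement) and at least $8$ (to host the 8-level qudit from \Cref{Thm:iDCLDM=QMA}), and $11$ is the smallest such integer.

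The main obstacle I anticipate is the quantitative bookkeeping in the NO direction of the padding: verifying that projecting and renormalizing an $\alpha$-consistent $d$-level state produces an $O(n\alpha)$-consistent 8-level state, rather than something larger. This amounts to a careful but routine triangle-inequality calculation together with the union bound above, and the polynomial slack is comfortably absorbed by the amplification step, so no conceptual hurdle remains.
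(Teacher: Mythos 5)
Your proposal is correct in outline but routes the dimension change differently from the paper. The paper's proof of \Cref{cor:CSVMYZ} is a two-line composition of \Cref{Thm:iDCLDM=QMA} with \Cref{thm:1DCLDM-CSVMYZ}, and the padding is handled \emph{before} the CLDM stage: as remarked just above \Cref{thm:1DCLDM-CSVMYZ}, one pads the local dimension of the HNN chain and adds projector penalty terms to the Hamiltonian, so the simulatable-history-state machinery of \Cref{sec:B} yields QMA-hardness of $\iDCLDM$ directly at local dimension $d\ge 8$ (hence at any odd prime $d\ge 11$), with YES-case marginals exactly supported on the padded block. You instead keep $\iDCLDM_{d=8}$ as a black box and pad at the level of the marginals, which obliges you to show that an almost-consistent $d$-level state, projected onto $(\C^{8})^{\otimes n}$ and renormalized, stays almost consistent. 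That argument goes through, but two quantitative points need care: the gentle-measurement step gives marginal error $O(\sqrt{n\alpha})$, not $O(n\alpha)$ as you claim, and the NO direction should be run with the padded instance's NO threshold $\beta'$ (chosen roughly $\beta^{2}/n$) rather than with $\alpha$ itself; both are absorbed by your ``sufficiently small $\alpha$, retained $1/\poly(n)$ gap'' clause, since the source instance from \Cref{Thm:iDCLDM=QMA} has negligible YES-side error. The trade-off is clear: the paper's Hamiltonian-level padding avoids any projection/renormalization analysis, while your CLDM-level padding avoids reopening the appendix construction and works as a generic reduction $\iDCLDM_{d=8}\le\iDCLDM_{d}$ for every $d\ge 8$; the locality and ``$d\ge 11$'' observations are the same in both treatments.
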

\begin{proof}
 Follows from  \cref{thm:1DCLDM-CSVMYZ,Thm:iDCLDM=QMA}.
\end{proof}
\begin{corollary}\label{cor:CSVMYZisQMAcomplete}
  For every fixed odd prime local dimension $d\ge 11$,
  $\CSV_{\mathrm{MYZ}}(d)$ is $\QMA$-complete, even for $2$-local
  nearest-neighbor observables on a line.
\end{corollary}

\begin{proof}
Hardness follows from \cref{cor:CSVMYZ}. For containment, observe that
$\CSV_{\mathrm{MYZ}}(d)$ has polynomially many observables and is a
fixed-local-dimension instance of $\CSV$. Since $d$ is fixed, each qudit
can be encoded into $\lceil \log d\rceil=O(1)$ qubits, and the MYZ
observables and recovery procedure remain efficiently implementable.
Thus $\CSV_{\mathrm{MYZ}}(d)$ reduces to $\ObsConPoly$ by
\cref{rem:CSVp-to-ObsConp}. Finally,
$\ObsConPoly\in\SuperQMAPoly$ by
\cref{prop:obsconpoly-in-superqmapoly}, and
$\SuperQMAPoly=\QMA$ by \cref{Thm:QMAplus=QMA}. Hence
$\CSV_{\mathrm{MYZ}}(d)\in\QMA$.
\end{proof}

\subsection{``Dequantizing'' HKP, MYZ for global Clifford measurements}

Recall now that classical shadows constructed using global Clifford operations allow for an efficient recovery of the expectation values of observables whose Frobenius norm $\lVert O \rVert_{F} = \sqrt{\Tr[ O^{\dagger} O]}$ is bounded. Interestingly, in this setting, we can solve the validity problem in polynomial time if we have sampling and query access to the target observables. Briefly, this is done by invoking the bounded Frobenius norm semidefinite programming (SDP) dequantization result of \cite{CGLLTW22} (see also \cite{CLLW20}, which previously handled the low-rank case).

We begin by defining the Global Clifford version of CSV. 
\begin{definition}[$\CSV_{\cfont{GC}}$]\label{def:CSV_GC}
     The definition is the same as \cref{def:CSV}, only now our classical shadow has the structure dictated by the global Clifford measurement protocol presented in Ref.\ \cite{HKP20}. That means the following:
    \begin{itemize}
        \item The shadow $S$ consists of $L=\poly(n)$ strings, $\{s_i\}_{i=1}^L$, each of which encodes the random $n$-qubit Clifford used in that round and the measurement outcome. More formally, each string will be of the form $s_i=(\cfont{Stab}_i,b_i)$ where $\cfont{Stab}$ is the efficient classical representation of the global Clifford via the stabilizer formalism and $b\in \{0,1\}^n$ the measurement outcome of that round.
        \item $O$ is any set of $\poly(n)$ observables with bounded Frobenius norm, i.e., $\|O_i\|_{F}\le \poly(n)$. Those observables are possibly highly non-local.
        \item The recovery algorithm applies the global inverse depolarizing channel to extract the snapshot operators $\hat{\eta}$ from $S$ and aggregates the estimates via the \emph{median of means} (MoM) technique. The output is subject to the range convention of \Cref{rem:range-convention}.
    \end{itemize}
\end{definition}
It will be easier to work in the abstract definition which we denote $\ObsCon_{\cfont{F}}$ and define as:
 \begin{definition}[$\ObsCon_{\cfont{F}}$]\label{def:ObsCon_HS}
  The input is a set of observables along with their respective expectation values $(O_i, y_i)_{i=1}^{m=\poly(n)}$, with $\|O_i\|_F\le \poly(n)$, and parameters $\alpha$ and $\beta$ satisfying $\beta-\alpha\geq 1/\poly(n)$. The output is to decide between the following cases:
\begin{itemize}
    \item \textbf{Yes}: $\exists$  $n$-qubit state $\rho$ s.t.\ $\forall$ $i\in [m]$, 
    $\left| \Tr\left(O_i\rho\right)-y_i\right|\leq \alpha$.
    \item \textbf{No}: $\forall$ $n$-qubit states $\rho$ $\exists$ some $i\in [m]$ s.t.\  $\left| \Tr \left(O_i\rho\right)-y_i\right|\geq \beta$.
\end{itemize} 
We assume $y_i\in[-1,1]$.
 \end{definition}
Let us now properly define what a sampling and query access to the target observables mean.

\begin{definition}[Sampling and query access~\cite{CGLLTW22}]\label{def:SQAccess}
For a vector $v\in\C^N$, sampling and query access, denoted $\SQ{v}$, means that we can query entries $v(i)$, sample indices $i\in[N]$ with probability $\frac{\abs{v(i)}^2}{\norm{v}_2^2}$, and compute $\norm{v}_2$. Query access alone, denoted $Q(v)$, means that we can query entries $v(i)$. For a matrix $A\in\C^{M\times N}$, query access, denoted $Q(A)$, means that given $(i,j)\in[M]\times[N]$ one can compute $A(i,j)$; sampling and query access, denoted $SQ(A)$, means that we have $SQ$-access to each row of $A$ and $SQ$-access to the vector of row norms of $A$.
\end{definition}
 With our sampling and query access definitions in hand, we can define:
  \begin{definition}[$\ObsCon_{\cfont{F,Samp}}$]\label{def:ObsCon_Fsamp}
   Defined as $\ObsCon_{\cfont{F}}$ (see \cref{def:ObsCon_HS}) but additionally with sampling and query access (see \cref{def:SQAccess}) to each observable $O_i$.
  \end{definition}

We now recall the definition of the (SDP $\varepsilon$-feasibility)-problem~\cite{CLLW20,CGLLTW22}.

\begin{definition}[SDP $\varepsilon$-feasibility \cite{CGLLTW22}]\label{def:SDP} Given an $\varepsilon>0$, $m$ real numbers $b_1,\dots,b_m\in \mathbb R$, and Hermitian $N\times N$ matrices $SQ(A^{(1)}),\dots,SQ(A^{(m)})$ such that $-I\preceq A^{(i)}\preceq I$ for all $i\in[m]$, we define $\mathcal{S}_{\varepsilon}$ as the set of all $\rho$ satisfying
\begin{subequations}
  \begin{align}
    \Tr[A^{(i)}\rho]&\le b_i+\varepsilon,\;\; \forall i\in[m]\\
    \rho&\succeq 0\\
    \Tr[\rho]&=1
  \end{align}
  \end{subequations}
If $\mathcal{S}_{\varepsilon}=\emptyset$, output ``infeasible". If $\mathcal{S}_0\neq\emptyset$, output a $\rho\in\mathcal{S}_\varepsilon$.
    
\end{definition}
\begin{lemma}\label{lem:SDPreduct}
    $\mathsf{ObsCon_{F,Samp}}\le \text{SDP} \;\varepsilon\text{-feasibility}$.
\end{lemma}
\begin{proof}
    We start with a $\mathsf{ObsCon_{F,Samp}}$ instance: $\{\left(\SQ(O_i),y_i\right)\}_{i=1}^m$ with $\|O_i\|_\infty\le 1,\;\|O_i\|_F\le\poly(n)$ and parameters $\alpha,\;\beta$ with $\beta-\alpha\ge1/\poly$. Now define:
    \[
    A_{i,\pm}:=\pm O_i,\;\;b_{i,\pm}:=\alpha\pm y_i,\;\;
    \varepsilon:=\frac{\beta-\alpha}{2}
    \]
    Consider now the following SDP
    \begin{subequations}
  \begin{align}
    \Tr[A_{i,\pm}\rho]&\le b_{i,\pm}+\varepsilon,\;\; \forall i\in[m]\\
    \rho&\succeq 0\\
    \Tr[\rho]&=1
  \end{align}
  \end{subequations}
Since $\|O_i\|_\infty\le 1$, one has $-I\preceq A_{i,\pm}\preceq I$ for all $i$. Thus, the above SDP is a valid instance of the $\text{SDP} \;\varepsilon\text{-feasibility}$ problem. We now show correctness.

\textbf{Completeness:} Assume the $\mathsf{ObsCon_{F,Samp}}$ instance is a YES instance, so that there exists a state $\rho$ s.t. 
$ |\Tr(O_i\rho)-y_i|\le\alpha\;\;\forall\;i\in[m]$.
Equivalently:
\[
\exists\rho\;\;\text{s.t.}\;\Tr(O_i\rho)\le y_i+\alpha,\;\; \Tr(-O_i\rho)\le\alpha-y_i\;\;\forall\; i\in[m].
\]
In terms of the SDP constraints: $\exists\;\rho\;\text{s.t.}\;\;\Tr(A_{i,\pm}\rho)\le b_{i,\pm}\;\;\forall\;i\in[m]$. Therefore the associated SDP instance is feasible with zero slack, i.e., $\mathcal{S}_0\neq\emptyset$.

\textbf{Soundness:} Assume $\mathsf{ObsCon_{F,Samp}}$ instance is a NO instance. Then for every state $\rho$ there exists some index $i^*\in[m]$ such that $|\Tr(O_{i^*}\rho)-y_{i^*}|\ge\beta$. Let $g:=\beta-\alpha$. Then for the index $i^*$, one of the following must hold 
\[
\Tr(O_{i^*}\rho)-y_{i^*}\ge\alpha+g\;\;\text{or}\;\; -(\Tr(O_{i^*}\rho)-y_{i^*})\ge\alpha+g
\]
Rewriting these in terms of the SDP constraints we get: $\Tr(A_{i^*,\pm}\rho)\ge b_{i^*,\pm}+g$. So every $\rho$ violates at least one SDP constraint by at least $g=\beta-\alpha=2\varepsilon$. Therefore $\mathcal{S}_{\varepsilon}=\emptyset$. 
\end{proof}
\begin{lemma}[Corollary 6.25 \cite{CGLLTW22}]\label{cor:SDPsol} Let $F\ge\max_{j\in[m]}(\|A_j\|_F)$, and suppose $F=\Omega (1)$. Then we can solve \ref{def:SDP} with success probability $\ge 1-\delta$ in cost
\[
\widetilde{\mathcal{O}}\left(\left(\frac{F^{18}}{\varepsilon^{40}}\log^{20}(N)\textbf{sq}(A)+\frac{F^{22}}{\varepsilon^{46}}\log^{23}(N)+m\frac{F^8}{\varepsilon^{18}}\log^{8}(N)\textbf{q}(A)+m\frac{F^{14}}{\varepsilon^{28}}\log^{13}(N)\right)\log^3\frac{1}{\delta}\right)
\]
providing sampling and query access to a solution.   
\end{lemma}
\begin{theorem}\label{thm:csvgc}
$\ObsCon_{\cfont{F,Samp}}$, and hence $\CSV_{\cfont{GC}}$ under query and sampling access,
is solvable in randomized classical polynomial time.
\end{theorem}
\begin{proof}
    From \cref{lem:SDPreduct,cor:SDPsol} and \cref{rem:CSVp-to-ObsConp}, since $\|O_i\|_F\le\poly(n)\;\forall\;i\in[m]$. 
\end{proof}

We now state the qudit analogue for the global $n$-qudit Clifford ensemble protocol (see \cref{par:MYZ}).
\begin{definition}
 [$\ObsCon_{\cfont{F,Samp}}^{(d)}$]$\ObsCon_{\cfont{F,Samp}}^{(d)}$ is the qudit analogue of $\ObsCon_{\cfont{F,Samp}}$ (see \cref{def:ObsCon_Fsamp}): inputs $(O_i,y_i)_{i=1}^m$ with Hermitian $n$-qudit $O_i$ satisfying $\|O_i\|_{F}\le\poly(n)$ and $\beta-\alpha\ge1/\poly(n)$, together with sampling and query access.  
\end{definition}
\begin{theorem}\label{thm:csvgcmyz}
 $\ObsCon_{\cfont{F,Samp}}^{(d)}$ is solvable in randomized classical polynomial time.   
\end{theorem}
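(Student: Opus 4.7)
The plan is to replay, essentially verbatim, the four-step argument used for \Cref{thm:csvgc}, because every step there used the ambient dimension $N$ only abstractly and never exploited that $N=2^n$. Writing $N:=d^n$ throughout, the hypotheses $\fnorm{O_j}\le\poly(n)$ and $\phi$-oversampling and query access with $\phi\in\poly(n)$ continue to supply all the sampling primitives that \Cref{thm:FKV}, \Cref{l:samplinginnerproduct}, and \Cref{l:samplingmatrixmult} require, and each of those results runs in time polynomial in $r,\phi,1/\epsilon,\log(1/\delta)$ independently of $N$. Hence the local dimension $d$ never enters the runtime except through our ability to query and sample from the input, which is given.

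First, truncate each $O_j$ to its singular values of magnitude at least, say, $1/2$, obtaining a rank-$r_j$ approximation $\Ojt$ with $\fnorm{O_j-\Ojt}^2\le\fnorm{O_j}^2-r_j/4$, so $r_j\in\poly(n)$ suffices to drive $\fnorm{O_j-\Ojt}^2$ below any prescribed inverse polynomial; the H\"older bound $|\Tr(\rho(O_j-\Ojt))|\le\trnorm{\rho}\fnorm{O_j-\Ojt}$ then transfers this to expectation-value error without any dependence on $d$. Second, apply \Cref{thm:FKV} to each $O_j$ to produce a row-sketch $S_j\in\C^{p\times N}$ and its column-sketch $W_j$, extract the top singular vectors $u_{j,t}$ of $W_j$, and form $v_{j,t}=S_j^{\dagger}u_{j,t}/\|W_j^{\dagger}u_{j,t}\|_2$; the resulting projector $\Pi_j$ then satisfies $\Ojt=\Pi_j O_j\Pi_j$, and the cost is polynomial in $n$ alone. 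Third, Gram--Schmidt the union $\bigcup_{j,t}\{v_{j,t}\}$ into a polynomial-size approximate orthonormal basis $B$, approximating each required inner product $v_{k,l}^{\dagger}v_{j,t}$ via \Cref{l:samplinginnerproduct} (we have $\SQ_\phi(S_k)$ by hypothesis and $\Q{u_{j,t}}$, $\Q{S_j^{\dagger}u_{j,t}}$ since both vectors are of polynomial size). Then, using \Cref{l:samplingmatrixmult} to obtain $\SQ_\phi(\Pi_l O_l\Pi_l)$ followed by a second invocation of \Cref{l:samplinginnerproduct}, compute inverse-polynomial approximations of $v_{k,l}^{\dagger}\tilde O_l v_{j,t}$ for all required tuples, yielding a poly-size matrix representation of each $\Ojt$ with respect to $B$.

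Finally, check feasibility in the compressed instance by solving, via the ellipsoid method, the semidefinite program
\begin{align*}
\min\ \chi\quad\text{s.t.}\quad &-\chi\le\Tr(\Ojt\rho)-y_j\le\chi\ \ \forall j,\\
&\Tr(\rho)\le1,\ \rho\succeq 0,\ \chi\ge0,
\end{align*}
and accept iff the optimal value is at most $\alpha+(\beta-\alpha)/2$. The only non-routine piece is the error bookkeeping: each of the low-rank truncation, the FKV sketching, the Gram--Schmidt, and the matrix-entry estimation contributes its own additive error, and these must be allocated inverse polynomial budgets so that the $1/\poly(n)$ promise gap $\beta-\alpha$ is preserved after propagation through the SDP. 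This is the main thing to verify, but it is identical to the bookkeeping already carried out in \Cref{thm:csvgc} and presents no new obstacle in the qudit setting.
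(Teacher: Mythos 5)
Your proposal is correct and follows the same route as the paper: the paper's proof of \Cref{thm:csvgcmyz} simply observes that the argument for \Cref{thm:csvgc} is local-dimension agnostic, relying only on the Frobenius-norm bound and the sampling/query access primitives, which is precisely the observation you make (and then spell out step by step). No gap; your version is just a more explicit replay of the same argument.
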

\begin{proof}
    The proof of \Cref{thm:csvgc} is not specific to qubits. If $N$ denotes the Hilbert-space dimension of the SDP variable, then the SDP solver depends only polylogarithmically on $N$. Replacing the qubit dimension $N=2^n$ by the qudit dimension $N=d^n$ therefore changes the logarithmic dimension factor from $\log(2^n)=n$ to $\log(d^n)=n\log d$. Hence, for fixed local dimension $d$ and $\|O_i\|_F\le \poly(n)$, the same reduction to SDP $\varepsilon$-feasibility gives a randomized classical polynomial-time algorithm for $\ObsCon_{\cfont{F,Samp}}^{(d)}$.
\end{proof}

\section{Product state variants and connections to $\QMAt$}\label{scn:product}

In this section we explore the variants of our problems stemming from the restriction to the product-state space and show completeness results for the corresponding product-state classes, i.e., $\PSQMA$, $\Pqc$ (See \cref{def:PSQMA,def:P-qc-sigma2} respectively) in the $\poly$ and the $\exp$ regime.

Let us start by analyzing the $\poly$ case. First let us show the equivalence between $\PSQMA_{\poly}$ and $\QMAt$:

\begin{lemma}\label{lem:QMAii-subset-PSQMA}
 $\QMAt\subseteq \PSQMA_{\poly}$.  
\end{lemma}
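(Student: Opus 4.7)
The plan is to show the containment via a one-check super-verifier ($m=1$). Given $L \in \QMAt$ with verifier $V$, say of completeness $c \ge 2/3$ and soundness $s_{\mathrm{snd}} \le 1/3$, the super-verifier on input $x$ outputs (deterministically, which is a valid randomized algorithm) the single check $(V_{x,1}, r_{x,1}, s_{x,1}) := (V, 1, 1-c)$, together with global tolerance $\epsilon := (c - s_{\mathrm{snd}})/2$. The key observation is that the product-state promise built into \cref{def:PSQMA}, i.e.\ $\rho = \rho_A \otimes \rho_B$, is precisely the unentanglement constraint on the $\QMAt$ witness, so the witness furnished by $V$ serves as a valid $\PSQMA_{\poly}$ witness with no modification. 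The remaining work is just to verify the completeness/soundness bookkeeping under the $1/m = 1$ threshold required when $m=1$.

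For completeness, any accepting pure product witness $\ket{\psi_1} \otimes \ket{\psi_2}$ of $V$ achieves $\Tr(\Pi^{(1)} V (\rho_A \otimes \rho_B) V^\dagger) \ge c$, hence $|\Tr(\cdot) - r_{x,1}| \le 1-c = s_{x,1}$, and the single check passes with probability $1$, matching the $\PSQMA_{\poly}$ completeness condition $\Pr_i = 1$. For soundness, I would spectrally decompose an arbitrary mixed product state $\rho_A \otimes \rho_B$ into a convex combination of pure product states; linearity of the acceptance probability then gives $\Tr(\Pi^{(1)} V \rho V^\dagger) \le s_{\mathrm{snd}}$ for every product $\rho$, so $|\Tr(\cdot) - 1| \ge 1 - s_{\mathrm{snd}} > s_{x,1} + \epsilon$. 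Hence the single check fails with probability $1$, saturating the $\PSQMA_{\poly}$ soundness bound $\Pr_i \le 1 - 1/m = 0$ when $m=1$.

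I do not anticipate any substantial obstacle: the argument is essentially a definitional matching between the two unentanglement promises, together with convexity of the acceptance probability over product states. One minor sanity check is to ensure $\epsilon \ge 1/\poly(n)$ as required by \cref{def:SQMA}; this is automatic since $c - s_{\mathrm{snd}} = \Omega(1)$ by the usual $\QMAt$ constant-gap normalization, and indeed $\epsilon = 1/6$ suffices for the stated $2/3$--$1/3$ gap. If one wanted a wider margin (e.g.\ to later compose with other reductions), a standard $\QMAt$ amplification by parallel repetition on independent copies of the two proof registers preserves the product structure and can make the gap as close to $1$ as desired before constructing the single check.
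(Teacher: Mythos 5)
Your proposal is correct and follows essentially the same route as the paper: a single-check super-verifier ($m=1$) outputting $(V, r=1, s=1-c)$ with $\epsilon=(c-s_{\mathrm{snd}})/2$ (i.e., $s=1/3$, $\epsilon=1/6$ for the standard gap), with the identical completeness/soundness bookkeeping under the $1-1/m=0$ threshold. Your explicit convexity step (decomposing a mixed product witness into pure product states to transfer $\QMAt$ soundness) is a detail the paper leaves implicit, but it is the same argument.
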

\begin{proof}
The proof is completely analogous to Lemma 4.2 in Ref.\ \cite{aharonovLatticeProblemQuantum2003}.
Given a verifier $V$ for $L\in\QMAt$ construct a super-verifier that outputs $(V,r=1,s=\frac{1}{3})$. This, as we will see, satisfies the definition of $\PSQMA_{\poly}$ by using $\epsilon=\frac{1}{6}$, $m=1$.\\
\textbf{Completeness:} Let $ x\in L$ then  $\exists\rho_A\otimes\rho_B$  s.t. $ \Tr(\Pi^{(1)}V(\rho_A\otimes\rho_B) V^{\dagger})\geq2/3$. It is easy to see that the condition $|\Tr(\Pi^{(1)}V(\rho_A\otimes\rho_B) V^{\dagger})-1|\leq1/3$ holds. \\
\textbf{Soundness:} Let $x\notin L$ then $\forall\rho_A\otimes\rho_B,\; \Tr(\Pi^{(1)}V(\rho_A\otimes\rho_B) V^{\dagger})\leq1/3$. This means that the condition  $|\Tr(\Pi^{(1)}V(\rho_A\otimes\rho_B) V^{\dagger})-1|\leq1/3+1/6=1/2$ is never satisfied. Notice that here we only have one check ($m=1$) and so that check must fail in the No case, as it does.
\end{proof}

\begin{lemma}\label{lem:PSQMA-subset-QMAii}
    $ \PSQMA_{\poly}\subseteq \QMAt$.
\end{lemma}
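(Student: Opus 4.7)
My plan is to mirror the proof of $\QMA^+_{\poly} \subseteq \QMA$ of Aharonov and Regev (their Theorem 4.3), adapted to the product-state setting. The $\QMAt$ verifier for $L \in \PSQMA_{\poly}$ will demand $K = \poly(|x|)$ copies of the purported product witness from each prover: Prover $A$ sends a state $\sigma_A$ on $K\cdot p_1(|x|)$ qubits and Prover $B$ sends $\sigma_B$ on $K\cdot p_2(|x|)$ qubits. Note that honest provers sending $\rho_A^{\otimes K}$ and $\rho_B^{\otimes K}$ deliver a global product state whose marginal on the $j$-th sub-register pair $(A_j,B_j)$ equals $\rho_A\otimes \rho_B$; this is the invariance we exploit.

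The verifier first applies a uniformly random permutation $\pi\in S_K$ simultaneously to the $K$ sub-registers of $A$ and of $B$. This enforces WLOG that $\sigma_A$ and $\sigma_B$ are each permutation-symmetric, while preserving the global product structure and hence the fact that every $k$-pair marginal is itself a product state $\tau_A^{(k)}\otimes \tau_B^{(k)}$. The verifier then invokes the super-verifier to obtain a uniformly random check $(V_{x,i},r_{x,i},s_{x,i})$, runs $V_{x,i}$ on each sub-register pair $(A_j,B_j)$, and lets $\hat A$ be the empirical mean of the $K$ output-qubit measurements. It accepts iff $|\hat A - r_{x,i}|\le s_{x,i}+\epsilon/2$.

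For completeness on YES instances, the honest provers supply $\rho_A^{\otimes K}\otimes \rho_B^{\otimes K}$, so the $K$ outcomes are i.i.d.\ Bernoulli with mean $p_i := \Tr(\Pi^{(1)}V_{x,i}(\rho_A\otimes\rho_B)V_{x,i}^\dagger)$ lying within $s_{x,i}$ of $r_{x,i}$ by $\PSQMA_{\poly}$ completeness. Hoeffding's inequality with $K = \Theta(\epsilon^{-2}\log m)$ then guarantees $|\hat A - p_i|\le \epsilon/2$ with probability $\ge 1 - o(1/m)$, so after a union bound over $i$ the verifier accepts with probability $\ge 2/3$. For soundness on NO instances, I would use a \emph{product-state de Finetti} reduction applied side-by-side: since each of the symmetrized $\sigma_A,\sigma_B$ has its $k$-register marginal close to $\int \rho^{\otimes k}\,d\mu(\rho)$, the global $k$-pair marginal is close to $\int (\rho_A\otimes\rho_B)^{\otimes k}\,d\mu_A(\rho_A)\,d\mu_B(\rho_B)$. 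Conditional on each pair $(\rho_A,\rho_B)$ in the support, the $\PSQMA_{\poly}$ soundness guarantees a $\ge 1/m$ fraction of ``bad'' indices $i$ with $|\Tr(\Pi^{(1)}V_{x,i}(\rho_A\otimes\rho_B)V_{x,i}^\dagger)-r_{x,i}|>s_{x,i}+\epsilon$, and Hoeffding again pushes $\hat A$ outside the acceptance window with high probability; averaging over $\mu_A\times\mu_B$ and $i$ yields overall acceptance $\le 1-\Omega(1/m)$, which can be amplified below $1/3$ by parallel repetition in $\QMAt$ (this preserves the product structure, since each prover simply sends additional tensor copies).

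The main obstacle is precisely the soundness analysis: a cheating prover can send arbitrarily entangled states \emph{within} a single register, so the $K$ outcomes are merely exchangeable, not i.i.d., and naive Chernoff/Hoeffding bounds fail to concentrate $\hat A$. I sidestep this by the symmetrization step together with the per-side quantum de Finetti reduction sketched above, which effectively converts the setup into one over distributions of honest product witnesses and lets me reuse the one-shot $\PSQMA_{\poly}$ soundness guarantee as a black box. Everything else—Hoeffding estimates, bookkeeping of de Finetti/Hoeffding error terms, and parallel amplification in $\QMAt$—is standard.
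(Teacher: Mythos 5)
Your protocol's completeness direction and overall architecture (many copies per prover, a random check run blockwise, Hoeffding, final amplification via Harrow--Montanaro) are fine, but the soundness argument has a genuine hole at exactly the step you flag: the de Finetti reduction. The finite quantum de Finetti theorem you would need (e.g.\ Christandl--K\"onig--Mitchison--Renner) has trace-norm error scaling polynomially in the \emph{local dimension} of a single sub-register, roughly $O(k\,d^2/K)$ with $d = 2^{p_1(n)}$ here; with only $K=\poly(n)$ copies this error is exponentially large, so the claim that the symmetrized $\sigma_A$ has its $k$-register marginal close to $\int \rho^{\otimes k}d\mu(\rho)$ is simply unavailable. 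Measurement-restricted de Finetti theorems (Brand\~ao--Harrow) do not rescue this either, because each of prover $A$'s sub-registers is measured \emph{jointly} with the corresponding sub-register of $B$, and since $\sigma_B$ may be entangled across its $K$ blocks, the induced POVM on $A_1,\dots,A_K$ is not product or one-way LOCC across those blocks. The difficulty is not cosmetic: without forcing the copies to be (close to) identical product witnesses, a cheating pair can place \emph{different} product states $\sigma_{A_j}\otimes\sigma_{B_j}$ on different blocks, and although each such block individually has a $1/m$ fraction of bad indices, the bad sets can differ across $j$, so the empirical mean $\hat A$ can look consistent for every $i$; exchangeability alone (your random permutation) gives symmetry but not the i.i.d.\ mixture structure your conditional Hoeffding argument requires. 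A secondary, smaller issue is that ``parallel repetition in $\QMAt$'' to boost the $\Omega(1/m)$ gap is itself nontrivial (provers can entangle across repetitions); it is true, but only by invoking the Harrow--Montanaro amplification machinery, not by naive repetition.

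The paper avoids de Finetti entirely: it uses the $\cfont{SymQMA}$ characterization of $\QMAt$ (all $k$ unentangled proofs promised identical), has the verifier with probability $1/2$ run the Harrow--Montanaro product test across the $(AR_A):(BR_B)$ cut on a random pair of copies, and with probability $1/2$ run a random super-verifier check on all copies; the product-test soundness bound together with the perturbation lemma ($|\bra{\Xi}P\ket{\Xi}-\bra{\alpha\otimes\beta}P\ket{\alpha\otimes\beta}|\le\sqrt{\gamma}$) reduces the analysis to honest i.i.d.\ product witnesses, where Hoeffding applies, and the two branches are combined by optimizing over the infidelity $\gamma$. If you want to salvage your route, you would need to replace the de Finetti step by such a product test (or an equivalent mechanism enforcing identical product copies), at which point you essentially recover the paper's proof.
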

\begin{proof}
Let us use another characterization of $\QMAt$ called $\SymQMA$. This class was defined in Ref.\ \cite{ABDFS09} where we have the promise that the $k$-unentangled proofs are all the same. Aaronson et al. proved that $\QMAt=\SymQMA$ under the $\QMAt$ amplification conjecture that was resolved in the positive in Ref.\ \cite{HM10}.\\
In order to simulate the $\PSQMA_{\poly}$ protocol with a $\SymQMA$ protocol, we do the following:
\begin{itemize}
    \item with $1/2$ probability we pick a random pair of witnesses and run the product test (Protocol 1 \cite{HM10}). Accept iff the product test outputs ``product".
    \item with $1/2$ probability pick $i\in[m]$ uniformly at random and run $V_i$ on all k-copies. Let $r'$ be the number of $1$'s measured divided by $k$. Accept iff $|r'-r_{x,i}|\le s_{x,i}+\frac{\epsilon}{2}$.
\end{itemize}
\textbf{Completeness:} We know from the promise of $\PSQMA_{\poly}$ there exists $(\rho_A\otimes\rho_B)$ for which $|\Tr(\Pi^{(1)}V_{x,i}(\rho_A\otimes\rho_B)V_{x,i}^{\dagger})-r_{x,i}|\le s_{x,i},\;\forall i\in[m]$. The verifier expects purifications $\ket{\Psi}_{AR_A}\otimes \ket{\Phi}_{BR_B}$ (i.e., $\Tr_{R_A}\ket{\Psi}\bra{\Psi}=\rho_A,\; \Tr_{R_B}\ket{\Phi}\bra{\Phi}=\rho_B$). In this case, step 1 (product test across $(AR_A):(BR_B)$) accepts with probability 1. In step 2 the verifier traces out $R_A,R_B$ and runs $V_{x,i}$ only on $A,B$. This preserves exactly the target statistic on $\rho_A\otimes\rho_B$. Now according to the Hoeffding bound, for $k=\Theta(n/\epsilon^2)$, the probability that $|r'-\Tr(\Pi^{(1)}V_{x,i}(\rho_A\otimes\rho_B)V_{x,i}^{\dagger})|\le \epsilon/2$ is at least $1-2^{-\Omega(n)}$. Thus, we have
\[
|r'-r_{x,i}|\le s_{x,i}+\frac{\epsilon}{2}
\]
with probability at least $1-2^{-\Omega(n)}$. That leads to a total acceptance probability of
\[
p_{\cfont{acc}}\ge\frac{1}{2}+\frac{1}{2}(1-2^{-\Omega(n)}).
\]
\textbf{Soundness:} 
Let $\ket{\Xi}_{AR_ABR_B}$ be an arbitrary pure state sent as the single-copy symmetric witness, and let 
\[
\gamma:=1-\max_{\ket{\alpha},\ket{\beta}}|\bra{\Xi}\ket{\alpha}_{AR_A}\otimes\ket{\beta}_{BR_B}|^2.
\]
be its infidelity to the closest pure product state across the cut $(AR_A):(BR_B)$.
\begin{itemize}
\item By Theorem 1 in Ref.\ \cite{HM10}, the product test rejects with probability at least $\frac{11}{512}\gamma$.
\item By Lemma 22 in Ref.\ \cite{HM10} we have that for any $0\le P\le I$:
\[
|\bra{\Xi}P\ket{\Xi}-\bra{\alpha\otimes\beta}P\ket{\alpha\otimes\beta}|\le \sqrt\gamma
\]
In particular, for $P=(V_{x,i}^{\dagger}\Pi^{(1)}V_{x,i})_{AB}\otimes I_{R_AR_B}$ the single-copy acceptance probability of check $i$ on $\ket{\Xi}$ differs from that on the closest product state by at most $\sqrt{\gamma}$.
\end{itemize}
We distinguish two cases.

\textbf{Case 1}: $\ket{\Xi}$ is far from a product state. Suppose $\sqrt{\gamma}>\epsilon/4$ and so $\gamma>\epsilon^2/16$. Therefore the product test branch rejects with probability $\frac{11}{512} \gamma>\Omega(\epsilon^2).$

\textbf{Case 2}: $\ket{\Xi}$ is close to a product state. Suppose $\sqrt{\gamma}\le \frac{\epsilon}{4}$. Let $\ket{\alpha}_{AR_A}\otimes\ket{\beta}_{BR_B}$ be the closest product state to $\ket{\Xi}$. Since we are in the NO case of $\PSQMA_{\poly}$, at least $1/m$ fraction of indices $i\in[m]$ are bad for this product state by $\epsilon$. That is, for those bad $i$,
\[
|\Tr(\Pi^{(1)}V_{x,i}(\rho_A\otimes\rho_B)V_{x,i}^{\dagger})-r_{x,i}|\ge s_{x,i}+\epsilon
\]

Lemma 22 in \cite{HM10} gives $\left|
\bra{\Xi}P_i\ket{\Xi}-\Tr(\Pi^{(1)}V_{x,i}(\rho_A\otimes\rho_B)V_{x,i}^{\dagger})
\right|
\le \frac{\epsilon}{4}$.

Hence, by the triangle inequality,
\[
\begin{aligned}
|\bra{\Xi}P_i\ket{\Xi}-r_{x,i}|
&\ge
\left|\Tr(\Pi^{(1)}V_{x,i}(\rho_A\otimes\rho_B)V_{x,i}^{\dagger})-r_{x,i}\right| -
\left|
\bra{\Xi}P_i\ket{\Xi} - \Tr(\Pi^{(1)}V_{x,i}(\rho_A\otimes\rho_B)V_{x,i}^{\dagger})
\right|  \\
&\ge s_{x,i}+\epsilon-\frac{\epsilon}{4}
=
s_{x,i}+\frac{3\epsilon}{4}.
\end{aligned}
\]

Now for the verifier in the second branch to accept, conditioned on choosing a bad index $i$, the empirical average must deviate from its true mean by at least $\epsilon/4$. By Hoeffding's inequality this happens with probability at most $2^{-\Omega(k\epsilon^2)}$. Since a uniformly random index $i\in [m]$ is bad with probability at least $1/m$, the second branch rejects with probability at least $\frac{1}{m}(1-2^{-\Omega(k\epsilon^2)})$.

Thus, a NO instance is rejected with inverse polynomial probability while a YES instance is accepted with probability exponentially close to 1. Applying the Harrow-Montanaro amplification theorem of $\QMAt$ completes the proof.
\end{proof}   
\begin{corollary}\label{cor:PSQMA=QMAii}
    $\PSQMA_{\poly}=\QMAt$.
\end{corollary}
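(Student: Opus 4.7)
The plan is essentially immediate: the corollary asserts a set equality between two complexity classes, and the two directions of this equality are exactly the content of Lemma \ref{lem:QMAii-subset-PSQMA} ($\QMAt \subseteq \PSQMA_{\poly}$) and Lemma \ref{lem:PSQMA-subset-QMAii} ($\PSQMA_{\poly} \subseteq \QMAt$), which are already proved in the excerpt. So the only step is to compose the two containments and note that set inclusion is transitive; no new technical content is needed.

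For orientation, I would briefly recall the shape of the argument on each side. The forward direction $\QMAt \subseteq \PSQMA_{\poly}$ is essentially syntactic: a $\QMAt$ verifier $V$ is repackaged as a super-verifier emitting the single check $(V, r=1, s=1/3)$, and the product-state promise $\rho_A \otimes \rho_B$ in the $\PSQMA$ definition precisely matches the bipartite tensor structure of $\QMAt$ witnesses, so with $m=1$ and $\epsilon=1/6$ both completeness and soundness transfer verbatim. The reverse direction $\PSQMA_{\poly} \subseteq \QMAt$ is more substantive: it leverages the Harrow–Montanaro product test to enforce unentanglement across the bipartition in purified $\QMAt$ witnesses, a Hoeffding-style estimate over $k=n/\epsilon^2$ copies to approximate the super-verifier's target $r_{x,i}$ within $s_{x,i}+\epsilon/2$, and the $\SymQMA = \QMAt$ equivalence (together with the resolved $\QMAt$ amplification conjecture) to allow feeding the same copy into each check.

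The only aspect that would require any care if I were writing it from scratch is that the soundness parameters truly compose: the $1/m$ gap guaranteed by $\PSQMA_{\poly}$ is what drives the second branch of the protocol in Lemma \ref{lem:PSQMA-subset-QMAii}, while the product-test branch handles entangled inputs, and a constant promise gap for $\QMAt$ is only recovered after polynomial amplification. Since all of this bookkeeping is already discharged inside the two lemmas, the corollary itself reduces to a one-line citation of Lemmas \ref{lem:QMAii-subset-PSQMA} and \ref{lem:PSQMA-subset-QMAii}. I do not anticipate any obstacle, as the hard work (in particular the product-test analysis and the $\SymQMA$ reduction) has already been done upstream.
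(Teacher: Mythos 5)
Your proposal is correct and matches the paper exactly: the paper's proof of this corollary is a one-line citation of \cref{lem:QMAii-subset-PSQMA,lem:PSQMA-subset-QMAii}, which is precisely what you do, and your summary of the two directions accurately reflects their content.
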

\begin{proof}
 Follows from  \cref{lem:QMAii-subset-PSQMA,lem:PSQMA-subset-QMAii}.  
\end{proof}
\begin{definition}
    [$\ProductObsCon_{\poly}$] Define $\ProductObsCon_{\poly}$ as in \cref{def:obscon} with $\rho=\rho_A\otimes\rho_B$ and $m=\poly(n)$.
\end{definition}
\begin{lemma}\label{lem:PObsCon-complete-PSQMA}
    $\ProductObsCon_{\poly}$ is $ \PSQMA_{\poly}$-complete.
\end{lemma}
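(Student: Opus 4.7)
The plan is to recycle the two propositions already established for $\ObsConPoly$, namely \cref{prop:ObsCon_{poly}-in-QMAplus} and \cref{prop:ObsCon_{poly}-hard-QMAplus}, observing that every quantifier over states $\rho$ in both the problem definition and the class definition is simultaneously replaced by a quantifier over product states $\rho = \rho_A \otimes \rho_B$, and that nothing in either argument depends on the internal structure of the witness. Since the product restriction appears on both sides of each reduction --- in the $\ProductObsCon_{\poly}$ promise (\cref{def:PObs}) and in the $\PSQMA_{\poly}$ promise (\cref{def:PSQMA}) --- the two arguments go through mutatis mutandis.

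For the containment $\ProductObsCon_{\poly} \in \PSQMA_{\poly}$, I will reuse the super-verifier constructed in \cref{prop:ObsCon_{poly}-in-QMAplus}: sample $i \in [m]$ uniformly, measure $O_i$ on the product witness to obtain an eigenvalue $\lambda_j$, and flip a biased coin with head-probability $(1+\lambda_j)/2$, accepting on heads. Setting $r_{x,i}=(1+y_i)/2$, $s_{x,i}=\alpha/2$, and $\epsilon=(\beta-\alpha)/4$, the overall acceptance probability is $\tfrac{1}{2}(1+\Tr(O_i\,\rho_A\otimes\rho_B))$, so completeness and soundness drop out exactly as before, now with the YES-witness being a product state and the NO-quantifier ranging over product states. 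The product structure is immaterial to the verifier because it never manipulates the proof registers across the $A\!:\!B$ cut.

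For $\PSQMA_{\poly}$-hardness I will replay the mapping of \cref{prop:ObsCon_{poly}-hard-QMAplus}: given super-verifier checks $\{(V_i,r_i,s_i)\}_{i=1}^m$ with gap $\epsilon$, define $O_i := t_i\, V_i^{\dagger}\,\Pi^{(1)}\,V_i$ and $y_i := t_i r_i$, using the same $t_i$, $\tau$, $\alpha$, $\beta$ as there. Completeness: the product witness from the YES case of $\PSQMA_{\poly}$ serves directly as the product consistent state for the mapped $\ProductObsCon_{\poly}$ instance, since the multiplication by $t_i$ is linear and preserves each tolerance bound. Soundness: for every product $\rho_A\otimes\rho_B$ some bad index $i^\ast$ violates the $s_{i^\ast}+\epsilon$ tolerance, which rescales to the required $\beta$-violation exactly as in the non-product case.

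The only potential obstacle --- and the one step I would double-check carefully --- is that neither direction secretly introduces entanglement across the $A\!:\!B$ cut (for instance via ancilla registers, via an amplification step, or via a measurement that mixes the two halves), since any such move would break the equivalence between the product quantifier in the problem and in the class. The verifier of the containment direction is local in the sense that it only measures $O_i$ and uses a private classical coin, while the reduction in the hardness direction is purely a classical relabelling of the circuit family with no cross-partition manipulation, so in both cases the product structure is preserved. Combining the two directions then yields $\PSQMA_{\poly}$-completeness of $\ProductObsCon_{\poly}$, which by \cref{cor:PSQMA=QMAii} is equivalent to $\QMAt$-completeness.
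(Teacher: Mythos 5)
Your proposal is correct and takes essentially the same route as the paper: the paper's proof likewise just invokes \cref{prop:ObsCon_{poly}-in-QMAplus} for containment and \cref{prop:ObsCon_{poly}-hard-QMAplus} for hardness, replacing $\rho$ by $\rho_A\otimes\rho_B$ throughout and noting that everything goes through unchanged. Your extra check that neither direction introduces entanglement across the $A\!:\!B$ cut is a sensible (if implicit in the paper) sanity check, not a divergence.
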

\begin{proof}
    The containment is shown as in \cref{prop:obsconpoly-in-superqmapoly} and the hardness as in
    \cref{prop:obsconpoly-hard-superqmapoly}. Replace $\rho$ with $\rho_A\otimes\rho_B$ and everything else follows as is.
\end{proof}
\begin{corollary}
$\ProductObsCon_{\poly}$ is $\QMAt$-complete.
\end{corollary}
\begin{proof}
Follows from \cref{lem:PObsCon-complete-PSQMA,cor:PSQMA=QMAii}.
\end{proof}
Now in order to see what happens in the case of exponentially many checks we follow the same pattern as before and make use of the classes $\PSQMA_{\exp}$ and $\Pqc$. 

We start by showing the equivalence of the two classes.

\begin{lemma}\label{lem:Pqc-in-PUSQMA}
    $\Pqc\subseteq\PSQMA_{\exp}$.
\end{lemma}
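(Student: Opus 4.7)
The plan is to mimic the proof of \Cref{lem:qc-USQMA} essentially verbatim, observing that the hardwiring reduction from $\qc$ to $\SuperQMAExp$ touches only the classical proof register and leaves the quantum proof register untouched. Hence the product-state promise on the quantum witness is automatically preserved, and the same construction gives a reduction from $\Pqc$ to $\PSQMA_{\exp}$.

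Concretely, let $L \in \Pqc$ with verifier $V$ that takes a product quantum proof $\rho_A \otimes \rho_B$ (of sizes $p_1(n)$ and $p_2(n)$ qubits), a classical proof $c \in \{0,1\}^{c(n)}$, and decides membership with completeness/soundness $2/3$. I would define a super-verifier $V'$ whose set of checks is indexed by $c \in \{0,1\}^{c(n)}$ (so $m = 2^{c(n)}$, matching the $\exp$ regime of \Cref{def:SQMA}). On index $c$, $V'$ outputs the check $(V_c, r=1, s=1/3)$, where $V_c$ is $V$ with the classical string $c$ hardwired into its classical proof register. All $V_c$ can be generated efficiently from $c$, so the succinct access assumption of \Cref{def:SQMA} is satisfied. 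Set $\epsilon = 1/6$.

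For completeness, a YES instance of $\Pqc$ gives a product state $\rho_A \otimes \rho_B$ such that for every $c$, $\Tr(\Pi^{(1)} V_c (\rho_A \otimes \rho_B) V_c^\dagger) \geq 2/3$, which means $|\Tr(\Pi^{(1)} V_c (\rho_A \otimes \rho_B) V_c^\dagger) - 1| \leq 1/3 = s$ for every $c$; hence the super-verifier's check succeeds with probability $1$ over $c$, as required by the $\PSQMA_{\exp}$ completeness condition. For soundness, a NO instance of $\Pqc$ guarantees that for every product $\rho_A \otimes \rho_B$ there is some $c^\star$ with $\Tr(\Pi^{(1)} V_{c^\star} (\rho_A \otimes \rho_B) V_{c^\star}^\dagger) \leq 1/3$, so $|\Tr(\Pi^{(1)} V_{c^\star} (\rho_A \otimes \rho_B) V_{c^\star}^\dagger) - 1| \geq 2/3 > s + \epsilon = 1/2$; since $c^\star$ contributes weight at least $1/m$ under uniform sampling, the check's success probability is at most $1 - 1/m$, matching the $\PSQMA_{\exp}$ soundness condition.

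There is no real obstacle here: the only point to check carefully is that \Cref{def:PSQMA} and \Cref{def:P-qc-sigma2} impose the product-state promise on the \emph{same} quantum register (the $\exists$-quantified proof), so hardwiring the universal classical proof does not disturb the structure of the existential quantum proof. If one wished to be pedantic, one could note that if the completeness in \Cref{def:P-qc-sigma2} is weaker (e.g.\ $2/3$ instead of $1$), a standard $\QMA$-style amplification on $V$ before hardwiring yields exponentially small completeness error, after which the same argument with $r = 1$, $s = 1/3$, $\epsilon = 1/6$ goes through.
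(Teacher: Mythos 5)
Your proposal is correct and follows essentially the same route as the paper: the paper proves this lemma by declaring it ``completely analogous'' to \Cref{lem:qc-USQMA}, with the sole difference that $\rho$ has the form $\rho_A\otimes\rho_B$, which is exactly the hardwiring construction and parameter choice $(r=1,\ s=1/3,\ \epsilon=1/6)$ you spell out. Your observation that the product-state promise lives on the quantum register and is untouched by hardwiring the classical proof is precisely the point the paper leaves implicit.
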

\begin{proof}
    Given a verifier $V$ for a language $L\in \Pqc(c,s)$, where $\Delta:=c-s\ge 1/\poly(n)$, we construct a product-state super-verifier $V'$ for $L$. Start by hardwiring the classical proof $y$ into the verifier $V$ of $\Pqc$, let us call it $V_y$. The super-verifier's checks are now parametrized by the classical strings $y\in\{0,1\}^{q(n)}$, i.e., $m=2^{q(n)}$. Construct a super-verifier $V'$ that on input $x$ picks uniformly at random a challenge $y$ and outputs the check $\left(V_y, r=1, s'=1-c\right)$. This satisfies the definition of $\PSQMA_{\exp}$ with $\epsilon=(c-s)/2$.\\
    \textbf{Completeness:} Let $ x\in L$ then  $\exists\rho_A\otimes\rho_B$ such that $\forall y\; \Tr(\Pi^{(1)}V_y(\rho_A\otimes\rho_B) V_y^\dagger)\geq c$. It is easy to see that the condition $\left|\Tr(\Pi^{(1)} V_y(\rho_A\otimes\rho_B)V_y^{\dagger})-1\right|\leq1-c=s'$
    is satisfied for all $y$.\\
\textbf{Soundness:} Let $x\notin L$ then $\forall\rho_A\otimes\rho_B\;\exists y$ s.t. $\Tr(\Pi^{(1)}V_y(\rho_A\otimes\rho_B)V_y^\dagger)\leq s$.
This means that the condition $\left|\Tr(\Pi^{(1)}V_y(\rho_A\otimes\rho_B)V_y^{\dagger})-1\right|
\geq 1-s$
is satisfied for at least one $y$, for each $\rho_A\otimes\rho_B$. Since $1-s=(1-c)+(c-s)> (1-c)+\frac{c-s}{2}=s'+\epsilon$,
we get that the condition is violated by more than $s'+\epsilon$ for at least one $y$, for each $\rho_A\otimes\rho_B$, so with probability $\geq \frac{1}{m}$.
\end{proof}
\begin{lemma}\label{lem:PUSQMA-in-Pqc}
    $\PSQMA_{\exp}\subseteq\Pqc$.
\end{lemma}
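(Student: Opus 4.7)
The plan is to mirror \cref{lem:USQMA-qc} while leveraging the bipartite product promise of $\Pqc$. Given $L\in\PSQMA_{\exp}$ with super-verifier $V$ producing checks $\{(V_i,r_i,s_i)\}_{i\in[m]}$ and gap $\epsilon$, I will set $k=n/\epsilon^2$ and construct a $\Pqc$ verifier $V'$ as follows. The $\exists$ quantum prover supplies $\sigma=\sigma_A\otimes\sigma_B$, where $\sigma_A$ and $\sigma_B$ act on $kp_1(|x|)$ and $kp_2(|x|)$ qubits respectively, parsed into $k$ equal-size sub-registers $A_1,\ldots,A_k$ and $B_1,\ldots,B_k$; the honest witness is $(\rho_A^\star)^{\otimes k}\otimes(\rho_B^\star)^{\otimes k}$, where $\rho_A^\star\otimes\rho_B^\star$ is an accepting $\PSQMA_{\exp}$ witness. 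The $\forall$ classical prover sends an index $i\in[m]$ in $\lceil\log m\rceil$ bits. On input $(\sigma,i)$, $V'$ first samples uniform independent permutations $\pi_A,\pi_B\in S_k$ (as classical randomness) and reorders the $A$-side and $B$-side sub-registers accordingly via $\cfont{SWAP}$ gates, then queries the super-verifier on $i$ to obtain $(V_i,r_i,s_i)$, runs $V_i$ on each pair $(A_j,B_j)$ to collect bits $b_1,\ldots,b_k$, sets $\hat A=\tfrac{1}{k}\sum_j b_j$, and accepts iff $|\hat A-r_i|\le s_i+\epsilon/2$.

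Completeness will proceed exactly as in \cref{lem:USQMA-qc}. The honest witness $(\rho_A^\star)^{\otimes k}\otimes(\rho_B^\star)^{\otimes k}$ is invariant under within-side sub-register permutations, so the symmetrization is a no-op and the $k$ output bits become i.i.d.\ Bernoulli with mean $p_i=\Tr(\Pi^{(1)}V_i(\rho_A^\star\otimes\rho_B^\star)V_i^\dagger)$ satisfying $|p_i-r_i|\le s_i$. Hoeffding with $k=n/\epsilon^2$ will then yield $|\hat A-p_i|\le\epsilon/2$ with probability $1-2^{-\Omega(n)}$, so $V'$ accepts with probability $\ge 2/3$ for every $i$.

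For soundness, the independent $A|B$ symmetrization is precisely what makes the bipartite product promise usable. For any adversarial $\sigma=\sigma_A\otimes\sigma_B$, the expectation over $\pi_A,\pi_B$ of the marginal on each post-permutation pair $(A_j,B_j)$ equals $\bar\sigma_A\otimes\bar\sigma_B$, where $\bar\sigma_X=\tfrac{1}{k}\sum_j\sigma_X^{(j)}$, because the two permutations are sampled independently. Crucially, $\bar\sigma_A\otimes\bar\sigma_B$ is a genuine \emph{product} state -- not merely a separable mixture -- so the $\PSQMA_{\exp}$ no-promise applies and supplies an index $i^\star$ with $|\Tr(\Pi^{(1)}V_{i^\star}(\bar\sigma_A\otimes\bar\sigma_B)V_{i^\star}^\dagger)-r_{i^\star}|\ge s_{i^\star}+\epsilon$, which the $\forall$ quantifier sends (the $\forall$ sees $\sigma$ by the quantifier order in \cref{def:P-qc-sigma2}). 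After first amplifying each $V_i$ by standard $\QMA$-style repetition -- which preserves succinct access to the check family -- so as to drive $\epsilon$ close to $1$, a Markov argument identical to the one invoked in \cref{lem:USQMA-qc} will give $|\hat A-r_{i^\star}|>s_{i^\star}+\epsilon/2$ with probability $\ge 2/3$. The main obstacle, and the only genuine deviation from \cref{lem:USQMA-qc}, is this symmetrization step: without independent $A$- and $B$-side permutations the averaged pair-marginal $\tfrac{1}{k}\sum_j\sigma_A^{(j)}\otimes\sigma_B^{(j)}$ is merely separable, and the $\PSQMA_{\exp}$ no-promise cannot be invoked on it directly.
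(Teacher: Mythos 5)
Your core mechanism is correct but genuinely different from the paper's. The paper's verifier lets the $\forall$ prover either name a check $i$ or \emph{advise swap tests across blocks}: the swap tests force the $k$ blocks to be (nearly) identical product states, so that in the NO case the block average reflects the statistics of a single product state to which the $\PSQMA_{\exp}$ no-promise can be applied (at the cost of a continuity argument for ``almost identical'' blocks, which the paper's sketch glosses over). You instead symmetrize: applying independent uniform permutations $\pi_A,\pi_B$ to the $A$-side and $B$-side sub-registers makes the expected marginal on every pair equal to $\bar\sigma_A\otimes\bar\sigma_B$, a genuine product state, so the no-promise applies directly, with no swap-test branch and no closeness trade-off. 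This is a legitimate and arguably cleaner way to exploit the bipartite product promise; your completeness (Hoeffding on the honest i.i.d.\ witness, for every $c$) and the Markov step mirror \cref{lem:USQMA-qc} exactly, and the quantifier order indeed allows $i^\star$ to depend on $\sigma$.

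The weak point is your final boost from the $\Omega(\epsilon)$ rejection probability that Markov yields to the required $2/3$. You propose to amplify each $V_i$ ``by standard QMA-style repetition so as to drive $\epsilon$ close to $1$'', but a closeness-to-$r_i$ check is not an accept/reject predicate: repeating $V_i$ inside the check needs extra witness copies, and soundness against witnesses entangled across those copies (while also respecting the product promise) is precisely the difficulty the $k$-block averaging plus Markov was introduced to circumvent, so this inner amplification is not obviously available. The paper instead amplifies the whole protocol after fixing the classical proof $c$, using $\QMAt$-style (Harrow--Montanaro) amplification; your permutation-based verifier supports that step verbatim, and substituting it makes your argument go through at the same level of rigor as the paper's proof sketch.
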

\begin{proof}
Given a product-state super-verifier for $L\in\PSQMA_{\exp}$, say $\{(V_{x,i},r_{x,i},s_{x,i})\}_{i\in[m]}$ where $m=\exp(n)$ we construct a $\Pqc$ verifier for $L$. For a product state $\rho_A\otimes\rho_B$ define the acceptance probability of check $i$ as:
\[
p_i(\rho_A,\rho_B):=\Tr\left(\Pi^{(1)}V_{x,i}(\rho_A\otimes\rho_B)V_{x,i}^{\dagger}\right)
\]
The existential proof is a product state $\rho_A\otimes\rho_B$ and the universal proof is a pair $(i,b)$, where $i\in[m]$ and $b\in\{+,-\}$. The sign $b$ tells the verifier whether the violation is above or below the allowed interval. For each $i$ define:
\[
u_i:=\min\{1,r_{x,i}+s_{x,i}\},\quad \ell_i:=\max\{0,r_{x,i}-s_{x,i}\}
\]
So the YES interval is in $[\ell_i,u_i]$.

If $b=+$, the verifier wants to punish the case where $p_i(\rho_A,\rho_B)>u_i$. For that it runs $V_{x,i}$ on $\rho_A\otimes\rho_B$, obtains an output bit $X\in\{0,1\}$ and then postprocesses as follows:
\[
\Pr[\text{accept}|X=0]=\frac{1+u_i}{2},\quad \Pr[\text{accept}|X=1]=\frac{u_i}{2}
\]
So the total acceptance probability is $q_i^+(\rho_A,\rho_B)=\frac{1+u_i-p_i(\rho_A,\rho_B)}{2}$.

If $b=-$, the verifier wants to punish the case where $p_i(\rho_A,\rho_B)<\ell_i$. For that it runs $V_{x,i}$, obtains $X\in\{0,1\}$, and postprocesses as follows:
\[
\Pr[\text{accept}|X=0]=\frac{1-\ell_i}{2},\quad \Pr[\text{accept}|X=1]=1-\frac{\ell_i}{2}
\]
Therefore the total acceptance probability is $q_i^-(\rho_A,\rho_B)=\frac{1-\ell_i+p_i(\rho_A,\rho_B)}{2}$.

\textbf{Completeness}: Suppose $x\in L$. Then, by the $\PSQMA_{\exp}$ completeness condition, there exists a product state $\rho_A\otimes\rho_B$ such that for every $i\in[m]$, $|p_i(\rho_A,\rho_B)-r_{x,i}|\le s_{x,i}$ and so $\ell_i\le p_i(\rho_A,\rho_B)\le u_i$ for every $i$. Now consider any universal proof $(i,+)$. Since $p_i(\rho_A,\rho_B)\le u_i$ we have
\[
q_i^+(\rho_A,\rho_B)=\frac{1+u_i-p_i(\rho_A,\rho_B)}{2}\ge\frac{1}{2}.
\]
Similarly, for any universal proof $(i,-)$, since $p_i(\rho_A,\rho_B)\ge\ell_i$ we have
\[
q_i^-(\rho_A,\rho_B)=\frac{1-\ell_i+p_i(\rho_A,\rho_B)}{2}\ge\frac{1}{2}.
\]
Thus there exists a product witness such that for every universal classical proof, the verifier accepts with probability at least $c'=1/2$.

\textbf{Soundness}: Suppose $x\notin L$. Then, by the $\PSQMA_{\exp}$ soundness condition, for every product state $\rho_A\otimes\rho_B$, there exists some check $i\in[m]$ such that $|p_i(\rho_A,\rho_B)-r_{x,i}|\ge s_{x,i}+\epsilon$. There are two cases:

\textbf{Case 1}: The acceptance probability is too high, i.e.,  $p_i(\rho_A,\rho_B)\ge r_{x,i}+s_{x,i}+\epsilon$ and so $u_i=r_{x,i}+s_{x,i}$. The universal prover sends $(i,+)$. Then 
\[
q_i^+(\rho_A,\rho_B)=\frac{1+u_i-p_i(\rho_A,\rho_B)}{2}\le\frac{1}{2}-\frac{\epsilon}{2}
\]
\textbf{Case 2}: The acceptance probability is too low, i.e., $p_i(\rho_A,\rho_B)\le r_{x,i}-s_{x,i}-\epsilon$ and so $\ell_i=r_{x,i}-s_{x,i}$. The universal prover sends $(i,-)$. Then 
\[
q_i^-(\rho_A,\rho_B)=\frac{1-\ell_i+p_i(\rho_A,\rho_B)}{2}\le\frac{1}{2}-\frac{\epsilon}{2}.
\]
Thus, for every product state $\rho_A\otimes\rho_B$, there exists a universal classical proof $(i,b)$ such that the verifier accepts with probability at most $s'=1/2-\epsilon/2$. Therefore the gap is
\[
c'-s'=\frac{1}{2}-(\frac{1}{2}-\frac{\epsilon}{2})=\frac{\epsilon}{2}\ge \frac{1}{\poly(n)}
\] 
Hence $L\in\Pqc$.
\end{proof}
\begin{corollary}\label{cor:pqc-equals-psqmaexp}
  $\Pqc=\PSQMA_{\exp}$. 
\end{corollary}
\begin{proof}
    Follows from \cref{lem:Pqc-in-PUSQMA,lem:PUSQMA-in-Pqc}.
\end{proof}
\begin{definition}
    [$\ProductObsCon_{\exp}$] Define $\ProductObsCon_{\exp}$ as 
    in \cref{def:obscon} with $\rho=\rho_A\otimes\rho_B$ and $m=\exp(n)$.
\end{definition}
\begin{lemma}\label{lem:PObsCon-eq-PUSQMA}
 $\ProductObsCon_{\exp}$ is $ \PSQMA_{\exp}=\Pqc$-complete.  
\end{lemma}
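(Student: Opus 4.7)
The plan is to adapt the four previous completeness proofs in the paper by combining (i) the product-state modification used in \cref{lem:PObsCon-complete-PSQMA} for the polynomial case with (ii) the exponential-checks handling used in \cref{prop: CSV_{exp} in} and \cref{prop:CSV_{exp} hard}. The two modifications are essentially orthogonal: restricting the witness to $\rho=\rho_A\otimes\rho_B$ affects only what promise on the proof space is preserved, while allowing $m=\exp(n)$ only requires that each check be generatable in polynomial time via the succinct access assumption.

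For containment $\ProductObsCon_{\exp}\subseteq\PSQMA_{\exp}$, I will run the same super-verifier template as in \cref{prop:ObsCon_{poly}-in-QMAplus}: sample an index $i\in[m]$ uniformly, measure the observable $O_i$ on the product witness $\rho_A\otimes\rho_B$, and apply the biased-coin subroutine with targets $r_{x,i}=(1+y_i)/2$ and tolerance $s_{x,i}=\alpha/2$, yielding $\epsilon=(\beta-\alpha)/4$. The product-state promise on the witness is simply inherited by the super-verifier, which matches \cref{def:PSQMA}. Succinct access to the $(O_i,y_i)$ pairs ensures that the check $(V_{x,i},r_{x,i},s_{x,i})$ can still be produced in $\poly(n)$-time even when $m$ is exponential, exactly as in \cref{prop: CSV_{exp} in}.

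For the hardness $\PSQMA_{\exp}\subseteq\ProductObsCon_{\exp}$, I will re-use the mapping from \cref{prop:ObsCon_{poly}-hard-QMAplus}: given a $\PSQMA_{\exp}$ super-verifier's check $(V_i,r_i,s_i)$, set $s_i':=\max\{s_i,\tau\}$, $t_i:=\tau/s_i'$, and output $O_i:=t_i(V_i^\dagger\Pi^{(1)}V_i)$, $y_i:=t_ir_i$, with uniform thresholds $\alpha:=\tau$ and $\beta:=\tau+\tau\epsilon'$ for $\tau=\epsilon/4$, $\epsilon'=\epsilon/2$. The completeness and soundness calculations from \cref{prop:ObsCon_{poly}-hard-QMAplus} proceed check-by-check and are insensitive to (a) whether the witness is product or general, and (b) how many checks there are, so they carry over verbatim. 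The product-state guarantee on $\rho_A\otimes\rho_B$ is preserved in both directions since $\Tr(O_i(\rho_A\otimes\rho_B))=t_i\Tr(V_i^\dagger\Pi^{(1)}V_i(\rho_A\otimes\rho_B))$. The only point requiring care is that succinct access to the exponentially many output pairs $(O_i,y_i)$ is maintained; this follows immediately because, by \cref{def:SQMA}, the super-verifier produces $(V_{x,i},r_{x,i},s_{x,i})$ in $\poly(n)$-time from $i$, and $O_i$ and $y_i$ are defined directly from this data.

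The equivalence with $\Pqc$-completeness is then immediate by \cref{cor:qc2(2)=SQMA(2)exp}. I do not anticipate a substantive obstacle here: the polynomial-case proofs already isolate the product-state structure cleanly, and the non-product exponential-case proofs isolate the succinct-access mechanism; the lemma is essentially the product of these two independent modifications. The only step worth double-checking is that the reduction's parameter gap $\beta-\alpha=\tau\epsilon'\ge\epsilon^2/8\ge 1/\poly(n)$ survives when $m=\exp(n)$, which it does since $\epsilon$ and the super-verifier's per-check amplification are unaffected by the size of the check family.
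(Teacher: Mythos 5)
Your proposal is correct and follows essentially the same route as the paper: the paper's proof likewise just invokes the containment argument of \cref{prop:ObsCon_{poly}-in-QMAplus} and the hardness reduction of \cref{prop:CSV_{exp} hard}, noting that the only change is replacing $\rho$ by the product state $\rho_A\otimes\rho_B$, with the $\Pqc$ identification supplied by \cref{cor:qc2(2)=SQMA(2)exp}. Your additional remarks on succinct access and the preserved gap $\beta-\alpha\ge\epsilon^2/8$ are consistent with how those earlier propositions are used.
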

\begin{proof}
  The containment proof is similar as in \cref{prop:obsconpoly-in-superqmapoly} and the hardness proof as in \cref{prop:obsconexp-hard-superqmaexp}. The only difference is that instead of a state $\rho$, we have a product state $\rho_A\otimes\rho_B$. 
\end{proof}
\section{Variants of $\CSV$: Robustness and multiple shadow consistency}\label{scn:variant}

In this section we introduce two natural variants of $\CSV$: a \emph{randomized} (sampled-shadow) formulation and a \emph{multiple-shadow} formulation. We prove that the sampled and explicit versions are equivalent under efficient randomized reductions, and that the multiple-shadow variant is computationally equivalent to the single-shadow $\CSV$.

\paragraph{Randomized definition.}

\begin{definition}[Sampled classical shadow]\label{def:sampled}
    A sampled shadow on $n$ qubits is a $4$-tuple $(S,O,A,\chi)$, such that 
    \begin{itemize}
        \item (Shadow) $S$ is an unknown distribution according to which we can sample $\poly(n)$-bit strings,
        \item (Observables) $O=\set{O_i}_{i=1}^m$ is a set of $n$-qubit observables, where $1\leq m\leq 2^{p(n)}$. Given index $i$, a $\poly(n)$-bit description of $O_i$ can be produced in $\poly(n)$-time. Moreover, there exists a $\poly(n)$-time quantum algorithm which, for any $O_i$ and any $n$-qubit state $\rho$, applies measurement $O_i$ to $\rho$.
        \item (Recovery algorithm) $A$ is a $\poly(n)$-time classical algorithm which, given a list $T=(s_1,\dots,s_N)$ of $N=\poly(n)$ samples drawn independently according to $S$, and given $i\in[m]$, produces real number $A(T,i)\in [-1,1]$ within $\chi$ bits of precision.
    \end{itemize}    
\end{definition}

\begin{definition}[Sampled Classical Shadow Validity (SampleCSV)]
The input is a sampled classical shadow $(S,O,A,\chi)$, an integer $N=\poly(n)$, parameters $\alpha,\beta$ satisfying $\beta-\alpha\geq 1/\poly(n)$ and a confidence parameter $0<\delta<1/2$. Let $T=(s_1,\dots,s_N)$ denote a list of $N$ independent samples drawn according to $S$. Decide between the following two cases:
\begin{itemize}
    \item \textbf{Yes}: $\exists$ an $n$-qubit state $\rho$ s.t.$\Pr_{T\sim S^N}\left[\forall i\in[m],\left|\Tr(O_i\rho)-A(T,i)\right|\le\alpha\right]\ge 1-\delta$.
    \item \textbf{No}: $\Pr_{T\sim S^N}\left[\forall\text{ $n$-qubit states }\rho, \exists i\in[m]\text{ such that }\left|\Tr(O_i\rho)-A(T,i)\right| \ge\beta\right]\geq 1-\delta$.
\end{itemize}
As usual, assume without loss of generality that $0\le\alpha<\beta\le2$.
\end{definition}

\begin{lemma}
    $\CSV \leq \SampleCSV$.
\end{lemma}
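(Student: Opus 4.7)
The plan is to give a deterministic (in fact, trivially efficient) many-one reduction that hard-codes the explicit shadow into the recovery algorithm of the sampled instance, so that the randomness built into $\SampleCSV$ plays no role. The only thing one needs to check is that the $\SampleCSV$ input model permits such a reduction; since the shadow $S$ in a $\CSV$ instance is a multiset of $\poly(n)$ bit strings, it has a $\poly(n)$-size description and can be baked into a polynomial-time algorithm directly.

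Concretely, given a $\CSV$ instance $(S,O,A,\chi,\alpha,\beta)$ with $S=\{s_1,\ldots,s_N\}$, I would output the $\SampleCSV$ instance $(S',O,A',\chi,\alpha,\beta,\delta)$ with $\delta$ arbitrary (say $\delta=1/2$), constructed as follows. Let $S'$ be the uniform distribution over the multiset $S$, which is efficiently samplable by drawing a uniform $j\in[N]$ and returning $s_j$; keep the observable set $O$ unchanged. Define $A'$ to be the polynomial-time algorithm that, on input a sample set $\{s_j'\}_{j=1}^{N'}$ from $S'$ and an index $i\in[m]$, ignores its sample input entirely and returns $A(S,i)$, where the multiset $S$ is hard-coded into the description of $A'$. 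This is a valid $\poly(n)$-time recovery algorithm since $|S|\in\poly(n)$ and $A$ itself runs in polynomial time, and $A'(\{s_j'\},i)\in[-1,1]$ to $\chi$ bits of precision by the same property of $A$.

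For correctness, observe that because $A'$ discards its samples, the event $|\Tr(O_i\rho)-A'(\{s_j'\},i)|\le\alpha$ is deterministic in $(\rho,i)$ and coincides with $|\Tr(O_i\rho)-A(S,i)|\le\alpha$, and analogously for $\beta$. Hence, in the YES case, any state $\rho$ witnessing the $\CSV$ instance satisfies the $\SampleCSV$ condition with probability $1\ge1-\delta$ over samples drawn from $S'$; in the NO case, for every $\rho$ the bad index $i$ guaranteed by $\CSV$ certifies violation with probability $1\ge1-\delta$. Thus YES and NO instances are preserved.

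There is really no technical obstacle to overcome: the sampled formulation is at least as expressive as the explicit one, and the reduction trivializes the randomness of $\SampleCSV$. The only small point worth emphasizing is that the recovery algorithm in \Cref{def:sampled} is not restricted to use its samples meaningfully, so allowing $A'$ to ignore them and consult the hard-coded $S$ is fully consistent with the definition. The converse direction $\SampleCSV\le_r\CSV$ is where the randomized reduction becomes nontrivial and an estimate on the number of samples and a concentration argument on the $\delta$ parameter would be required, but that is outside the scope of this lemma.
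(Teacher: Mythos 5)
Your reduction is correct: nothing in \Cref{def:sampled} forces the recovery algorithm to actually use its samples, so hard-coding the poly-size multiset $S$ into $A'$ and having $A'$ output $A(S,i)$ regardless of the drawn strings yields a valid $\SampleCSV$ instance in which the YES/NO conditions hold deterministically, hence with probability $1\ge 1-\delta$ for any $\delta$ and any number of samples $N$. This is, however, a genuinely different (and more degenerate) route than the paper's. The paper instead defines $S'$ as the uniform distribution over the \emph{labeled} pairs $\{(j,s_j)\}_{j=1}^L$ and lets $A'$ reconstruct the multiset from the position labels before running $A$; a coupon-collector bound then shows that with $N\ge L(\ln L+\ln(1/\delta))$ samples all labels appear with probability at least $1-\delta$, on which event $A'$ outputs exactly $A(S,i)$. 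Your version buys simplicity and independence of $N$ and $\delta$, since the randomness of the sampled model is trivialized entirely; the paper's version buys a recovery algorithm whose output genuinely depends on the samples (the shadow is transmitted through the sampling channel rather than baked into $A'$), which is closer in spirit to how a sampled shadow is meant to be used, at the cost of the coupon-collector sample-count requirement. Both establish the lemma as stated.
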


\begin{proof}
Let $(S,O,A,\chi,\alpha,\beta)$ be a $\CSV$ instance, where
$S=(s_1,\dots,s_L)$ is the given shadow. Fix any $0<\delta<1/2$ with $\log(1/\delta)=\poly(n)$, and set $N:=\left\lceil L(\ln L+\ln(1/\delta))\right\rceil$ .
We construct a $\SampleCSV$ instance $(S',O',A',\chi',N,\alpha',\beta',\delta)$ as follows.

The distribution $S'$ is uniform over the labelled samples $(1,s_1),\dots,(L,s_L)$.
We keep the observables, precision, and parameters unchanged: $O'=O,\; \chi'=\chi,\;\alpha'=\alpha,\; \beta'=\beta$.
The recovery algorithm $A'$ acts as follows. Given a list $T=((j_1,t_1),\dots,(j_N,t_N))$ of $N$ independent samples from $S'$ and an index $i$, it checks whether every label $j\in[L]$ appears at least once. If yes, it keeps one sample
for each label, orders these samples by their labels, discards the labels,
and obtains the original shadow $(s_1,\ldots,s_L)$. It then outputs $A(S,i)$. If some label is missing, $A'$ outputs an arbitrary default value, say $0$.

Let $\mathcal E$ be the event that all labels $1,\dots,L$ appear among
the $N$ samples. By the coupon collector bound,
\[
    \Pr[\mathcal E]\ge 1-Le^{-N/L}\ge 1-\delta .
\]
Conditioned on $\mathcal E$, the algorithm $A'$ reconstructs $S$ exactly, and therefore $A'(T,i)=A(S,i)$ for every $i\in[m]$.

\textbf{Completeness.}
Suppose the original $\CSV$ instance is a YES instance. Then there exists
an $n$-qubit state $\rho$ such that, for all $i\in[m]$, $\left|\Tr(O_i\rho)-A(S,i)\right|\le \alpha$.
Conditioned on $\mathcal E$, for all $i\in[m]$ we have
\[
    \left|\Tr(O'_i\rho)-A'(T,i)\right|
    =
    \left|\Tr(O_i\rho)-A(S,i)\right|
    \le \alpha'.
\]
Since $\Pr[\mathcal E]\ge 1-\delta$, the constructed $\SampleCSV$
instance satisfies the YES condition.

\textbf{Soundness.}
Suppose the original $\CSV$ instance is a NO instance. Then for every
$n$-qubit state $\rho$, there exists an index $i^*$ such that $\left|\Tr(O_{i^*}\rho)-A(S,i^*)\right|\ge \beta$.
Conditioned on $\mathcal E$, the same index $i^*$ satisfies
\[
    \left|\Tr(O'_{i^*}\rho)-A'(T,i^*)\right|
    =
    \left|\Tr(O_{i^*}\rho)-A(S,i^*)\right|
    \ge \beta' .
\]
Since this holds for every $\rho$ and since $\Pr[\mathcal E]\ge 1-\delta$,
the constructed $\SampleCSV$ instance satisfies the NO condition.
\end{proof}
  
\begin{lemma}
    $\SampleCSV\leq_{r}\CSV$.
\end{lemma}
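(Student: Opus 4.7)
The plan is the natural ``materialize the draw'' reduction. Given a $\SampleCSV$ instance $(S, O, A, \chi, \alpha, \beta, \delta)$, the reduction invokes the sample oracle $S$ a polynomial number of times to produce an explicit multiset $S'$, and outputs the $\CSV$ instance $(S', O, A, \chi, \alpha, \beta)$, possibly with $\alpha$ raised and $\beta$ lowered by an inverse-polynomial amount that the promised $1/\poly$ gap comfortably absorbs.

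The YES direction is essentially immediate: the $\SampleCSV$ YES promise supplies a witness $\rho^\ast$ such that, with probability at least $1-\delta$ over the draw, $|\Tr(O_i\rho^\ast) - A(S',i)| \le \alpha$ for every $i$; on that event $\rho^\ast$ is precisely a $\CSV$ YES witness for the output instance, so the oracle accepts correctly.

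The NO direction is the delicate step: the $\SampleCSV$ promise is a per-state concentration statement (for each fixed $\rho$, the draw violates the $\beta$-threshold somewhere with probability at least $1-\delta$), whereas $\CSV$ NO demands that the single realized $S'$ violate the threshold for every $\rho$ simultaneously. To swap the order of the universal quantifier over states and the probability, I would amplify by drawing $k = \poly(n)$ independent multisets $T_1,\dots,T_k$ and augmenting the output observable list with pairs $(O_i, A(T_j,i))$ indexed by $(i,j) \in [m]\times[k]$; this remains a valid $\CSV$ instance because the definition permits up to exponentially many observables and succinct access is preserved. By independence of the $T_j$, the per-state failure probability shrinks to at most $\delta^k$. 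A union bound over an $\epsilon$-net of the achievable expectation-value body $K = \{(\Tr(O_i\rho))_i : \rho\} \subseteq [-1,1]^m$, together with the $1$-Lipschitz dependence of the function $\rho \mapsto \max_{i,j}|\Tr(O_i\rho) - A(T_j,i)|$ with respect to the $\ell_\infty$ metric on $K$, lifts the per-state guarantee to a uniform one, establishing the $\CSV$ NO condition with constant probability over the draw.

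The main obstacle is calibrating $k$ against the $\epsilon$-net size: polynomially many observables give a net of size $\exp(\poly(n))$, so that polynomially many independent draws suffice to overwhelm it. The exponential-observable regime requires care and would likely proceed by taking the net in a more compact parametrization (for instance on a low-Frobenius-rank approximation of the observables or directly on the state side of the problem), but the overall template --- amplification followed by a netting plus Lipschitz argument --- is the same. Once the uniform NO guarantee is in place, a single oracle call to $\CSV$ on the output returns the correct decision with high probability, completing the randomized many-one reduction.
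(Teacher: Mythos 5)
Your proposal diverges from the paper at exactly the point you flag as delicate, and the detour introduces a genuine flaw. The paper's argument is just your first paragraph and nothing more: draw $\poly(n)$ strings once, keep $O$, $A$, $\chi$ and the thresholds unchanged, and read the $\SampleCSV$ promise as saying that the realized instance is a correct YES (resp.\ NO) instance of $\CSV$ with probability at least $1-\delta$; that single draw \emph{is} the randomized reduction. Your amplification step breaks the completeness direction. If you draw $k=\poly(n)$ independent multisets $T_1,\dots,T_k$ and require consistency with all pairs $(O_i,A(T_j,i))$, then in the YES case the promised witness $\rho^\ast$ is $\alpha$-consistent with any one draw only with probability $1-\delta$, hence with all $k$ draws simultaneously only with probability at least $(1-\delta)^k$ (or $1-k\delta$ by a union bound). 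Since $\delta\in(0,1)$ is an arbitrary input parameter --- it may be a constant such as $1/3$ --- this guarantee is exponentially small (respectively vacuous) once $k$ is polynomially large, so with overwhelming probability your output is not a YES instance of $\CSV$ (typically it even falls outside the promise). The independence you exploit to push the per-state soundness failure down to $\delta^k$ is precisely what destroys the YES case; the two cannot coexist unless $\delta$ is already inverse-polynomially small, and $\SampleCSV$ provides no internal mechanism to amplify $\delta$.

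Two smaller points. First, even granting the soundness side, your netting argument only closes the quantifier exchange for $m=\poly(n)$: with exponentially many observables the net of the expectation body has size doubly exponential, and a net ``on the state side'' is no better, since $n$-qubit density matrices live in dimension $4^n$, so an $\epsilon$-net has size $\exp\bigl(O(4^n\log(1/\epsilon))\bigr)$ and $\delta^k$ cannot beat it with $k=\poly(n)$ draws; this regime is not repaired by the parametrizations you suggest. Second, your criticism of the per-state reading of the NO promise is a fair observation about the definition, but the paper does not resolve it by any concentration or netting argument --- it simply treats the $1-\delta$ guarantee as the success probability of the one-shot reduction --- so if you insist on the per-state reading, the correct move is to argue (or re-read the definition so) that the single realized draw already certifies the NO case, not to stack draws in a way that invalidates completeness.
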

\begin{proof}[Proof sketch]
    If we have a YES (respectively, NO) $\SampleCSV$ instance, sample $N$ strings from distribution $S$, and keep the set of observables and the recovery algorithm the same; this yields a $\CSV$ instance. With probability $\ge1-\delta$, this randomized reduction succeeds, i.e., maps YES (respectively, NO) instances to YES (NO) instances. 
\end{proof}

\paragraph{Multiple shadow consistency.}
The following problem differs from the previous definitions as the input is multiple shadows, and the question is if these shadows can all stem from the same state $\rho$. This is motivated by considering the case where we have (e.g.) two shadows, one of which captures local observable measurements, and the other which targets non-local 
measurements.
\begin{definition}[Multiple classical shadow validity ($\MCSV$)]\label{def:MCSV}
The input is a set of classical shadows $\set{(S_k,O_k,A_k,\chi_k)}_{k=1}^K$ with $K=\poly(n)$, set of parameters $\left(\alpha_k, \beta_k\right)_{k=1}^{K}$ with $k\in[K]$ satisfying $\beta_k-\alpha_k\geq 1/\poly(n)$ decide between the following two cases:
\begin{itemize}
    \item \textbf{Yes}: $\exists$  $n$-qubit state $\rho$ s.t.\ $\forall$ $k\in [K]$, $\forall$ $i\in [M_k]$, 
    $\left| \Tr\left(O_{k,i}\rho\right)-A_k(S_k,i)\right|\leq \alpha_k$.
    \item \textbf{No}: $\forall$ states $\rho$ $\exists$ some $k\in [K]$ and $i\in [M_k]$ s.t.\ $\left| \Tr \left(O_{k,i}\rho\right)-A_k(S_k,i)\right|\geq \beta_k$.
\end{itemize}
Assume succinct access to the observable set and that $0\le\alpha_k<\beta_k\le2\;\;\forall k\in[K]$.
\end{definition}
Though the above definition makes the importance of the $\MCSV$ problem and its difference from $\CSV$ more apparent, it would again be useful for our analysis to define an abstract syntactic variant problem in the same spirit as $\ObsCon$: 

\begin{definition}[Blockwise observable consistency ($\MTObsCon$)]\label{def:BLOC}
The input is $K=\poly(n)$ sets of observables along with their respective expectation values and their tolerance parameters, $\{(O_{k,i}, y_{k,i})_{i=1}^{M_k}, \alpha_k, \beta_k\}_{k=1}^K$ satisfying $\beta_k-\alpha_k\geq 1/\poly(n)$, decide between the following two cases:
\begin{itemize}
    \item \textbf{Yes}: $\exists$  $n$-qubit state $\rho$ s.t.\ $\forall$ $i\in [M_k],k\in[K]$, 
    $\left| \Tr\left(O_{k,i}\rho\right)-y_{k,i}\right|\leq \alpha_k$.
    \item \textbf{No}: $\forall$ $n$-qubit states $\rho$ $\exists$ $i\in [M_k],k\in[K]$ s.t.\ $\left| \Tr \left(O_{k,i}\rho\right)-y_{k,i}\right|\geq \beta_k$.
\end{itemize}
As usual we assume succinct access to both the observables and the expectation values and that $y_{k,i}\in [-1,1]$ and $0\le\alpha_k<\beta_k\le2\;\;\forall k\in[K]$.
  
\end{definition}
Let us now quickly see why these problems are equivalent.
\begin{lemma}\label{lem:MCSV=BLOC}
    $\MCSV$ and $\MTObsCon$ are equivalent under polynomial-time many-one reductions.
\end{lemma}
\begin{proof}
Both directions are straightforward
\begin{itemize}
\item $\MCSV\le\MTObsCon$. Keep the same observables and define $y_{k,i}:=A_k(S_k,i)$.
\item $\MTObsCon\le\MCSV$. Keep the same observables, use dummy shadows $S_k$ and a recovery algorithm that ignores $S_k$ and outputs $y_{k,i}$. 
\end{itemize}
\end{proof}

\begin{definition}
    [$\MTObsCon_{\poly}$] Same as in \cref{def:BLOC} with $M_k=\poly(n)\;\forall k$.
\end{definition}

\begin{lemma}\label{lem:ObsCon-in-BLOC}
$\ObsConPoly\le\MTObsCon_{\poly}$.
\end{lemma}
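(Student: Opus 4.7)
The plan is to observe that $\ObsConPoly$ is precisely the single-block ($K=1$) case of $\MTObsCon_{\poly}$, so the reduction is essentially trivial and consists of a syntactic repackaging of the input.

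Concretely, given an $\ObsConPoly$ instance $\{(O_i, y_i)\}_{i=1}^m$ with parameters $\alpha, \beta$ satisfying $\beta - \alpha \ge 1/\poly(n)$, I would construct an $\MTObsCon_{\poly}$ instance with $K = 1$ block by setting $M_1 := m$, $O_{1,i} := O_i$ and $y_{1,i} := y_i$ for all $i \in [m]$, together with block-parameters $\alpha_1 := \alpha$ and $\beta_1 := \beta$. This mapping is clearly computable in polynomial time and preserves succinct access, since no new observables are created.

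For correctness, it suffices to unwind the definitions. In the YES case, the witness state $\rho$ from $\ObsConPoly$ satisfies $|\Tr(O_i \rho) - y_i| \le \alpha$ for all $i \in [m]$, which is exactly the $K=1$ version of the YES condition of $\MTObsCon_{\poly}$, since the only block is the one we created and $\alpha_1 = \alpha$. In the NO case, for every state $\rho$ there exists $i \in [m]$ with $|\Tr(O_i \rho) - y_i| \ge \beta = \beta_1$, which is the required NO condition of the constructed $\MTObsCon_{\poly}$ instance (again, the only possible choice of $k$ is $k=1$). The promise gap $\beta_1 - \alpha_1 = \beta - \alpha \ge 1/\poly(n)$ is inherited as well.

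There is no genuine obstacle here; the only thing to verify is that all syntactic requirements of \Cref{def:BLOC} (succinct access to observables and expectation values, $y_{k,i} \in [-1,1]$, $0 \le \alpha_k < \beta_k \le 2$) follow immediately from the corresponding assumptions on the $\ObsConPoly$ instance. The more substantive direction ($\MTObsCon_{\poly} \le \ObsConPoly$) — if the authors pursue it to establish computational equivalence — would be the interesting one, requiring one to merge the different tolerance parameters $(\alpha_k, \beta_k)$ into a single global gap, presumably via a rescaling of each observable analogous to the $t_i$-scaling in the proof of \Cref{prop:ObsCon_{poly}-hard-QMAplus}.
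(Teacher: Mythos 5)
Your proposal is correct and matches the paper's proof, which likewise observes that $\ObsConPoly$ is simply the $K=1$ special case of $\MTObsCon_{\poly}$; your write-up just spells out the trivial identity mapping and condition-checking in more detail.
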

\begin{proof}
    This is trivial because $\ObsConPoly$ is just a special case of $\MTObsCon_{\poly}$ with $K=1$.
\end{proof}

\begin{lemma}\label{lem:BLOC-in-ObsCon}
$\MTObsCon_{\poly}\le \ObsConPoly$. 
\end{lemma}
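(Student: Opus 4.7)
The plan is to mimic the rescaling trick used in the proof of $\SuperQMAPoly$-hardness of $\ObsConPoly$ (\cref{prop:ObsCon_{poly}-hard-QMAplus}). The only structural difference between $\MTObsCon_{\poly}$ and $\ObsConPoly$ is that the tolerance thresholds $(\alpha_k,\beta_k)$ are block-dependent rather than uniform; therefore the main task is to absorb the per-block thresholds into a per-block rescaling of the observables and their targets, so that a \emph{single} uniform threshold pair $(\alpha,\beta)$ suffices. Concretely, let $\epsilon^\star:=\min_{k\in[K]}(\beta_k-\alpha_k)\ge 1/\poly(n)$, set $\tau:=\epsilon^\star/4$, and for each $k$ define $\alpha_k':=\max(\alpha_k,\tau)$ and rescaling factor $t_k:=\tau/\alpha_k'\in(0,1]$ (the $\max$ is only there to avoid a division by zero when $\alpha_k=0$).

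The reduction then outputs the $\ObsConPoly$ instance with observables $O'_{k,i}:=t_k O_{k,i}$, targets $y'_{k,i}:=t_k y_{k,i}$, and uniform thresholds
\begin{equation*}
\alpha:=\tau,\qquad \beta:=\tau+\tfrac{3\tau\epsilon^\star}{8}.
\end{equation*}
Since $t_k\le 1$, we retain $\|O'_{k,i}\|_\infty\le 1$, $y'_{k,i}\in[-1,1]$, and $0\le\alpha<\beta\le 2$, and moreover $\beta-\alpha=3\tau\epsilon^\star/8\ge 1/\poly(n)$. The total number of observables is $\sum_k M_k\le\poly(n)$, and all quantities can be computed from the succinct access assumed of the input, so the reduction is deterministic poly-time.

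For correctness: in the YES case the \emph{same} state $\rho$ witnesses the mapped instance, since $|\Tr(O'_{k,i}\rho)-y'_{k,i}|=t_k|\Tr(O_{k,i}\rho)-y_{k,i}|\le t_k\alpha_k'=\tau=\alpha$ for every $k,i$. In the NO case, for every $\rho$ some pair $(k^\star,i^\star)$ satisfies $|\Tr(O_{k^\star,i^\star}\rho)-y_{k^\star,i^\star}|\ge\beta_{k^\star}\ge\alpha_{k^\star}+\epsilon^\star\ge\alpha_{k^\star}'-\tau+\epsilon^\star=\alpha_{k^\star}'+3\epsilon^\star/4$. Multiplying by $t_{k^\star}$ and using $\alpha_{k^\star}'\le 2$ (an immediate consequence of $\alpha_k\le 2$ and $\tau\le 1$) gives $|\Tr(O'_{k^\star,i^\star}\rho)-y'_{k^\star,i^\star}|\ge \tau+(3\tau\epsilon^\star)/(4\alpha_{k^\star}')\ge \tau+3\tau\epsilon^\star/8=\beta$, establishing the NO side of $\ObsConPoly$.

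I do not foresee a real obstacle: the only delicate point is the $\max(\alpha_k,\tau)$ in the definition of $\alpha_k'$, which simultaneously (a) prevents division by zero when some $\alpha_k$ is zero and (b) ensures $t_k\le 1$, so that the normalization constraints on observables and targets are preserved after rescaling. All remaining checks are routine arithmetic.
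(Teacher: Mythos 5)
Your proposal is correct and follows essentially the same route as the paper: the same per-block rescaling $t_k=\tau/\alpha_k'$ with $\tau=\tfrac14\min_k(\beta_k-\alpha_k)$ and $\alpha_k'=\max(\alpha_k,\tau)$, the same uniform $\alpha=\tau$, and the same soundness chain using $\alpha_k'\le 2$. The only difference is a cosmetic choice of constant in $\beta$ (you keep the factor $3\epsilon^\star/4$ and set $\beta=\tau+3\tau\epsilon^\star/8$, while the paper relaxes to $\epsilon^\star/2$ and sets $\beta=\tau+\tau\epsilon^\star/4$), which does not change the argument.
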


\begin{proof}
A $\MTObsCon$ instance has $K=\poly(n)$ blocks indexed by $k\in[K]$, where each block provides pairs $\{(O_{k,i},y_{k,i})\}_{i=1}^{M_k}$ on $n$ qubits and a tolerance parameter pair $(\alpha_k,\beta_k)$ with $\beta_k-\alpha_k \ge 1/\poly(n)$. \\
\\
\textbf{Parameters:} Let
\[
g:=\min_k(\beta_k-\alpha_k),\;\;\tau=g/4,\;\;\alpha_k':=\max\{\alpha_k, \tau\},\;\;t_k:=\tau/\alpha_k'
\]
\textbf{Mapping:}
The reduction outputs the following $\ObsConPoly$ instance:
\[
O_j':=t_kO_{k,i},\;\;y_j':=t_ky_{k,i},\;\;\alpha:=\tau,\;\;\beta:=\tau+\frac{\tau g}{4}.
\]
Where $j\in[M_{\text{tot}}]$ is defined as $j:=S_{k-1}+i$ for $S_k:=\sum_{t=1}^{k}M_t$ and $M_{\text{tot}}:=\sum_{k=1}^KM_k$. Notice that these parameters ensure $\beta-\alpha= g^2/16\ge1/\poly(n)$.\\
\\
\textbf{Completeness:}
$\exists\;\rho$ such that $\forall\;k,i$
$|\Tr(O_{k,i}\rho)-y_{k,i}|\le \alpha_k \le \alpha'_k$\\
\\
Multiplying by $t_k$ gives us:
\[
|\Tr(O'_j\rho)-y'_j|
= |t_k\,\Tr(O_{k,i}\rho)-t_k\,y_{k,i}|
\le t_k \alpha'_k= \tau= \alpha
\]
so the constructed $\ObsConPoly$ instance is a YES instance.\\
\\
\textbf{Soundness:}
 $\forall\rho$ there exists some pair $(k,i)$ for which
\[
|\Tr(O_{k,i}\rho)-y_{k,i}|\ge \beta_{k}\ge \alpha_k+g\ge \alpha_k'-\tau+g=\alpha_k'+\frac{3g}{4}>\alpha_k'+\frac{g}{2}
\]
where the second inequality holds because $g\le\beta_k-\alpha_k$ by definition and the third inequality because $\alpha_k'\le \alpha_k+\tau$, again by definition.\\
Multiplying by $t_k$ gives:
\[
|\Tr(O'_j\rho)-y'_j|= t_{k}|\Tr(O_{k,i}\rho)-y_{k,i}|\ge \tau+\frac{\tau g}{2\alpha_k'} \ge\tau+\frac{\tau g}{4}=\beta.
\]
where the last inequality follows because $\alpha_k'\le 2$ by definition. The constructed $\ObsConPoly$ instance is a NO instance.

\end{proof}
\begin{corollary}\label{cor:BLOC-QMAcomplete}
    $\MTObsCon_{\poly}$ is $\QMA$-complete.
\end{corollary}
\begin{proof}
    Follows from \cref{cor:obsconpoly,lem:BLOC-in-ObsCon,lem:ObsCon-in-BLOC}.
\end{proof}
\begin{corollary}\label{cor:mcsvpoly_qma}
    $\MCSV_{\poly}$ is $\QMA$-complete.
\end{corollary}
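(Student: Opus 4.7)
The plan is straightforward: the corollary is an immediate consequence of the two sandwiching lemmas just above it, together with the earlier $\QMA$-completeness of $\ObsConPoly$. Since the real work has already been done in \Cref{lem:ObsCon-in-BLOC,lem:BLOC-in-ObsCon} and \Cref{cor:obsconpoly}, no new construction is needed here; the only task is to chain the reductions correctly.

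Concretely, for $\QMA$-hardness of $\MTObsCon_{\poly}$ I would invoke \Cref{lem:ObsCon-in-BLOC}, which gives the poly-time many-one reduction $\ObsConPoly \le \MTObsCon_{\poly}$, combined with the $\QMA$-hardness direction of \Cref{cor:obsconpoly}. For containment $\MTObsCon_{\poly} \in \QMA$, I would use \Cref{lem:BLOC-in-ObsCon}, which provides the reverse reduction $\MTObsCon_{\poly} \le \ObsConPoly$, and compose it with the $\QMA$ upper bound from \Cref{cor:obsconpoly}; since $\QMA$ is closed under poly-time many-one reductions, this yields $\MTObsCon_{\poly} \in \QMA$.

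There is no real obstacle to overcome at this point, as both lemmas have already been carefully set up: \Cref{lem:ObsCon-in-BLOC} is immediate because $\ObsConPoly$ is syntactically the $K=1$ special case of $\MTObsCon_{\poly}$, and \Cref{lem:BLOC-in-ObsCon} was established via the uniform rescaling trick that normalizes the per-block tolerances $(\alpha_k,\beta_k)$ into global thresholds $(\alpha,\beta)$ with the required inverse-polynomial gap. Thus the proof reduces to a one-line citation: combining \Cref{lem:ObsCon-in-BLOC,lem:BLOC-in-ObsCon} with \Cref{cor:obsconpoly} gives the claim.
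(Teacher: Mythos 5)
Your proposal is correct and matches the paper's own proof, which likewise obtains the corollary by chaining \Cref{lem:ObsCon-in-BLOC,lem:BLOC-in-ObsCon} with the $\QMA$-completeness of $\ObsConPoly$ from \Cref{cor:obsconpoly}. No further comment is needed.
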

\begin{proof}
    Follows from \cref{lem:MCSV=BLOC,cor:BLOC-QMAcomplete}.
\end{proof}
\begin{definition}
    [$\MTObsCon_{\exp}$] Same as in \cref{def:BLOC} with $M_k=\exp(n)$ for some $k$.
\end{definition}
\begin{lemma}\label{lem:BLOC=ObsCon_exp}
    $\ObsConExp\le \MTObsCon_{\exp}\le\ObsConExp$.
\end{lemma}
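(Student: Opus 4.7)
The plan splits into the two inclusions. The first, $\ObsCon_{\exp}\le\MTObsCon_{\exp}$, is immediate: any $\ObsCon_{\exp}$ instance is already a $\MTObsCon_{\exp}$ instance with $K=1$ block, so the identity reduction suffices.

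For the reverse direction $\MTObsCon_{\exp}\le \ObsCon_{\exp}$, my plan is to replay the reduction from the polynomial analogue (\Cref{lem:BLOC-in-ObsCon}) essentially verbatim, while carefully verifying that the succinct access assumption is preserved despite some blocks containing exponentially many observables. Concretely, I would set $g:=\min_k(\beta_k-\alpha_k)\ge 1/\poly(n)$, $\tau:=g/4$, $\alpha_k':=\max\{\alpha_k,\tau\}$, $t_k:=\tau/\alpha_k'$, and uniform thresholds $\alpha:=\tau$, $\beta:=\tau+\tau g/4$, which yields an inverse-polynomial promise gap. Letting $S_k:=\sum_{t=1}^k M_t$ and $M_{\textup{tot}}:=S_K$, the output instance lives on a single flat index $j=S_{k-1}+i\in[M_{\textup{tot}}]$, with $O_j':=t_k O_{k,i}$ and $y_j':=t_k y_{k,i}$.

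The only delicate step is maintaining succinct access in the exponential regime. Because $K=\poly(n)$ and each $M_k\le 2^{p(n)}$, all $K$ prefix sums $S_k$ fit in $\poly(n)$ bits and can be precomputed once (and stored as part of the reduction's output description) in $\poly(n)$ time. Given any queried $j\in[M_{\textup{tot}}]$, I would binary-search over $S_1,\ldots,S_K$ to locate in $O(\log K)$ comparisons the unique $k$ with $S_{k-1}<j\le S_k$, read off $i:=j-S_{k-1}$, invoke the assumed succinct access of the input to retrieve $(O_{k,i},y_{k,i})$ in $\poly(n)$ time, and finally emit $(t_k O_{k,i},\,t_k y_{k,i})$. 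Completeness and soundness then follow from the index-local per-pair inequalities already carried out in \Cref{lem:BLOC-in-ObsCon}: they are entirely local in the index $(k,i)$ and therefore indifferent to whether the global index set $[M_{\textup{tot}}]$ is polynomial or exponential in size.

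I expect the only real (and modest) obstacle to be precisely this indexing/bookkeeping bridge; the substantive complexity-theoretic content is already contained in the polynomial case. As a sanity check, one should confirm that the scaling factors $t_k\in(0,1]$ preserve the normalizations $\|O_j'\|_\infty\le 1$ and $y_j'\in[-1,1]$ assumed by $\ObsCon_{\exp}$, which they do since $t_k\le 1$, and that $\alpha,\beta\in[0,2]$ with $\beta-\alpha\ge 1/\poly(n)$, which is ensured by the choice $\tau=g/4$.
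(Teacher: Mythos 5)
Your proposal is correct and follows essentially the same route as the paper: the first inclusion via the trivial $K=1$ embedding, and the second by reusing the mapping of \cref{lem:BLOC-in-ObsCon} verbatim within the succinct-access framework, generating each scaled pair $(t_kO_{k,i},t_ky_{k,i})$ on demand from the queried index. Your explicit prefix-sum/binary-search bookkeeping is a slightly more detailed account of the index translation that the paper leaves implicit, but it is the same argument.
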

\begin{proof}
For the first reduction, the argument is again that $\ObsConExp$ is the special case of $\MTObsCon_{\exp}$ with $K=1$. The second reduction follows exactly as in \cref{lem:BLOC-in-ObsCon}, since there we presented a mapping that runs in polynomial time and can be bootstrapped in the succinct framework of the exponential cases. Given indices $k,i$ we get in polynomial time the $(k,i)$-th instance of $\MTObsCon_{\exp}$ (succinct access assumption), say $(O_{k,i},y_{k,i},\alpha_k,\beta_k)$, apply the poly time map given in \cref{lem:BLOC-in-ObsCon}  and get the $\ObsConExp$ instance $(O_j',y_j',\alpha,\beta)$. This concludes the reduction.
\end{proof}
\begin{corollary}\label{cor:BLOC_qccomplete}
    $\MTObsCon_{\exp}$ is $\qc$-complete.
\end{corollary}
\begin{proof}
    Follows from \cref{cor:ObsConexp-SQMAexp_hard,cor:qc=SQMA_exp,lem:BLOC=ObsCon_exp}.
\end{proof}
\begin{corollary}\label{cor:mcsvexp_qc}
    $\MCSV_{\exp}$ is $\qc$-complete.
\end{corollary}
\begin{proof}
    Follows from \cref{lem:MCSV=BLOC,cor:BLOC_qccomplete}.
\end{proof}
\subsection*{Acknowledgements}
The authors thank Asad Raza for helpful discussions.

SG was supported by the DFG under grant numbers 563388236 (Priority Programme ``Quantum Software, Algorithms and Systems – Concepts, Methods and Tools for the Quantum Software Stack'' (SPP 2514)), and 450041824, the BMFTR within the funding program ``Quantum Technologies - from Basic Research to Market'' via project PhoQuant (grant number 13N16103), and the project ``PhoQC'' from the programme ``Profilbildung 2020'', an initiative of the Ministry of Culture and Science of the State of North Rhine-Westphalia.

JE was supported by the BMFTR (QSolid, Hybrid++, QuSol, 
MUNIQC-Atoms, PasQuops), the Munich Quantum Valley (K-4 and K-8), the Quantum Flagship (PasQuans2, Millenion), QuantERA (HQCC), the  Clusters of Excellence MATH+ and ML4Q, the DFG (CRC 183, SPP 2514), Berlin Quantum, and the ERC (DebuQC).

\printbibliography 
 %\newpage
 \appendix
\section{$\CLDM\le \ObsConPoly$ }\label{sec:A}
\begin{definition}[Consistency of local density matrices problem (CLDM)\cite{BG22}]
\label{def:cldm}
Let $n\in\mathbb{N}$. The input consists of $((C_1,\rho_1),\ldots,(C_m,\rho_m))$ where
$C_i\subseteq [n]$ and $|C_i|\le k$, and $\rho_i$ is a density matrix on $|C_i|$ qubits
(whose entries are given to $\poly(n)$ precision). Given two parameters $\alpha'$ and $\beta'$,
decide which of the following holds:

\medskip
\noindent\textbf{Yes.} $\exists$ an $n$-qubit quantum state $\tau$ such that for every $i\in[m]$, $\left\|\, \Tr_{\overline{C_i}}(\tau) - \rho_i \,\right\|_{\Tr} \;\le\; \alpha'$ .

\noindent\textbf{No.} $\forall$ $n$-qubit quantum state $\tau$, there exists some $i\in[m]$ such that $\left\|\, \Tr_{\overline{C_i}}(\tau) - \rho_i \,\right\|_{\Tr} \;\ge\; \beta'$.

\end{definition}
\begin{lemma}[Lemma 3.3 \cite{BG22}]\label{lem:cldm-in-qma} The consistency of local density matrices problem is in $\QMA$ for any $k = O(\log n)$, and
$\alpha', \beta'$ such that
$
\epsilon \;:=\; \frac{\beta'}{4^{k}} - \alpha' \;\ge\; \frac{1}{\poly(n)} \, .
$
\end{lemma}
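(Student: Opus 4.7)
The plan is to place $\CLDM$ in $\QMA$ by having the prover send many copies of a global $n$-qubit state $\tau$ as the quantum witness, and having the verifier check a uniformly random local constraint by comparing the reduced state $\sigma_i := \Tr_{\overline{C_i}}(\tau)$ against the classically-described $\rho_i$. The key design choice is to bypass the need to directly estimate trace distance by working instead with Hilbert--Schmidt distance, which is accessible through $\mathsf{SWAP}$ tests, and then pay the (controlled) price of converting back to trace distance.

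First, I would fix the witness to be $K = \poly(n)$ independent copies of some $n$-qubit state. On input, the verifier samples $i \in [m]$ uniformly and, for several of the $K$ copies, traces out $\overline{C_i}$ to obtain i.i.d.\ copies of $\sigma_i$. Since $|C_i| \le k = O(\log n)$, these live in a Hilbert space of $\poly(n)$ dimension. Because $\rho_i$ is an explicitly given $k$-qubit density matrix, the verifier can efficiently prepare its purification and hence produce copies of $\rho_i$. $\mathsf{SWAP}$ tests between independent copies then yield estimates of $\Tr(\sigma_i \rho_i)$, $\Tr(\sigma_i^2)$, and $\Tr(\rho_i^2)$ to any desired inverse-polynomial additive error with only $\poly(n)$ samples, and the verifier combines them into an estimate of
\begin{equation*}
\|\sigma_i - \rho_i\|_2^2 = \Tr(\sigma_i^2) + \Tr(\rho_i^2) - 2\,\Tr(\sigma_i\rho_i),
\end{equation*}
accepting iff this estimate lies below a suitable threshold.

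The crucial quantitative step is the norm-conversion bound $\|A\|_{\Tr}^2 \le \operatorname{rank}(A)\cdot\|A\|_2^2 \le 2^k\,\|A\|_2^2$ for $A = \sigma_i - \rho_i$. In the YES case, every $i$ satisfies $\|\sigma_i-\rho_i\|_{\Tr}\le \alpha'$, so every Hilbert--Schmidt estimate is small and the verifier accepts with high probability. In the NO case, for any witness $\tau$ there exists $i^*$ with $\|\sigma_{i^*}-\rho_{i^*}\|_{\Tr}\ge\beta'$, which forces $\|\sigma_{i^*}-\rho_{i^*}\|_2^2 \ge \beta'^2/2^k$; the random choice of $i$ finds this bad constraint with probability $1/m$, and the promise $\epsilon = \beta'/4^k - \alpha' \ge 1/\poly(n)$ is exactly what ensures a $1/\poly(n)$ gap between the two Hilbert--Schmidt thresholds (the extra factor of $2^k$, giving the $4^k$ in the promise, absorbs the squaring introduced by measuring $\|\cdot\|_2^2$ rather than $\|\cdot\|_2$). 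Standard $\QMA$ amplification then boosts the resulting $1/\poly(n)$ gap to the usual $(2/3,1/3)$.

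The main obstacle I expect is the bookkeeping around the $2^k$ versus $4^k$ factors: one must carefully track how rank bounds, squared-norm estimation, and the $1/m$ soundness loss from the random index choice combine, and verify that the net required precision is $1/\poly(n)$, which is precisely what pins the regime to $k = O(\log n)$ (so that $4^k \in \poly(n)$) and forces the promise gap to scale as $\beta'/4^k - \alpha'$ rather than simply $\beta' - \alpha'$.
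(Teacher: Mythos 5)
There is a genuine gap in your soundness argument: you stipulate that the witness is ``$K=\poly(n)$ independent copies'' of a state, but a $\QMA$ prover cannot be forced to send unentangled, i.i.d.\ registers, and your protocol estimates a \emph{nonlinear} functional of the witness. The term $\Tr(\sigma_i\rho_i)$ is fine (the verifier's copy of $\rho_i$ is independent of the witness, so the SWAP-test statistic is linear in the witness and unbiased), but the purity term $\Tr(\sigma_i^2)$ is estimated by a SWAP test between \emph{two witness registers}, whose acceptance probability is $\tfrac12+\tfrac12\Tr(\mathrm{SWAP}\,\rho_{ab})$ for the joint (possibly entangled) state $\rho_{ab}$, not $\tfrac12+\tfrac12\Tr(\rho_a\rho_b)$. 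A cheating prover can correlate registers antisymmetrically (e.g.\ Werner/antisymmetric-type states, for which $\Tr(\mathrm{SWAP}\,\rho_{ab})$ is negative) so that the ``purity'' estimate is driven far below the true purity of the average reduced state; since your accept condition is that $\widehat{\Tr(\sigma_i^2)}+\Tr(\rho_i^2)-2\widehat{\Tr(\sigma_i\rho_i)}$ is \emph{small}, this bias is exactly in the prover's favor, and the NO-case analysis collapses. Nothing in the proposal addresses this (permutation-invariance of the witness does not help, and symmetry cannot be enforced by a single-prover verifier), so the protocol as described is not sound; at minimum one would need a de Finetti-type or symmetric-subspace argument, or a protocol that avoids cross-register statistics altogether. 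The parameter bookkeeping, by contrast, is fine: in the YES case $\|\sigma_i-\rho_i\|_2^2\le\alpha'^2$ and in the NO case $\|\sigma_{i^*}-\rho_{i^*}\|_2^2\ge\beta'^2/2^k$, and the stated promise does yield an inverse-polynomial gap between these thresholds.

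Note also that the cited proof (Liu; Broadbent--Grilo) takes a different and intrinsically safe route: the verifier picks a random $i$ and a random Pauli $P$ supported on $C_i$, measures $P$ once on a \emph{single} witness, and compares the induced acceptance probability to the classically computable $\Tr(P\rho_i)$ within the super-verifier ($\QMAplus$) framework of Aharonov--Regev, using $\QMAplus=\QMA$; this is essentially the composition of this paper's Appendix~A reduction with \Cref{prop:ObsCon_{poly}-in-QMAplus} and \Cref{Thm:QMAplus=QMA}. Because only linear functionals of one witness register are ever measured, entanglement across copies is a non-issue, and the $4^k$ in the promise arises from the Pauli decomposition ($\|\sigma-\rho\|_{\Tr}\le\sum_P|\Tr(P(\sigma-\rho))|$ over $4^k$ Paulis), not from a rank/Cauchy--Schwarz conversion as in your sketch. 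If you want to salvage your Hilbert--Schmidt approach, you must either eliminate the two-witness-register SWAP test or prove soundness against arbitrarily entangled witnesses; as written, that step fails.
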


\begin{lemma}
    $\CLDM\le \ObsConPoly$.
\end{lemma}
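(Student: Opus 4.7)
The plan is to encode the local density matrix constraints of $\CLDM$ as Pauli expectation value constraints in $\ObsConPoly$. For each input pair $(C_i,\rho_i)$, I would expand $\rho_i$ in the local Pauli basis, i.e.\ compute $\hat\rho_i(P) := \Tr(P\rho_i)$ for every Pauli string $P\in\{I,X,Y,Z\}^{|C_i|}$. The output $\ObsConPoly$ instance then has one observable per pair $(i,P)$: the observable $O_{i,P} := P\otimes I_{\overline{C_i}}$ with target expectation value $y_{i,P} := \hat\rho_i(P)$. Since $k = O(\log n)$, the total number of observables is $m\cdot 4^k = \poly(n)$, each observable has operator norm $1$, each target lies in $[-1,1]$, and each pair is efficiently describable, so succinct access is immediate. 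The reduction parameters will be $\alpha := \alpha'$ and $\beta := \beta'/2^k$.

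For completeness, if $\tau$ is a global state with $\|\Tr_{\overline{C_i}}(\tau)-\rho_i\|_{\Tr}\le \alpha'$ for every $i$, then H\"older's inequality gives
\begin{equation}
  \bigl|\Tr(O_{i,P}\tau) - y_{i,P}\bigr| \;=\; \bigl|\Tr\bigl(P\,(\Tr_{\overline{C_i}}(\tau)-\rho_i)\bigr)\bigr|\;\le\;\|P\|_{\infty}\,\|\Tr_{\overline{C_i}}(\tau)-\rho_i\|_{\Tr}\;\le\;\alpha',
\end{equation}
so $\tau$ witnesses a YES instance of $\ObsConPoly$ with tolerance $\alpha$.

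For soundness, assume the $\ObsConPoly$ instance is YES and let $\tau$ be a witness. I want to show that $\tau$ is also a $\CLDM$ witness with the original tolerance $\beta'$, by contrapositive: suppose some $i$ has $M := \Tr_{\overline{C_i}}(\tau) - \rho_i$ with $\|M\|_{\Tr}\ge \beta'$. Writing $M$ in the Pauli basis and using $\Tr(M^2) = d^{-1}\sum_P |\Tr(PM)|^2$ (with $d = 2^{|C_i|}\le 2^k$) together with $\|M\|_{\Tr}^2\le d\,\|M\|_F^2$, a pigeonhole argument over the $d^2\le 4^k$ Pauli strings yields some $P$ with
\begin{equation}
  |\Tr(PM)|\;\ge\;\beta'/2^k\;=\;\beta.
\end{equation}
This contradicts $\tau$ being an $\ObsConPoly$ witness for observable $O_{i,P}$. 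The promise gap survives since $\beta-\alpha = \beta'/2^k - \alpha'\ge \beta'/4^k - \alpha'\ge 1/\poly(n)$ by \Cref{lem:cldm-in-qma}.

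The main subtlety I expect is the precise norm conversion in the soundness step, specifically ensuring the $4^k$ blow-up in the pigeonhole bound is absorbed by the $\CLDM$ promise gap rather than destroying it; this is exactly why the hypothesis of \Cref{lem:cldm-in-qma} is stated in terms of $\beta'/4^k - \alpha'$. Everything else (normalization of observables, succinct access, polynomial number of observables) is routine given $k = O(\log n)$.
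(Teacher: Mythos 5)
Your reduction is correct and is essentially the paper's: the same map sending each $(C_i,\rho_i)$ to Pauli observables $P\otimes I_{\overline{C_i}}$ with targets $\Tr(P\rho_i)$, H\"older for completeness, and a pigeonhole over the local Pauli expansion for soundness. The only difference is quantitative: the paper pigeonholes via $\sum_P\lvert\Tr(PM)\rvert\ge\lVert M\rVert_{\Tr}$ to get $\beta=\beta'/4^k$, whereas your $\ell_2$ argument through $\lVert M\rVert_{\Tr}^2\le d\,\lVert M\rVert_F^2$ yields the slightly stronger $\beta=\beta'/2^k$, and both gaps are compatible with the hypothesis $\beta'/4^k-\alpha'\ge1/\poly(n)$ of \Cref{lem:cldm-in-qma}.
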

\begin{proof}
    The mapping is straightforward. For all $i\in[m]$ define $P_j\in P_{|C_i|}$, where $j\in[4^{|C_i|}]$, meaning all the Pauli matrices acting non trivially on the qubits in the $C_i\subseteq[n]$. Then we have
    \[
    O_{i,j}=P_j\in P_{|C_i|}, \;\; y_{i,j}=\Tr(P_j\rho_i),\;\; \alpha=\alpha', \;\; \beta :=\frac{\beta'}{4^k} 
    \]
    \textbf{Completeness:} In the YES case we have 
    \[
    \exists \;\;\text{$n$-qubit state} \;\; \tau \;\;\text{s.t.}\;\;\forall\;i\in[m]\;\; \left\|\, \Tr_{\overline{C_i}}(\tau) - \rho_i \,\right\|_{\Tr} \;\le\; \alpha'  
    \]
    which implies
    \[
    \exists \;\;\text{$n$-qubit state} \;\; \tau \;\;\text{s.t.}\;\;\forall\;i\in[m]\;\;\left| \Tr((P^{C_i}\otimes I^{\overline{C_i}})\tau)-\Tr(P\rho_i)    \right|\le \alpha' .
    \]
    After the mapping we get
    \[
    \exists \;\;\text{$n$-qubit state} \;\; \tau \;\;\text{s.t.}\;\;\forall i\in[m],j\in[4^{|C_i|}]\;\left| \Tr(O_{i,j}\tau)-y_{i,j}    \right|\le \alpha.   
    \]
\textbf{Soundness:} In the NO case we have
    \[\forall \;\;\text{$n$-qubit state} \;\; \tau \;\;\exists\;i\in[m]\;\;\text{s.t.}\;\; \left\|\, \Tr_{\overline{C_i}}(\tau) - \rho_i \,\right\|_{\Tr} \;\ge\; \beta'  
    \]
    which implies  
 \[
    \forall \;\;\text{$n$-qubit state} \;\; \tau \;\;\exists\;i\in[m]\;\;\text{s.t.}\;\;\left| \Tr((P^{C_i}\otimes I^{\overline{C_i}})\tau)-\Tr(P\rho_i)    \right|\ge \frac{\beta'}{4^{|C_i|}}\ge \frac{\beta'}{4^k}.
    \]
    After the mapping we get
    \[
    \forall \;\;\text{$n$-qubit state} \;\; \tau\;\;\exists \;i\in[m],j\in[4^{|C_i|}]\;\text{s.t.}\;\;\left| \Tr(O_{i,j}\tau)-y_{i,j}    \right|\ge \beta.
    \]   
\end{proof}

\section{$\iDCLDM$ on 8-level qudits is $\QMA$-complete}\label{sec:B}
Here we specialize the \cite{BG22} framework for simulatable history states and $\cfont{CLDM}$ hardness to the \cite{HNN13} 1D $d=8$ nearest-neighbor architecture. Concretely, we instantiate BG's simulatable verifier $V_x^{(s)}$ and their snapshot and interval simulation lemmas inside HNN's marker/work formalism and 2-local rule set, yielding a Karp reduction from any $L\in\QMA$ to an $\iDCLDM$ instance supported on single sites and edges of the HNN chain. Our proof follows BG's Theorem 3.4 and Lemma 3.5 (simulation of history states) at the level of local work states, while swapping Kitaev's unary-clock picture for HNN's timetable on an 8-state line.\\
\\
We begin by introducing some useful notation closely related to that of HNN:
Each site $j\in\{1,\dots,N\}$ from the 8-level qudit  chain in 
Ref.\ \cite{HNN13} has a local Hilbert space:
\[
\mathcal H_j
=\bigoplus_{k=1}^{4}\Big(\,\ket{M_k}\otimes \mathbb C\,\Big)\oplus\Big(\,\ket{Q}\otimes \mathbb C^{2}\,\Big)\oplus\Big(\,\ket{Q'}\otimes \mathbb C^{2}\,\Big).
\]
Here the marker symbols $M_k$ carry \emph{no} work qubit (work space $\mathbb C$),
while $Q$ and $Q'$ each carry \emph{one} work qubit (work space $\mathbb C^{2}$).

\paragraph{Time-$t$ configuration and snapshot.}
At step $t$ the HNN timetable fixes a marker string
\[
m(t)=(m_1(t),\dots,m_N(t))\in\{M_1,M_2,M_3,M_4,Q,Q'\}^{N},
\]
and the computation has a work (data) state $\ket{w(t)}$ on the tensor product of the
single-qubit spaces at sites with $m_j(t)\in\{Q,Q'\}$.
The full \emph{snapshot} (no history superposition) is
\[
\ket{\Psi_t} \;=\; \Big(\bigotimes_{j=1}^{N} \ket{m_j(t)}\Big)\ \otimes\ \ket{w(t)}.
\]
Notice that here the allowed marker strings are those specified in
Ref.\ \cite{HNN13}.
\paragraph{Local notation (sites and edges).}
For a site $i$ and an edge $e=(i,i+1)$ define the marker kets and projectors
\[
\ket{c_t(i)} := \ket{m_i(t)},
\qquad
\Pi^{\mathrm{mk}}_{t,i} := \ket{c_t(i)}\bra{c_t(i)},
\]
\[
\ket{c_t(e)} := \ket{m_i(t)} \otimes \ket{m_{i+1}(t)},
\qquad
\Pi^{\mathrm{mk}}_{t,e} := \ket{c_t(e)}\bra{c_t(e)}.
\]

\paragraph{Local work space at time $t$.}
For $S\subseteq\{1,\dots,N\}$ (we will use $S=\{i\}$ or $S=e=(i,i+1)$), set
\[
\mathsf W_S(t)\;=\;\bigotimes_{j\in S}
\begin{cases}
\mathbb C & \text{if } m_j(t)\in\{M_1,M_2,M_3,M_4\},\\[2pt]
\mathbb C^{2} & \text{if } m_j(t)\in\{Q,Q'\},
\end{cases}
\qquad
d_S(t)\;=\;\dim \mathsf W_S(t)\;=\;2^{\,n_S(t)},
\]
where $n_S(t):=\big|\{\,j\in S:\ m_j(t)\in\{Q,Q'\}\,\}\big|\in\{0,1,2\}$.

\paragraph{Local snapshot reduced state.}
The reduced snapshot on $S$ is
\[
X(x,t,S):=\operatorname{Tr}_{\overline S}\big(\ket{\Psi_t}\bra{\Psi_t}\big)
\;\in\; \big(\text{markers on }S\big)\ \otimes\ \mathsf L\big(\mathsf W_S(t)\big).
\]

\paragraph{Edge cases (dimensions).}
For an edge $e=(i,i+1)$ at time $t$:
\[
\dim \mathsf W_e(t)=
\begin{cases}
1 & \text{if }(m_i(t),m_{i+1}(t))\in\{M_1,\dots,M_4\}\times\{M_1,\dots,M_4\}\quad(MM),\\[2pt]
2 & \text{if exactly one of }m_i(t),m_{i+1}(t)\in\{Q,Q'\}\quad(MQ\ \text{or}\ QM),\\[2pt]
4 & \text{if }m_i(t),m_{i+1}(t)\in\{Q,Q'\}\quad(QQ,QQ',Q'Q,Q'Q').
\end{cases}
\]
\paragraph{Verification circuit.}
Start with a QMA verifier $V_x$, amplified so that completeness error
and soundness error are negligible. Fix a constant BG22 simulator
parameter $s\ge 5$ large enough for all site/edge interval simulations
below, and apply the BG22 simulatable compiler with parameter $s$,
obtaining the encoded verifier $V_x^{(s)}$. Thus Lemma 4.8 of
Ref.~\cite{BG22} gives a deterministic simulator for the reduced work
state on any work-qubit set $Y$ with $|Y|\le 3s+2$.

Next we insert SWAP and identity gates as needed, in order to make the
$V_x^{(s)}$ circuit nearest-neighbor, and apply the HNN construction to
obtain an equivalent computation $\widetilde V_x^{(s)}$ on a line of
$N=\poly(|x|)$ eight-dimensional qudits. The legal HNN configurations
carry the work qubits of the nearest-neighbor circuit $V_x^{(s)}$ in the
$Q,Q'$-type two-dimensional subspaces, and all marker patterns and
work-qubit locations are computable from the HNN timetable. Therefore
every site/edge HNN support, and every bounded HNN interval used below,
refers to only constantly many underlying work qubits. Our choice of $s$
ensures that this number is at most $3s+2$, so the BG22 simulator applies. \\
\\
We now prove that $\widetilde V_x^{(s)}$ is simulatable and that the
simulated marginals have zero combined energy with respect to the local
terms of the HNN circuit-to-Hamiltonian construction.\\
\\
We closely follow the exposition in Ref.\ \cite{BG22}, i.e., we first show simulatability and low energy for every snapshot of the computation on a good witness and for small intervals of the history state.

\begin{lemma}[HNN snapshot simulator]
\label{lem:SimSnap}
Fix a constant BG22 simulator parameter $s\ge 5$ large enough for the
site/edge simulations below. Let $\widetilde V_x^{(s)}$ be the
HNN-embedded verification computation constructed above. For every HNN
time $t$ and every support $S\in\{\{i\},\{i,i+1\}\}$, there is a
deterministic polynomial-time procedure $\SimSnap(x,t,S)$ which outputs
the classical description of an $|S|$-qudit density matrix
$\widehat X(x,t,S)$.

If $x\in A_{\mathrm{yes}}$, then for any good witness
$\widetilde\psi^{(s)}$ accepted by $\widetilde V_x^{(s)}$ with
probability at least $1-\negl(|x|)$, the output satisfies
\[
\left\|
\widehat X(x,t,S)
-
\Tr_{\overline S}\bigl(\ket{\Psi_t}\bra{\Psi_t}\bigr)
\right\|_{\Tr}
\le
\negl(|x|),
\]
where $\ket{\Psi_t}$ is the corresponding legal HNN snapshot at time $t$.

Moreover, the simulated work marginals satisfy the following special
properties needed for the HNN input and output checks:
\begin{enumerate}
\item At $t=0$, whenever $W_0(S)$ contains an initialized ancilla work
qubit $j$, its one-qubit simulated marginal is
\[
\Tr_{W_0(S)\setminus\{j\}}\rho(x,0,W_0(S))
=
\ket 0\bra 0 .
\]

\item At the final output-checking time, on the output-checking marker
sector, the simulated output qubit is the accepting branch
$\ket 1\bra 1$, in the sense of the post-decoding/output clause of the
BG22 simulator.
\end{enumerate}
\end{lemma}

\begin{proof}
The algorithm first computes from the HNN timetable the local marker sector
$\Pi^{\mathrm{mk}}_{t,S}$ and the set $W_t(S)$ of underlying work qubits
carried by $S$ at time $t$. Since $S$ is a site or an edge, $|W_t(S)|\le 2\le 3s+2$, so Lemma 4.8 of
Ref.~\cite{BG22} applies to the work-qubit set $W_t(S)$ and outputs a simulated work marginal
$\rho(x,t,W_t(S))$. Define
\[
\widehat X(x,t,S)
:=
\Pi^{\mathrm{mk}}_{t,S}\otimes \rho(x,t,W_t(S)).
\]

Fix a good witness in the YES case. The true HNN snapshot marginal on
$S$ factors as
\[
\Tr_{\overline S}\bigl(\ket{\Psi_t}\bra{\Psi_t}\bigr)
=
\Pi^{\mathrm{mk}}_{t,S}\otimes
\sigma(x,t,W_t(S)),
\]
where $\sigma(x,t,W_t(S))$ is the true reduced work state on $W_t(S)$,
with the convention that it is the scalar density matrix $1$ if
$W_t(S)=\emptyset$. By Lemma 4.8 of Ref.~\cite{BG22},
\[
\left\|
\rho(x,t,W_t(S))
-
\sigma(x,t,W_t(S))
\right\|_{\Tr}
\le
\negl(|x|).
\]
Tensoring with the fixed marker sector is an isometric embedding, and therefore
\[
\left\|
\widehat X(x,t,S)-\Tr_{\overline S}\bigl(\ket{\Psi_t}\bra{\Psi_t}\bigr)\right\|_{\Tr}\le\negl(|x|).
\]

The initialization and post-decoding output properties are exactly the corresponding properties of the BG22 simulated work marginals; inserting the fixed HNN marker sector does not change those one-qubit work marginals.
\end{proof}

\begin{lemma}[HNN marker intervals]\label{lem:HNN-marker-intervals}
For an HNN time $t$, let $m(t)$ denote the legal marker string, and for
a site or nearest-neighbor edge $S$ write $c_t(S):=m(t)|_S$.
For an outside marker pattern $\mu$ on $\overline S$, define
\[
I_\mu(S):=\{t\in\{0,\ldots,T\}:m(t)|_{\overline S}=\mu\}.
\]
Then the nonempty sets $I_\mu(S)$ are contiguous intervals. Moreover,
there is a constant $c_{\mathrm{HNN}}$ such that $|I_\mu(S)|\le c_{\mathrm{HNN}}$
for every site or nearest-neighbor edge $S$ and every outside pattern
$\mu$. Let $\mathcal I(S)$ denote the collection of all nonempty intervals
$I_\mu(S)$.
\end{lemma}
\begin{proof}[Proof sketch]
The reduced marker factor satisfies
\[
\Tr_{\overline S}\bigl(\ket{m(t)}\bra{m(t')}\bigr)
=
\delta_{m(t)|_{\overline S},\,m(t')|_{\overline S}}\,
\ket{c_t(S)}\bra{c_{t'}(S)} .
\]
Hence a cross term survives after tracing out $\overline S$ exactly when
$t$ and $t'$ have the same outside marker pattern. For the legal HNN timetable, the active update is local and moves through the line according to the fixed legal sequence. Since legal marker configurations are not repeated, a fixed outside pattern for a site or edge $S$ can occur only during the constant-size time window in which the
active update is inside the local neighborhood of $S$. Therefore each nonempty $I_\mu(S)$ is contiguous and has size bounded by a constant $c_{\mathrm{HNN}}$.
\end{proof}

\begin{lemma}[HNN interval simulator]
\label{lem:SimInt}
Fix the BG22 simulator parameter $s$ as in \Cref{lem:SimSnap}. Let
$\widetilde V_x^{(s)}$ be the HNN-embedded verification computation
constructed above. Let $S$ be either a site or a nearest-neighbor edge of
the HNN chain, and let $I=\{t_1,t_1+1,\ldots,t_2\}\in\mathcal I(S)$ be one of the marker intervals from \Cref{lem:HNN-marker-intervals}.

There is a deterministic polynomial-time procedure $\SimInt(x,I,S)$
which outputs the classical description of an $|S|$-qudit density matrix
\[
\widehat X(x,I,S)
=
\frac1{|I|}
\sum_{r,r'\in I}
\ket{c_r(S)}\bra{c_{r'}(S)}
\otimes
\rho_{r,r'}(x,I,S),
\]
where $c_r(S)=m(r)|_S$, and $\rho_{r,r'}(x,I,S)$ is the simulated
work-space block associated with the local marker sectors
$c_r(S)$ and $c_{r'}(S)$.

Moreover, if $x\in A_{\mathrm{yes}}$, then for any good witness
$\widetilde\psi^{(s)}$ accepted by $\widetilde V_x^{(s)}$ with
probability at least $1-\negl(|x|)$,
\[
\left\|
\widehat X(x,I,S)-\Tr_{\overline S}\bigl(\Phi_I^{\mathrm{HNN}}\bigr)
\right\|_{\Tr}
\le
\negl(|x|),
\]
where
\[
\Phi_I^{\mathrm{HNN}}
:=
\frac1{|I|}
\sum_{r,r'\in I}
\ket{\Psi_r}\bra{\Psi_{r'}}
=
\frac1{|I|}
\sum_{r,r'\in I}
\ket{m(r)}\bra{m(r')}
\otimes
\ket{w(r)}\bra{w(r')}.
\]
\end{lemma}

\begin{proof}
Let $I=\{t_1,t_1+1,\ldots,t_2\}$. For $r\in I$, write the time-$r$ HNN
snapshot as
\[
\ket{\Psi_r}
=
\ket{m(r)}_{\mathrm{mk}}\otimes\ket{w(r)}_{\mathrm{work}} .
\]

Since $I\in\mathcal I(S)$, the outside marker pattern
$m(r)|_{\overline S}$ is constant for all $r\in I$. Hence, within this interval, if a carried work qubit crossed the boundary of $S$, the outside marker pattern would change. We write $W(S)$ for the
set of underlying work qubits carried by $S$ throughout the interval.

Let $G$ be the set of work qubits touched by gates during the interval, and set
\[
Y:=W(S)\cup G .
\]
Since $S$ is a site or edge and $|I|\le c_{\mathrm{HNN}}$, the set $Y$ has constant size. Our choice of $s$ ensures that $|Y|\le 3s+2$, so the BG22 snapshot simulator applies to the work-qubit set $Y$ at time $t_1$. Let $\widetilde\rho(x,t_1,Y)$
be the simulated work marginal output by that simulator.

For $r,r'\in I$, define the simulated work-space block
\[
\rho_{r,r'}(x,I,S)
:=
\Tr_{Y\setminus W(S)}
\left(
\widetilde U_r\cdots \widetilde U_{t_1+1}\,
\widetilde\rho(x,t_1,Y)\,
\widetilde U_{t_1+1}^{\dagger}\cdots
\widetilde U_{r'}^{\dagger}
\right),
\]
with the convention that if $r=t_1$, then
$\widetilde U_r\cdots \widetilde U_{t_1+1}$ is the identity, and if
$r'=t_1$, then
$\widetilde U_{t_1+1}^{\dagger}\cdots \widetilde U_{r'}^{\dagger}$ is
the identity. The procedure $\SimInt(x,I,S)$ outputs
\[
\widehat X(x,I,S)
=
\frac1{|I|}
\sum_{r,r'\in I}
\ket{c_r(S)}\bra{c_{r'}(S)}
\otimes
\rho_{r,r'}(x,I,S).
\]
This is computable in polynomial time, since $|I|$, $|S|$, and $|Y|$ are
bounded by constants.

Now assume $x\in A_{\mathrm{yes}}$ and fix a good witness
$\widetilde\psi^{(s)}$. Let $\Delta^{(Y)}_{t_1,t_1}$ be the true reduced
work state on $Y$ at time $t_1$. By the BG22 snapshot simulation
guarantee,
\[
\left\|
\widetilde\rho(x,t_1,Y)
-
\Delta^{(Y)}_{t_1,t_1}
\right\|_{\Tr}
\le
\negl(|x|).
\]

By the same construction with the true reduced work state
$\Delta^{(Y)}_{t_1,t_1}$ in place of $\widetilde\rho(x,t_1,Y)$, we have
\[
\Tr_{\overline S}\bigl(\Phi_I^{\mathrm{HNN}}\bigr)
=
\frac1{|I|}
\sum_{r,r'\in I}
\ket{c_r(S)}\bra{c_{r'}(S)}
\otimes
\Tr_{Y\setminus W(S)}
\left(
\widetilde U_r\cdots \widetilde U_{t_1+1}\,
\Delta^{(Y)}_{t_1,t_1}\,
\widetilde U_{t_1+1}^{\dagger}\cdots
\widetilde U_{r'}^{\dagger}
\right).
\]
Indeed, since $I\in\mathcal I(S)$, the marker pattern outside $S$ is
constant on $I$, and hence
\[
\Tr_{\overline S}\bigl(\ket{m(r)}\bra{m(r')}\bigr)
=
\ket{c_r(S)}\bra{c_{r'}(S)}
\qquad
\text{for all } r,r'\in I .
\]

Comparing this expression with the definition of $\widehat X(x,I,S)$, the
only difference is the replacement of $\Delta^{(Y)}_{t_1,t_1}$ by
$\widetilde\rho(x,t_1,Y)$. As in the BG22 interval simulation argument,
the intervening operations are coherent evolution through the interval,
embedding into the local marker sectors, and tracing out registers outside $W(S)$; hence they do not increase trace norm. Therefore
\[
\left\|
\widehat X(x,I,S)
-
\Tr_{\overline S}\bigl(\Phi_I^{\mathrm{HNN}}\bigr)
\right\|_{\Tr}
\le
\left\|
\widetilde\rho(x,t_1,Y)
-
\Delta^{(Y)}_{t_1,t_1}
\right\|_{\Tr}
\le
\negl(|x|).
\]
\end{proof}
We next record the local energy contributions of the interval targets.

\paragraph{Initialization terms.}
Write the initialization Hamiltonian as a sum of one-site marker-activated
penalties,
\[
H_{\mathrm{in}}=\sum_j h^{\mathrm{in}}_j,\;
\text{with}\;\;
h^{\mathrm{in}}_1
=
\big(\ket{Q}\bra{Q}\otimes\ket{1}\bra{1}\big)_1,\quad
h^{\mathrm{in}}_j
=
\big(\ket{Q'}\bra{Q'}\otimes\ket{1}\bra{1}\big)_j .
\]
Let $S_j=\{j\}$ be the support of $h^{\mathrm{in}}_j$. We claim that
$\Tr\left(h^{\mathrm{in}}_j\,\widehat X(x,I,S_j)\right)=0$
for every $j$ appearing in the above sum.

Indeed, the marker part of $h^{\mathrm{in}}_j$ projects onto the
corresponding initial marker sector. Hence, by marker orthogonality, all
blocks of $\widehat X(x,I,S_j)$ vanish against $h^{\mathrm{in}}_j$ except
possibly the block with the initial marker sector. On this surviving
block, the initialization property of \Cref{lem:SimSnap} gives that the
simulated ancilla work qubit is $\ket{0}\bra{0}$. Therefore the
$\ket{1}\bra{1}$ penalty has expectation zero. Thus $\Tr\left(h^{\mathrm{in}}_j\,\widehat X(x,I,S_j)\right)=0$.

\paragraph{Output term.}
Let $S_{\mathrm{out}}=\{j_{\mathrm{out}}\}$ be the one-site support of the
HNN output check, i.e. the final legal marker sector carrying the decoded
output qubit. In our notation the output Hamiltonian has the form
\[
H_{\mathrm{out}}:=
\big(\ket{Q}\bra{Q}\otimes\ket{0}\bra{0}\big)_{\,j_{\mathrm{out}}},
\]
where $\ket{Q}\bra{Q}$ projects onto the final output-checking marker sector.
This $1$-local, marker-activated projector penalizes output $0$ only at the final layer on the
designated active carrier; it is orthogonal (hence contributes $0$) on all other marker sectors.
Let $I\in\mathcal I(S_{\mathrm{out}})$. If $I$ does not contain the final
output-checking time, then the term is inactive by marker orthogonality. If
$I$ contains the final output-checking time, then by the output clause of
\Cref{lem:SimSnap} the simulated output block at $j_{\mathrm{out}}$ is the
accepting branch $\ket{1}\bra{1}$. Hence the local penalty
$\ket{0}\bra{0}$ has expectation zero. Therefore $\Tr\big(H_{\mathrm{out}}\widehat X(x,I,S_{\mathrm{out}})\big)=0$.

\paragraph{Penalty terms:} 
Let $\Sigma=\{M_1,M_2,M_3,M_4,Q,Q'\}$.
Partition the chain into blocks $B_k=\{2n(k-1)+1,\dots,\allowbreak2nk\}$ and, for each edge $e=(i,i+1)$, define its
\emph{location type} $\mathrm{L}(e)\in\{A,B,C,D,E\}$ from the block index and the in block position
(interior odd/even and the two block-end / between-block cases (see Table 5 in Ref.\ \cite{HNN13})).
For each location $L\in\{A,B,C,D,E\}$, let $L_L\subseteq \Sigma^2$ be the set of \emph{legal} adjacent marker
pairs that occur at edges of type $L$ in the timetable, and set $F_L:=\Sigma^2\setminus L_L$, the set of illegal pairs. The penalty Hamiltonian is the edge-local projector
\[
H_{\mathrm{pen}}
=\sum_eH_{\mathrm{pen},e}=
\sum_{e=(i,i+1)}\ \ \sum_{(a,b)\in F_{\mathrm{L}(e)}}
\bigl(\ket{a}\bra{a}\bigr)_i\ \otimes\ \bigl(\ket{b}\bra{b}\bigr)_{i+1}.
\]
(It acts only on marker registers; work qubits, if present, are ignored). Since every marker string $m(r)$ in the HNN timetable is legal, every edge marker pair $c_r(e)$ lies in the legal set $L_{\mathrm L(e)}$. Thus, for every forbidden pair $(a,b)\in F_{\mathrm L(e)}$ and all $r,r'\in I$, the projector $(\ket a\bra a)_i\otimes(\ket b\bra b)_{i+1}$
is orthogonal to the marker block
$\ket{c_r(e)}\bra{c_{r'}(e)}$. Hence $\Tr\big(H_{\mathrm{pen},e}\widehat X(x,I,e)\big)=0$.

The HNN penalty Hamiltonian also contains the boundary marker penalties
$H_{\mathrm{left}}$ and $H_{\mathrm{right}}$. These are one-site marker projectors which exclude symbols that cannot appear at the left and right boundaries of a legal HNN configuration. They are handled identically. Thus
\[
\Tr(H_{\mathrm{left}}\widehat X(x,I,\{1\}))=0,
\quad
\Tr(H_{\mathrm{right}}\widehat X(x,I,\{N\}))=0.
\]
\paragraph{Propagation terms.}
Fix a legal HNN transition $t\to t+1$, and let $e=(i,i+1)$ be the
active edge whose local marker/work pattern changes in this transition. The actual HNN propagation Hamiltonian is written as a sum over rules and
locations, rather than as a separate term for every pair $ (e,t)$. The notation below isolates the legal contribution associated with the transition $t\to t+1$.
Any other local propagation-rule component either corresponds to a marker
transition different from the one realized by this legal time step, or
involves a locally illegal marker pattern detected by the marker-penalty terms. These terms have zero expectation against the target intervals by marker orthogonality.

Define the marker partial isometry on $e$ by
\[
A_{e,t}:=\ket{c_{t+1}(e)}\bra{c_t(e)}.
\]
Thus
\[
A_{e,t}^{\dagger}A_{e,t}=\Pi^{\mathrm{mk}}_{t,e},
\quad
A_{e,t}A_{e,t}^{\dagger}=\Pi^{\mathrm{mk}}_{t+1,e}.
\]
Let $V_{e,t}:\mathsf W_e(t)\longrightarrow \mathsf W_e(t+1)$ be the work-space map induced by this legal HNN transition, under the
canonical identification of the legal marker sectors with the underlying work qubits. Depending on the transition, $V_{e,t}$ is the identity, the prescribed nearest-neighbor gate, a SWAP, or the corresponding move
of the carried work registers. We regard $V_{e,t}$ as an isometry
between the relevant legal work spaces, so $V_{e,t}^{\dagger}V_{e,t}=I_{\mathsf W_e(t)}$.
For bookkeeping, write the legal off-diagonal transition piece as
\[
T_{e,t}:=
A_{e,t}\otimes V_{e,t} + A_{e,t}^{\dagger}\otimes V_{e,t}^{\dagger}.
\]

We also write the two legal diagonal marker checks associated with this
transition as
\[
G^{\mathrm{pre}}_{e,t}
:=
\Pi^{\mathrm{mk}}_{t,S^{\mathrm{pre}}_{e,t}},
\quad
G^{\mathrm{post}}_{e,t}
:=
\Pi^{\mathrm{mk}}_{t+1,S^{\mathrm{post}}_{e,t}},
\]
where $S^{\mathrm{pre}}_{e,t},S^{\mathrm{post}}_{e,t}
\in\bigl\{(i-1,i),(i,i+1),(i+1,i+2)\bigr\}$
are the HNN local supports on which the corresponding pre- and
post-transition marker patterns are checked. Thus the legal grouped
contribution associated with the transition $t\to t+1$ has the form
\[
G^{\mathrm{pre}}_{e,t}
+
G^{\mathrm{post}}_{e,t}
-
T_{e,t}.
\]
The cancellation of this grouped contribution is not claimed at the
level of a single interval. It will be obtained in
\Cref{lem:SimGlobal} after averaging the interval targets with weights $|I|/(T+1)$.

Let
\[
I_{\mathrm{pre}}\in\mathcal I(S^{\mathrm{pre}}_{e,t}),
\quad
I_{\mathrm{post}}\in\mathcal I(S^{\mathrm{post}}_{e,t}),
\quad
I_e\in\mathcal I(e).
\]
For the pre-projector, we have
\[
\Tr\left(
G^{\mathrm{pre}}_{e,t}
\widehat X(x,I_{\mathrm{pre}},S^{\mathrm{pre}}_{e,t})
\right)
=
\frac{1}{|I_{\mathrm{pre}}|}
\sum_{r,r'\in I_{\mathrm{pre}}}
\bra{c_{r'}(S^{\mathrm{pre}}_{e,t})}
\Pi^{\mathrm{mk}}_{t,S^{\mathrm{pre}}_{e,t}}
\ket{c_r(S^{\mathrm{pre}}_{e,t})}
\Tr\left(
\rho_{r,r'}(x,I_{\mathrm{pre}},S^{\mathrm{pre}}_{e,t})
\right).
\]
If \(t\in I_{\mathrm{pre}}\), the only nonzero marker contribution is
$r=r'=t$, and the corresponding diagonal work block has trace $1$.
If \(t\notin I_{\mathrm{pre}}\), every marker block is orthogonal to the
projector. Hence
\[
\Tr\left(
G^{\mathrm{pre}}_{e,t}
\widehat X(x,I_{\mathrm{pre}},S^{\mathrm{pre}}_{e,t})
\right)
=
\begin{cases}
\dfrac{1}{|I_{\mathrm{pre}}|}, & t\in I_{\mathrm{pre}},\\[2mm]
0, & t\notin I_{\mathrm{pre}}.
\end{cases}
\]

Similarly for the post-projector,
\[
\Tr\left(
G^{\mathrm{post}}_{e,t}
\widehat X(x,I_{\mathrm{post}},S^{\mathrm{post}}_{e,t})
\right)
=
\frac{1}{|I_{\mathrm{post}}|}
\sum_{r,r'\in I_{\mathrm{post}}}
\bra{c_{r'}(S^{\mathrm{post}}_{e,t})}
\Pi^{\mathrm{mk}}_{t+1,S^{\mathrm{post}}_{e,t}}
\ket{c_r(S^{\mathrm{post}}_{e,t})}
\Tr\left(
\rho_{r,r'}(x,I_{\mathrm{post}},S^{\mathrm{post}}_{e,t})
\right),
\]
and therefore
\[
\Tr\left(
G^{\mathrm{post}}_{e,t}
\widehat X(x,I_{\mathrm{post}},S^{\mathrm{post}}_{e,t})
\right)
=
\begin{cases}
\dfrac{1}{|I_{\mathrm{post}}|}, & t+1\in I_{\mathrm{post}},\\[2mm]
0, & t+1\notin I_{\mathrm{post}}.
\end{cases}
\]

It remains to record the contribution of the legal off-diagonal piece
$T_{e,t}$ on the active edge $e$. If
$\{t,t+1\}\subseteq I_e$, then by marker orthogonality only the
$(t,t+1)$ and $(t+1,t)$ blocks contribute, and
\[
\Tr\left(
T_{e,t}\widehat X(x,I_e,e)
\right)
=
\frac{1}{|I_e|}
\left(
\Tr\left(V_{e,t}\rho_{t,t+1}(x,I_e,e)\right)
+
\Tr\left(V_{e,t}^{\dagger}\rho_{t+1,t}(x,I_e,e)\right)
\right).
\]
By construction of the interval blocks through the known one-step
evolution on $e$,
\[
\rho_{t+1,t}(x,I_e,e)
=
V_{e,t}\rho_{t,t}(x,I_e,e),
\qquad
\rho_{t,t+1}(x,I_e,e)
=
\rho_{t,t}(x,I_e,e)V_{e,t}^{\dagger}.
\]
Using cyclicity of trace and $V_{e,t}^{\dagger}V_{e,t}=I$, we get
\[
\Tr\left(V_{e,t}\rho_{t,t+1}(x,I_e,e)\right)
=
\Tr\left(
V_{e,t}\rho_{t,t}(x,I_e,e)V_{e,t}^{\dagger}
\right)
=
\Tr\left(\rho_{t,t}(x,I_e,e)\right)
=
1,
\]
and
\[
\Tr\left(V_{e,t}^{\dagger}\rho_{t+1,t}(x,I_e,e)\right)
=
\Tr\left(
V_{e,t}^{\dagger}V_{e,t}\rho_{t,t}(x,I_e,e)
\right)
=\Tr\left(\rho_{t,t}(x,I_e,e)\right)=1.
\]
Therefore
\[
\Tr\left(
T_{e,t}\widehat X(x,I_e,e)
\right)=\frac{2}{|I_e|}
\quad
\text{if } \{t,t+1\}\subseteq I_e.
\]
If $\{t,t+1\}\cap I_e=\varnothing$, then the expectation is $0$ by
marker orthogonality. For the active edge $e$, the mixed case
$|I_e\cap\{t,t+1\}|=1$ cannot occur: the legal transition
$t\to t+1$ changes only the markers on $e$, so $t$ and $t+1$
have the same outside-$e$ marker pattern and hence lie in the same
maximal interval of $\mathcal I(e)$.

We are now ready to prove that $\widetilde V_x^{(s)}$ is simulatable and that the simulations have low-energy with respect to the local terms of the circuit-to-Hamiltonian construction:

\begin{lemma}[Analogous to Lemma 3.5 of Ref.\ \cite{BG22}]
\label{lem:SimGlobal}
Fix the BG22 simulator parameter $s\ge 5$ as in \Cref{lem:SimSnap}.
For any promise problem $A=(A_{\mathrm{yes}},A_{\mathrm{no}})\in\QMA$,
there is a uniform family of HNN-embedded verification computations
$\widetilde V_x^{(s)}$ on a line of $N=\poly(|x|)$ eight-dimensional
qudits, obtained from the BG22-compiled verifier, with the following
property.

There is a deterministic polynomial-time algorithm $\SimHNN(x,S)$ which,
on input $x$ and a support $S\in\bigl\{\{i\}:1\le i\le N\bigr\}
\cup
\bigl\{\{i,i+1\}:1\le i<N\bigr\}$,
outputs the classical description of an $|S|$-qudit density matrix
$\widehat X(x,S)$, with the following properties

\begin{enumerate}
\item If $x\in A_{\mathrm{yes}}$, then there exists a good witness
$\widetilde\psi^{(s)}$ accepted by $\widetilde V_x^{(s)}$ with
probability at least $1-\negl(|x|)$ such that, for every site or edge
support $S$, $\left\|
\widehat X(x,S)
-\Tr_{\overline S}\bigl(\Phi^{\mathrm{HNN}}\bigr)
\right\|_{\Tr}
\le
\negl(|x|)$,
where
\[
\Phi^{\mathrm{HNN}}:=\frac{1}{T+1}\sum_{t,t'=0}^{T}
\ket{\Psi_t}\bra{\Psi_{t'}}
\]
is the HNN history-state density operator for that witness, with
$\ket{\Psi_t}=\ket{m(t)}\otimes\ket{w(t)}$.

\item For every $x$, the HNN local energy contributions are satisfied. Namely, for every non-propagation local term $h$ of the HNN
Hamiltonian, supported on $S_h$, $\Tr\bigl(h\,\widehat X(x,S_h)\bigr)=0$.
Moreover, for every legal HNN transition $t\to t+1$ with active edge $e$, the corresponding grouped legal propagation contribution satisfies $\Tr\bigl(G^{\mathrm{pre}}_{e,t}\widehat X(x,S^{\mathrm{pre}}_{e,t})\bigr)
+\Tr\bigl(G^{\mathrm{post}}_{e,t}\widehat X(x,S^{\mathrm{post}}_{e,t})\bigr)
-\Tr\bigl(T_{e,t}\widehat X(x,e)\bigr)=0$.
\end{enumerate}
\end{lemma}

\begin{proof}
\emph{Construction of $\SimHNN(x,S)$.}
Trace out all marker registers on $\overline S$. For any $t,t'$ the marker factor reduces as
\[
\Tr_{\overline S}\big(\ket{m(t)}\bra{m(t')}\big)
=\Big(\bigotimes_{j\in S}\ket{m_j(t)}\bra{m_j(t')}\Big)\,
\prod_{j\in\overline S}\braket{m_j(t')|m_j(t)}
=\delta_{m(t)\mid_{\overline S},m(t')\mid_{\overline S}}\ket{c_t(S)}\bra{c_{t'}(S)}.
\]
Thus cross-terms vanish unless the outside-$S$ marker pattern agrees at $t$ and $t'$. By \Cref{lem:HNN-marker-intervals}, the times with a fixed outside-$S$
marker pattern form the intervals $\mathcal I(S)=\{I_1,\ldots,I_q\}$,
with $|I_a|\le c_{\mathrm{HNN}}$ for every $a$.

For each $I\in\mathcal I(S)$, run $\SimInt(x,I,S)$ to obtain $\widehat X(x,I,S)$ and output
\[
\widehat X(x,S):=\sum_{I\in\mathcal I(S)} \frac{|I|}{T+1}\widehat X(x,I,S).
\]
This takes deterministic polynomial time. By linearity of partial trace and the partition above,
\[
\Tr_{\overline S}\big(\Phi^{\mathrm{HNN}}\big)
=\sum_{I\in\mathcal I(S)} \frac{|I|}{T+1}\ \Tr_{\overline S}\big(\Phi^{\mathrm{HNN}}_I\big),
\quad
\Phi^{\mathrm{HNN}}_I:=\frac{1}{|I|}\sum_{t,t'\in I}\ket{m(t)}\bra{m(t')}\otimes\ket{w(t)}\bra{w(t')}.
\]
By \Cref{lem:SimInt}, for each $I$ (YES case)
$\big\|\widehat X(x,I,S)-\Tr_{\overline S}(\Phi^{\mathrm{HNN}}_I)\,\big\|_{\Tr}\le\negl(|x|)$.
Averaging with weights $|I|/(T+1)$ and applying the triangle inequality gives
$\big\|\widehat X(x,S)-\Tr_{\overline S}(\Phi^{\mathrm{HNN}})\,\big\|_{\Tr}\le\negl(|x|)$.\\
\\
The second property follows by averaging the local energy contributions of the intervals from before. For non-propagation terms these contributions are $0$ termwise. For propagation, fix a legal HNN transition $t\to t+1$ with active edge
$e$. The corresponding legal grouped contribution has the form
\[
G^{\mathrm{pre}}_{e,t}
+
G^{\mathrm{post}}_{e,t}
-
T_{e,t},
\quad
T_{e,t}:=
A_{e,t}\otimes V_{e,t}+A_{e,t}^{\dagger}\otimes V_{e,t}^{\dagger}.
\]
In the global target, the pre-projector contribution is obtained by
summing over \(\mathcal I(S^{\mathrm{pre}}_{e,t})\):
\[
\sum_{I\in\mathcal I(S^{\mathrm{pre}}_{e,t})}
\frac{|I|}{T+1}
\Tr\left(
G^{\mathrm{pre}}_{e,t}\widehat X(x,I,S^{\mathrm{pre}}_{e,t})\right)
=
\frac1{T+1},
\]
because exactly one interval in \(\mathcal I(S^{\mathrm{pre}}_{e,t})\)
contains \(t\). Similarly, the post-projector contribution is
\[
\sum_{I\in\mathcal I(S^{\mathrm{post}}_{e,t})}
\frac{|I|}{T+1}
\Tr\left(
G^{\mathrm{post}}_{e,t}\widehat X(x,I,S^{\mathrm{post}}_{e,t})
\right)=\frac1{T+1},
\]
because exactly one interval in \(\mathcal I(S^{\mathrm{post}}_{e,t})\)
contains \(t+1\). Finally, the transition contribution is
\[
\sum_{I\in\mathcal I(e)}
\frac{|I|}{T+1}
\Tr\left(
T_{e,t}\widehat X(x,I,e)
\right)=\frac2{T+1},
\]
because the unique interval in $\mathcal I(e)$ containing $t$ also
contains $t+1$. Hence the grouped propagation contribution is
\[
\frac1{T+1}+\frac1{T+1}-\frac2{T+1}=0.
\]
\end{proof}

\paragraph{Proof of \cref{Thm:iDCLDM=QMA}}

Since the containment is straightforward we only show hardness explicitly. Let $A=(A_{\cfont{yes}},A_{\cfont{no}})\in\QMA$. Amplify the original
verifier so that the completeness error is negligible, and apply the
BG22/HNN construction from \Cref{lem:SimGlobal}. On input $x$, the reduction computes the local targets $\widehat X(x,S)$
for all HNN supports $S$ which are sites or nearest-neighbor edges. The
reduction outputs the site and edge marginals
\[
\left\{
(\{i\},\widehat X(x,\{i\})) : 1\le i\le N
\right\}
\cup
\left\{
(\{i,i+1\},\widehat X(x,\{i,i+1\})) : 1\le i<N
\right\}.
\]

Let $H_{\mathrm{HNN}}(x)=\sum_{\ell=1}^{L} h_\ell$ be the full weighted HNN Hamiltonian for $\widetilde V_x^{(s)}$,
expanded into its one-site and nearest-neighbor two-site local pieces.
For propagation, we keep the grouping convention from
\Cref{lem:SimInt}: the pre-projector, post-projector, and transition
pieces belonging to the same HNN transition check are evaluated together.
Define $W:=\sum_{\ell=1}^{L}\|h_\ell\|_\infty$. Since the HNN Hamiltonian has polynomially many local terms with
polynomially bounded weights, $W=\poly(|x|)$. Let $\Delta_{\mathrm{HNN}}\ge \frac1{\poly(|x|)}$ denote the HNN NO-instance lower bound on the ground energy.

Let $\eta=\negl(|x|)$ be the local simulation error from
\Cref{lem:SimGlobal}. By amplification and by choosing the simulation precision sufficiently high, assume $\eta\le \frac{\Delta_{\mathrm{HNN}}}{16W}$. Set the $\iDCLDM$ thresholds to be $\alpha:=2\eta,
\; \beta:=\frac{\Delta_{\mathrm{HNN}}}{4W}$. Then $\beta-\alpha\ge 1/\poly(|x|)$.

If $x\in A_{\cfont{yes}}$, then by \Cref{lem:SimGlobal} there is an HNN history state $\Phi^{\mathrm{HNN}}$ such that for every relevant support
$S$, $\left\|\Tr_{\overline S}\left(\Phi^{\mathrm{HNN}}\right)
-\widehat X(x,S)\right\|_{\Tr}\le \eta\le \alpha$.
Thus the produced local-density-matrix instance is a YES instance.

Now suppose $x\in A_{\cfont{no}}$. Assume for contradiction that there
exists a global state $\tau$ whose relevant local marginals are all
$\beta$-close to the targets: $\left\|
\Tr_{\overline S}(\tau)-\widehat X(x,S)
\right\|_{\Tr}
<\beta$.
By \Cref{lem:SimGlobal}, the target marginals have zero combined HNN
energy: the initialization, output, penalty, and boundary terms vanish
termwise, and each legal propagation contribution vanishes in the grouped sense we discussed.
Therefore,
\[
\begin{aligned}
\Tr\left(H_{\mathrm{HNN}}(x)\tau\right)
&=\sum_{\ell=1}^{L}
\Tr\left(
h_\ell\,\Tr_{\overline{S_\ell}}(\tau)
\right)=\sum_{\ell=1}^{L}\Tr\left(
h_\ell\bigl(\Tr_{\overline{S_\ell}}(\tau)-\widehat X(x,S_\ell)\bigr)
\right)\\
&\le\sum_{\ell=1}^{L}\|h_\ell\|_\infty\left\|\Tr_{\overline{S_\ell}}(\tau)-\widehat X(x,S_\ell)
\right\|_{\Tr}\\
&< W\beta =\frac{\Delta_{\mathrm{HNN}}}{4}<
\Delta_{\mathrm{HNN}}.
\end{aligned}
\]
Here $S_\ell$ denotes the site or edge support of $h_\ell$.

This contradicts the HNN soundness lower bound $\lambda_{\min}(H_{\mathrm{HNN}}(x))\ge \Delta_{\mathrm{HNN}}$.
Therefore, for every global state $\tau$, at least one relevant local marginal violates its target by trace distance at least $\beta$. Hence
the produced $\iDCLDM$ instance is a NO instance. This concludes our proof.

\end{document}